\documentclass[11pt]{article}
\usepackage[margin=1in,letterpaper]{geometry}
\usepackage[T1]{fontenc}
\usepackage[utf8]{inputenc}
\usepackage{lmodern}
\usepackage{titling}
\usepackage{natbib}
\setcitestyle{authoryear,round,citesep={;},aysep={,},yysep={;}}
\usepackage{url}
\usepackage{aliascnt}
\usepackage{float}
\usepackage{amssymb}
\usepackage{amsmath}
\usepackage{mathbbol} \DeclareSymbolFontAlphabet{\mathbb}{AMSb}
\DeclareSymbolFontAlphabet{\mathbbm}{bbold}
\usepackage{mathtools}
\usepackage{amsthm}
\usepackage{bm}
\usepackage{enumitem}
\usepackage{xspace}
\usepackage{etoolbox}

\pretocmd{\maketitle}{\hypersetup{pageanchor=false}\begin{titlingpage}}{}{}
\AfterEndEnvironment{abstract}{\end{titlingpage}\hypersetup{pageanchor=true}}
\usepackage{xcolor}
\usepackage{tikz}
\usepackage{algorithm}
\usepackage[noend]{algpseudocode}
\usepackage{needspace}
\usepackage[normalem]{ulem}

\usepackage[hidelinks]{hyperref}
\usepackage[nameinlink,capitalize]{cleveref}
\creflabelformat{equation}{(#2#1#3)}
\crefformat{equation}{(#2#1#3)}
\Crefformat{equation}{(#2#1#3)}
\crefrangeformat{equation}{(#3#1#4) to (#5#2#6)}
\Crefrangeformat{equation}{(#3#1#4) to (#5#2#6)}
\crefmultiformat{equation}{(#2#1#3)}{ and (#2#1#3)}{, (#2#1#3)}{, and (#2#1#3)}
\Crefmultiformat{equation}{(#2#1#3)}{ and (#2#1#3)}{, (#2#1#3)}{, and (#2#1#3)}
\crefname{algorithm}{algorithm}{algorithms}
\Crefname{algorithm}{Algorithm}{Algorithms}

\newtheorem{theorem}{Theorem}[section]

\newaliascnt{lemma}{theorem}
\newtheorem{lemma}[lemma]{Lemma}
\aliascntresetthe{lemma}
\crefname{lemma}{lemma}{lemmas}
\Crefname{lemma}{Lemma}{Lemmas}

\newaliascnt{definition}{theorem}
\newtheorem{definition}[definition]{Definition}
\aliascntresetthe{definition}
\crefname{definition}{definition}{definitions}
\Crefname{definition}{Definition}{Definitions}

\newaliascnt{proposition}{theorem}
\newtheorem{proposition}[proposition]{Proposition}
\aliascntresetthe{proposition}
\crefname{proposition}{proposition}{propositions}
\Crefname{proposition}{Proposition}{Propositions}

\newaliascnt{corollary}{theorem}
\newtheorem{corollary}[corollary]{Corollary}
\aliascntresetthe{corollary}
\crefname{corollary}{corollary}{corollaries}
\Crefname{corollary}{Corollary}{Corollaries}

\newaliascnt{claim}{theorem}

\aliascntresetthe{claim}
\crefname{claim}{claim}{claims}
\Crefname{claim}{Claim}{Claims}
\newtheorem*{claim*}{Claim}

\theoremstyle{definition}
\newaliascnt{construction}{theorem}
\newtheorem{construction}[construction]{Construction}
\aliascntresetthe{construction}
\crefname{construction}{construction}{constructions}
\Crefname{construction}{Construction}{Constructions}

\newaliascnt{remark}{theorem}

\aliascntresetthe{remark}
\crefname{remark}{remark}{remarks}
\Crefname{remark}{Remark}{Remarks}
\newtheorem*{remark*}{Remark}

\theoremstyle{definition}

\newcommand{\defemph}[1]{\emph{#1}}

\renewcommand{\enspace}{\ensuremath{\thinspace}}

\newcommand{\supp}{\operatorname{supp}}
\newcommand{\prune}{\operatorname{prune}}

\newcommand{\one}{\mathbbm{1}}

\newcommand{\Local}{\mathsf{LocalDense}}
\newcommand{\Search}{\mathsf{Search}}

\newcommand{\Peel}{\mathsf{Peel}}

\algrenewcommand{\algorithmicrequire}{\textbf{Input:}}
\algrenewcommand{\algorithmicensure}{\textbf{Output:}}
\algnewcommand{\Parameters}{\item[\textbf{Parameters:}]}
\algrenewcommand{\alglinenumber}[1]{\footnotesize #1:}
\algrenewcommand{\algorithmicindent}{1.2em}
\makeatletter
\def\theHALG@line{\thealgorithm.\arabic{ALG@line}}

\makeatother

\begin{document}

\title{Constant-Factor Local Approximation\\for Bipartite Densest Subgraph}

\author{
Michael Kapralov \\ EPFL
\and Jakub \L\k{a}cki \\ Google
\and Silvio Lattanzi \\ Google
\and Guy Weissenberg \\ EPFL
}
\date{}

\maketitle

\begin{abstract}
Given a vertex $v$ in a large bipartite graph $G$, can one quickly find a small densely connected region near $v$? This is the algorithmic core of many fraud detection, community detection, and recommender systems. One would like a {\em local} algorithm, i.e., an algorithm that probes $G$ in only a small number of locations to find such a region.
Our main contribution is the first constant-factor local approximation, independent of the maximum degree of the graph, for the classical notion of bipartite density: if the sides of a bipartite subgraph are $S$ and $T$, its density is $d(S,T)\coloneq e(S,T)/\sqrt{|S||T|}$, where $e(S,T)$ is the number of edges between $S$ and $T$. We consider \emph{cohesive} dense subgraphs, since every dense subgraph contains a cohesive core of at least the same density. For such a cohesive dense subgraph $S\mathbin{\dot\cup}T$ with at most $k$ vertices per side, our algorithm, starting from any vertex in $S\mathbin{\dot\cup}T$, uses $O_\epsilon(k^{4+\epsilon})$ neighbor queries to find a subgraph $X\mathbin{\dot\cup}Y$ of density $\Omega_\epsilon(d(S,T))$ on $O_\epsilon(k^{2+\epsilon})$ vertices, for every fixed $\epsilon>0$. We improve upon the result of~\cite{Andersen2010}, whose approximation guarantee depends logarithmically on the maximum degree. We complement our positive result with information-theoretic and computational limitations: the quadratic dependence on $k$ in the output size is essentially unavoidable for constant-factor approximation.
Our main technical innovation is an adaptation of the Lov\'asz--Simonovits curve to the bipartite density problem, based on weak-$\ell_p$ norm bounds.
\end{abstract}

\section{Introduction}
\label{sec:introduction}

Identifying dense bipartite subgraphs is a fundamental primitive across machine 
learning, network science, and web data mining. Bipartite graphs model interactions between two kinds of entities: accounts and products 
in e-commerce, users and multimedia content on social platforms, authors and venues in 
bibliometrics, or queries and destination URLs in web search 
\citep{dhillon2001co}. In fraud detection and cybersecurity, researchers search for dense
bipartite subgraphs to detect botnets, astroturfing campaigns, and lockstep review
manipulation, where malicious actors coordinate attacks against targeted subsets 
of resources \citep{BeutelEtAl2013,JiangEtAl2014,HooiEtAl2016,ShinHooiFaloutsos2016}.
In web mining, dense subgraphs correspond to link-spam farms \citep{gibson2005discovering}
and authoritative topic hubs \citep{kleinberg1999authoritative}. 
In computational biology, dense subgraphs are used to identify 
regulatory motifs in genomic DNA \citep{fratkin2006motifcut}, analyze drug--target 
interaction graphs, and uncover multi-functional complexes in gene annotation 
networks \citep{saha2010dense}.

To quantify how tightly two sets of vertices are connected, we use the bipartite density objective \citep{KannanVinay1999,Andersen2010}. Throughout, $G=(V,E)$ is a finite simple undirected bipartite graph with a fixed bipartition $V=L\mathbin{\dot\cup}R$ and maximum degree $\Delta$. For nonempty vertex sets $S\subseteq L$ and $T\subseteq R$ with $e(S,T)$ edges between them, their density is
\begin{equation*}
d(S,T)\coloneq\frac{e(S,T)}{\sqrt{|S||T|}}=\sqrt{\frac{e(S,T)}{|S|}\cdot\frac{e(S,T)}{|T|}},
\end{equation*}
the geometric mean of the average degree from $S$ into $T$ and the average degree from $T$ into $S$.

The maximum of $d(S,T)$ over all nonempty sets $S,T$ on opposite sides can be computed in polynomial time by linear
programming \citep{Charikar2000}. However, on modern web-scale networks---frequently
spanning hundreds of millions of vertices and billions of edges \citep{QinEtAl2015}---such
global algorithms are computationally prohibitive, as their running time and memory 
footprint scale with the size of the entire graph. We therefore study the local formulation
\citep{Andersen2010}: given a starting vertex $v$ and a target size parameter $k$,
find a dense bipartite subgraph in the vicinity of $v$ using a number of queries and running time that depend solely on $k$, completely independent of the graph size $|V|$ and maximum degree $\Delta$.

Classical local clustering algorithms for low-conductance sets
\citep{SpielmanTeng2008,AndersenChungLang2006,AllenZhuLattanziMirrokni2013} seek sets with few boundary edges relative to their total degree. In contrast, $d(S,T)$ is independent of edges leaving $S\mathbin{\dot\cup}T$. Dense bipartite subgraphs in interaction graphs often have many edges to the rest of the graph---for example, when fraudulent accounts add large numbers of legitimate interactions as ``camouflage'' \citep{HooiEtAl2016}. With these camouflage edges, the cut conductance of $S\mathbin{\dot\cup}T$ can approach $1$. Standard random-walk and diffusion methods can then quickly leave the target, so we need local algorithms designed specifically for the bipartite density objective (Figure~\ref{fig:two-structures}).

\begin{figure}[t]
\centering
\begin{tikzpicture}[
  scale=0.92,transform shape,
  x=1cm,y=1cm,
  lv/.style={circle,draw=blue!60!black,fill=blue!60!black,inner sep=0pt,minimum size=5.5pt},
  rv/.style={circle,draw=orange!85!black,fill=orange!85!black,inner sep=0pt,minimum size=5.5pt},
  ov/.style={circle,draw=gray!70,fill=gray!70,inner sep=0pt,minimum size=3pt},
  ie/.style={draw=black!85,line width=0.9pt},
  oe/.style={draw=gray!60,line width=0.4pt},
  lab/.style={font=\small}
]
  \foreach \i in {1,2,3} {
    \node[lv] (s\i) at (0,{1.0*(\i-2)}) {};
    \node[rv] (t\i) at (2.0,{1.0*(\i-2)}) {};
  }
  \foreach \i in {1,2,3} {
    \foreach \j in {1,2,4,5} {
      \node[ov] (ls\i\j) at (-1.7,{1.0*(\i-2)+0.2*(\j-3)}) {};
      \draw[oe] (s\i) -- (ls\i\j);
      \node[ov] (rt\i\j) at (3.7,{1.0*(\i-2)+0.2*(\j-3)}) {};
      \draw[oe] (t\i) -- (rt\i\j);
    }
    \foreach \offset in {-0.07,0,0.07} {
      \fill[gray!70] (-1.7,{1.0*(\i-2)+\offset}) circle (0.4pt);
      \fill[gray!70] (3.7,{1.0*(\i-2)+\offset}) circle (0.4pt);
    }
  }
  \foreach \i in {1,2,3} \foreach \j in {1,2,3} \draw[ie] (s\i) -- (t\j);
  \node[lab] at (0,1.55) {$S$};
  \node[lab] at (2.0,1.55) {$T$};
  \node[lab,gray!80!black] at (-1.7,1.55) {\footnotesize $D$ leaves each};
  \node[lab,gray!80!black] at (3.7,1.55) {\footnotesize $D$ leaves each};
\end{tikzpicture}
\caption{A complete bipartite subgraph $(S,T)$ with $k$ vertices in $S$ (blue) and $k$ in $T$ (orange), each with $D\ge1$ leaves (gray) outside $S\mathbin{\dot\cup}T$. Here $d(S,T)=k$ for every $D$. Each vertex of $S\mathbin{\dot\cup}T$ has $D$ of its $k+D$ edges leaving $S\mathbin{\dot\cup}T$, so the conductance of $S\mathbin{\dot\cup}T$ is at least $D/(k+D)$, which tends to one as $D/k\to\infty$. Every star in this graph has density at most $\sqrt{k+D}$, which is $o(k)$ for $D=o(k^2)$.}
\label{fig:two-structures}
\end{figure}

More formally, the local algorithm receives a starting vertex $v$, an integer $k\ge1$, and adjacency-list access to $G$, as in \citet{Andersen2010}. We count neighbor queries: we read a specified entry of an adjacency list, or receive $\bot$ past its end. 
The guarantee is measured against unknown cohesive sets $S\subseteq L$, $T\subseteq R$ (\Cref{def:cohesive}) with at most $k$ vertices on each side. We require the algorithm to succeed from every starting vertex in $S\mathbin{\dot\cup}T$, using a number of queries independent of $|V|$.
An approximation ratio $\alpha\ge1$ means returning nonempty sets $X\subseteq L$, $Y\subseteq R$ with $d(X,Y)\ge d(S,T)/\alpha$. For a single vertex, we write $e(u,T)\coloneq e(\{u\},T)$ and $e(S,w)\coloneq e(S,\{w\})$.

We ask for approximation, not recovery: the union of the returned sets need not contain the starting vertex or intersect $S\mathbin{\dot\cup}T$ (after adding $v$ to the returned set on its side, the density is at least $d(X,Y)/\sqrt2$; see \Cref{lem:distance}). A star with at least $d(S,T)^2$ leaves has density at least $d(S,T)$
and is therefore an acceptable output.

Andersen's algorithm can be implemented using $O(\Delta k^2)$ neighbor queries and returns sets $X,Y$ with $O(\Delta k^2)$ vertices each (this follows from Andersen's analysis \citep[proof of Theorem~2]{Andersen2010}; see also \Cref{app:andersen-queries}). Its approximation ratio is bounded by $O(1+\log_2(\Delta/d(S,T)))$ \citep[Theorem~1]{Andersen2010}. This bound is constant only when $d(S,T)=\Omega(\Delta)$. For the complete bipartite subgraph of \Cref{fig:two-structures} with $D$ camouflage edges per vertex, the lower bound on the returned density is $\Omega(k/(1+\log_2((k+D)/k)))$, which decreases as $D$ grows even though $d(S,T)=k$. The density is invariant under the addition of edges outside the subgraph induced by $S\mathbin{\dot\cup}T$, and the guarantee is not. Whether this dependence on the maximum degree is necessary for local approximation was open.

\subsection{Our results}
\label{sec:results}

Our main result is a constant-factor local approximation algorithm for the bipartite densest subgraph problem. We cannot guarantee success from every vertex of $S\mathbin{\dot\cup}T$ for arbitrary target sets $S,T$, since a vertex in $S\mathbin{\dot\cup}T$ may have few or no edges into the opposite target set. We therefore require every target vertex to have at least half the average degree of its side within the target subgraph.

\begin{definition}[Cohesive sets]
\label{def:cohesive}
Nonempty sets $S\subseteq L$, $T\subseteq R$ with $e(S,T)>0$ are \defemph{cohesive} if $e(u,T)\ge\frac{e(S,T)}{2|S|}$ for all $u\in S$, and $e(S,w)\ge\frac{e(S,T)}{2|T|}$ for all $w\in T$.
\end{definition}

We use the term \emph{cohesiveness} to formalize the minimum-degree structure already implicit in Andersen's proof of his main theorem \citep[Theorem~1]{Andersen2010}. For any sets $S\subseteq L$, $T\subseteq R$ with $e(S,T)>0$, there are cohesive subsets $S^*\subseteq S$, $T^*\subseteq T$ of at least the same density, $d(S^*,T^*)\ge d(S,T)$ (\Cref{lem:cohesive}). For arbitrary $S,T$, our setting is exactly that of Andersen, whose algorithm succeeds from every vertex of subsets that are selected by a minimum-degree argument and contain at least half of the edges (see the discussion in \Cref{cor:arbitrary}).

\Needspace{10\baselineskip}
\begin{theorem}[Main result]
\label{thm:overview}
For every $\epsilon>0$, all cohesive sets $S\subseteq L$, $T\subseteq R$ with $|S|,|T|\le k$, and every starting vertex $v\in S\mathbin{\dot\cup}T$, the deterministic algorithm $\Local_{2+\epsilon}(G,v,k)$ (\Cref{alg:local}) returns nonempty sets $X\subseteq L$, $Y\subseteq R$ with the following guarantees:
\begin{enumerate}[label=(\roman*),nolistsep]
\item \emph{Locality and complexity:} The output $(X,Y)$ has $O_\epsilon(k^{2+\epsilon})$ vertices, and the algorithm makes $O_\epsilon(k^{2+\epsilon}\min\{\Delta\log_2^2(2k),k^2\})$ neighbor queries and runs in time $\widetilde O_\epsilon(k^{2+\epsilon}\min\{\Delta,k^2\})$.
\item \emph{Constant-factor approximation:} $d(X,Y)\ge\frac{\sqrt3}{16\sqrt{\mathrm e}}\frac{\epsilon}{1+\epsilon}\,d(S,T)$, independent of edges outside the subgraph induced by $S\mathbin{\dot\cup}T$.
\end{enumerate}
\end{theorem}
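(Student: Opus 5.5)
The plan is to analyze $\Local_{2+\epsilon}$ as a bipartite power-iteration/peeling procedure whose progress is tracked by a Lov\'asz--Simonovits-type curve, with weak-$\ell_p$ norms in place of the usual conductance potential. Let $d\coloneq d(S,T)$, $a\coloneq e(S,T)/|S|$ and $b\coloneq e(S,T)/|T|$, so that $d=\sqrt{ab}$. By cohesiveness (\Cref{def:cohesive}), every $u\in S$ has at least $a/2$ neighbors in $T$, and every $w\in T$ has at least $b/2$ neighbors in $S$.

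\textbf{Step 1: the mass vectors.} Starting from $\one_v$, the algorithm alternately applies the bipartite adjacency operator. In each round it keeps only the top $O_\epsilon(k^{2+\epsilon})$ entries, pruning the rest. It also tests the candidate sets given by the level sets $X_\theta=\{x:p_t(x)\ge\theta\}$, $Y_\theta=\{y:q_t(y)\ge\theta\}$. I would define
\[
\phi_t(s)=\sup_{\theta}\theta\,|\{x:p_t(x)\ge\theta\}|^{1/p}
\]
as a weak-$\ell_p$ curve with $p=2+\epsilon$. This plays the role of the Lov\'asz--Simonovits curve.

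\textbf{Step 2: lower bound on growth.} Because of cohesiveness, the mass that $p_t$ places on $S$ is pushed into $T$ with amplification at least $a/2$ per unit of normalized mass. Symmetrically, mass on $T$ is pushed back into $S$ with amplification at least $b/2$. Two steps therefore give a factor of about $ab/4=d^2/4$ on the restriction to $S$. Since $|S|\le k$, this restricted mass stays concentrated on at most $k$ vertices, so its weak-$\ell_p$ norm grows like $(d/2)^{2t}$ times a factor polynomial in $k$.

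\textbf{Step 3: upper bound when no dense set exists.} Suppose every level-set pair tested has density below $c\,d$, where $c=\frac{\sqrt3}{16\sqrt{\mathrm e}}\frac{\epsilon}{1+\epsilon}$. Then I would show that the weak-$\ell_p$ quasinorm grows by at most a factor of about $(c\,d)^2$ per two steps. The argument writes $\langle \one_{X_\theta},A\one_{Y_{\theta'}}\rangle=e(X_\theta,Y_{\theta'})<c\,d\sqrt{|X_\theta||Y_{\theta'}|}$ and expresses $\langle p,Aq\rangle$ as a double layer-cake integral over $\theta,\theta'$. I would then bound the resulting integral $\int\!\!\int \sqrt{|X_\theta||Y_{\theta'}|}\,d\theta\,d\theta'$ by weak-$\ell_p$ norms. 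This is where $p>2$ is essential: $\int\theta^{-p/2}d\theta$ converges only with a loss of $1/(p/2-1)\sim 1/\epsilon$, which explains the $\epsilon/(1+\epsilon)$ factor. The constant $\sqrt{\mathrm e}$ would come from optimizing the geometric grid of thresholds.

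\textbf{Step 4: compare and count.} After $t=O_\epsilon(\log k)$ rounds, the lower bound $(d/2)^{2t}/\mathrm{poly}(k)$ exceeds the upper bound $(c\,d)^{2t}$, which gives a contradiction. Hence some tested pair is $c$-dense. Pruning to $k^{2+\epsilon}$ entries costs little in weak-$\ell_p$ norm: entries below rank $r$ are at most $\|p\|_{p,\infty}r^{-1/p}$, and $r=k^{2+\epsilon}$ suffices relative to the $S$-mass. This gives the output-size bound. For queries, each round reads the adjacency lists of $O(k^{2+\epsilon})$ supported vertices, truncated at $\min\{\Delta,k^2\}$. The truncation is justified because a vertex of degree above $d^2$ already forms a dense star, and $d\le k$. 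Summing over $O(\log k)$ rounds, or over the binary search over guesses of $d$, gives the $\log^2$ factor.

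\textbf{Main obstacle.} The crux is Step 3, the ``no dense level set implies slow weak-$\ell_p$ growth'' inequality. It is the bipartite analogue of the Lov\'asz--Simonovits curve lemma. I expect the difficulty to be making it hold uniformly across pruning and across the two sides, which have different scales. To handle this, I would normalize $p$ and $q$ separately by $\sqrt a$ and $\sqrt b$ so that the two-step operator is symmetric.
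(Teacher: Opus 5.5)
\textbf{There is a genuine gap, in your Step~3.} This is exactly the point the paper's main new ingredient addresses. The inequality you need says that if no tested pair has density above $\sigma$, then the weak-$\ell_p$ norm grows by $O(\sigma)$ per step, or $O(\sigma^2)$ per two steps. This is false for the adjacency matrix.

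Look at your layer-cake bound $\sum_{w\in Q}(xA)(w)=\int_0^\infty e(X_\theta,Q)\,d\theta\le\sigma\sqrt{|Q|}\int_0^\infty|X_\theta|^{1/2}\,d\theta$. With $|X_\theta|\le(\|x\|_{p,\infty}/\theta)^p$, the integral converges at large $\theta$ but diverges as $\theta\to0$ when $p>2$. So the large level sets must be cut off by a bound on $e(X_\theta,Q)$ that does not grow with $|X_\theta|$, namely a column-sum bound $B_\infty|Q|$. What one actually gets is \Cref{prop:interpolation}, whose factor is $C_p\sigma^{2/p}B_\infty^{1-2/p}$. For the adjacency matrix, $B_\infty$ is only bounded by $\Delta$, or by the support size.

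The loss is real. Let $x(u_i)=i^{-1/p}$ on $\ell$ vertices that are all adjacent to one vertex $w$ and to nothing else. Then $\|x\|_{p,\infty}\le p/(p-1)$, and every tested pair $(P_i(x),\{w\})$ has density $\sqrt i\le\sqrt\ell=:\sigma$. Yet $\|xA\|_{p,\infty}\ge(xA)(w)\ge\ell^{1-1/p}=\sigma^{2-2/p}$.

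Truncating the adjacency lists of the current support does not help, because the large degree sits on the receiving side. Even capping receiving degrees at $\sigma^2$ leaves this example intact.

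Your two-step version also fails for any argument that, like yours, uses only nonnegativity and $\langle\one_X,A\one_Y\rangle\le\sigma\sqrt{|X||Y|}$. The matrix of \Cref{con:multiscale} has all densities at most $9\zeta$, yet $\xi_L\bar A=\nu\zeta\,\xi_R$ with $\|\xi_L\|_{p,\infty}=\|\xi_R\|_{p,\infty}=\Theta(1)$ and $\nu=\Theta(\log_2\Delta)$.

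With only $B_\infty\le\Delta$, the two-step bound $C_p^2\sigma^{4/p}\Delta^{2-4/p}$ beats the target growth $\approx ab$ only if $\sigma$ lies a factor $(\Delta/\sqrt{ab})^{(p-2)/2}$ below $\sqrt{ab}$. That is precisely the $\Delta$-dependence the theorem rules out.

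\textbf{The missing idea is edge selection by $\Peel$ (\Cref{alg:peel}).} It requires separate degree guesses $a,b$, not a guess of $d$. The procedure repeatedly removes either a receiving vertex with at most $b/2$ remaining edges, selecting those edges, or a support vertex with at most $a/4$ remaining edges, discarding those edges. There are two outcomes:
\begin{itemize}
\item A core of density $>\sqrt{ab/8}$ remains, and it is returned.
\item The selected edges, weighted by $2/b$, form a matrix with column-sum bound $1$. In it, every $u\in S\cap\supp(x)$ still has at least $3a/4$ edges into $T$, whatever its degree in $G$.
\end{itemize}
In the second case, \Cref{prop:interpolation} gives a weak-norm factor $(3a/(4b))^{1/p}$ against a target growth of $3a/(2b)$. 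Over two steps this is $(3/4)^{2/p}$ against $9/4$, both independent of $\Delta$. The pruning threshold $\delta=\max\{1,a/b\}/(4k)$ and the weighting by $b/a$ on $T$ are then chosen so that the weighted target mass stays at least one.

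\textbf{Smaller points.} The theorem concerns the specific algorithm $\Local_{2+\epsilon}$, with its weak-norm test and its sweep against the shortest violating prefix. It is not a top-$r$-pruning, level-set variant. The $\log_2^2$ factor comes from the number of guess pairs $(a,b)$. The constant $\sqrt{\mathrm e}$ comes from $(1+1/\epsilon)^{-\epsilon/2}\ge\mathrm e^{-1/2}$ in $C_p^{-p/2}$, not from optimizing a grid.
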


We complement our positive results by proving limitations on the output size:

\begin{theorem}[Limitations on the output size, informal]
\label{thm:limits-informal}
No possibly randomized algorithm with neighbor-query access that, for all cohesive sets $S\subseteq L$, $T\subseteq R$ with $k$ vertices on each side, returns nonempty sets $X\subseteq L$, $Y\subseteq R$ with $d(X,Y)=\Omega(d(S,T))$ with probability at least $2/3$ from at least one vertex in $S\mathbin{\dot\cup}T$ can guarantee an output bound of $o(k^2)$ vertices on each side if either:
\begin{enumerate}[label=(\roman*),nolistsep]
\item \emph{Unconditional:} Its number of queries depends only on $k$; or
\item \emph{Computational:} Assuming the Strongish Planted Clique Hypothesis (SPCH), its running time is at most $f(k)\,|V|^c$ for a constant $c$ and a computable function $f$ of $k$ alone.
\end{enumerate}
\end{theorem}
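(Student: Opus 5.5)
The plan is to prove both parts with one family of hard instances, chosen so that every output of density $\Omega(d(S,T))$ must have $\Omega(k^2)$ vertices on some side unless it contains a large piece of the hidden cohesive pair. The base gadget is a complete bipartite graph $S\times T$ with $|S|=|T|=k$, so $d(S,T)=k$. Each vertex of $S\mathbin{\dot\cup}T$ additionally receives $D=\Theta(k^2)$ private neighbors of degree one, as in \Cref{fig:two-structures}, and, optionally, a sparse random bipartite ``noise'' graph on many further vertices. The first step is a structural lemma. Fix any $X\subseteq L$, $Y\subseteq R$ with $d(X,Y)\ge c\,k$ and $|X|,|Y|=o(k^2)$. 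Edges incident to leaves contribute at most stars of density $\sqrt{\deg}$ with $\deg\le k+D$. Noise edges contribute $O(\sqrt{\log})$-density pieces, and this can be shown by a first-moment union bound over small sets. Hence such an $(X,Y)$ must satisfy $e(X\cap S,Y\cap T)=\Omega(c\,k\sqrt{|X||Y|})$, and in particular it must contain $\Omega(c\,k)$ vertices of $S$ and of $T$. To make stars useless I will set $D$ so that $\sqrt{k+D}\le ck/2$, i.e.\ $D\approx c^2k^2/4$. A star with $\ge c^2k^2$ leaves would then already need $\Omega(k^2)$ vertices. This explains the $k^2$ threshold.

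For part (i), I hide the pair inside a large random structure. Each vertex of $S$ and $T$ has its $k$ core edges placed at uniformly random positions among its $k+D$ adjacency-list entries. The remaining $D$ neighbors are not leaves but vertices of a large decoy graph that is locally isomorphic: decoy vertices themselves have degree $k+D$ and lie in a random bipartite graph with no dense small subgraphs. An algorithm whose query count $q(k)$ depends only on $k$ cannot distinguish core edges from decoy edges, beyond having found them by probing, when $|V|\to\infty$. By Yao's principle, it suffices to show that a deterministic algorithm with $q$ queries, starting at a uniformly random vertex of $S$, discovers $O(q\cdot k/(k+D))$ core vertices in expectation. Since $D=\Theta(k^2)$, discovering $\Omega(k)$ core vertices needs $q=\Omega(k^2)$ queries. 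That alone does not bound the output size, though. The key point is that an output of size $o(k^2)$ can contain only vertices the algorithm has seen (unseen vertices are indistinguishable among $|V|\to\infty$ candidates, so a guess hits the core with vanishing probability), and seeing $\Omega(k)$ core vertices in total forces exploring $\Omega(k^2)$ adjacency entries. I would then combine this with the structural lemma: to certify density $ck$ the output must contain $\Omega(k)$ core vertices on each side with many core edges among them. Given the randomness of positions, the algorithm learns where these are only at query cost that forces it to touch many decoy vertices. I expect the argument to conclude that either success probability is below $2/3$ or the output size is $\Omega(k^2)$. The delicate point, and the main obstacle, is formalizing ``the output can only be good if it contains many core vertices, which requires $\Omega(k^2)$ explored vertices,'' while the query bound $q(k)$ may be much larger than $k^2$. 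I would handle this by letting $D$ grow as a function of $q(k)$, so that with $q\ll D$ only $O(qk/D)=o(k)$ core vertices are found, giving failure. Here the construction must be adjusted so that density $\Omega(k)$ still cannot be reached with $o(k^2)$ vertices. For that I would raise the decoy degrees while making the core $K_{k,k}$ plus a star-free margin. Reconciling these two scalings of $D$ is where I expect the real work.

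For part (ii), I reduce from planted clique. Under SPCH, no $f(k)|V|^c$-time algorithm can find a planted $K_{k,k}$ (the bipartite double cover of a planted $k$-clique) in a $G(n,1/2)$-type bipartite graph, which after sparsification has background density $\Theta(\log n)$ pieces only. Given such an instance, I attach degree-one gadgets so that stars have density below $ck$. I then run the assumed algorithm from every vertex, which costs $n$ times the runtime and is still of the allowed form. If any run outputs $o(k^2)$ vertices with density $\Omega(k)$, the structural lemma implies that the output contains $\Omega(k)$ planted vertices with $\Omega(k^2)$ planted edges. A standard cleanup (taking common neighbors) then recovers the whole planted clique in polynomial time, contradicting SPCH.
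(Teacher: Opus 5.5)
Your proposal has genuine gaps: part (ii) cannot work as set up, and part (i) is left unfinished because of an unnecessary constraint.

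In part (ii), your structural lemma is false on planted-clique instances. In the bipartite double cover of a graph from $\mathcal G(n,\frac12)$, with or without the planted clique, the two copies of \emph{any} $k$-set of vertices span about $k(k-1)/2$ edges. They therefore have density about $(k-1)/2=d(S,T)/2$. So an algorithm can return $k$ vertices per side of density $\Omega(d(S,T))$ without touching the clique. Subsampling edges does not help, because it thins the clique and the background by the same factor. Adding degree-one gadgets is beside the point, since the output bound already caps stars at $\phi(k)$ leaves.

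There is also a running-time problem. SPCH only forbids time $n^{o(\log_2 n)}$ at clique size $n^\theta$. With $k=\lceil n^\theta\rceil$ and an arbitrary computable $f$, the time $f(k)|V|^c$ can be far larger, so even recovering the clique would not contradict SPCH.

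The missing idea is to start from the gap-amplified parameterized hardness of Densest $k$-Subgraph (\Cref{thm:mrs}). It holds for every computable $f$ and rules out finding $k$ vertices with $\binom k2/\beta(k)$ edges, $\beta(k)=o(k)$, in graphs promised to contain a $k$-clique. Clique recovery is then replaced by an extraction step as in \Cref{lem:extraction}. Start from $X,Y$ in the double cover with $|X|,|Y|\le\phi(k)$ and $d(X,Y)\ge\rho(k-1)$. Keep the $\lfloor k/2\rfloor$ vertices of $X$ with most neighbors in $Y$, then the $\lfloor k/2\rfloor$ vertices of $Y$ with most neighbors among those. Averaging yields $k$ vertices of $J$ spanning at least $\frac{\rho^2k}{16(k+\phi(k))}\binom k2$ edges. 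This is an approximation ratio $O(1+\phi(k)/k)$, which is $o(k)$ exactly when $\phi(k)=o(k^2)$. No overlap with the clique is needed: the two copies of the clique serve as the cohesive target, with $d(S,T)=k-1$.

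Part (i) is in the right spirit: you hide the core edges at random adjacency-list positions so that few queries are unlikely to reveal any, and argue that unseen target vertices cannot be guessed. But you leave the decisive step open, and your tension between ``two scalings of $D$'' is self-inflicted. You never need $\sqrt{k+D}\le ck/2$, because the output bound alone limits any star in the output to $\phi(k)$ leaves.

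Take the background to be a forest: $N$ disjoint stars with $\Delta$ leaves centered on each side. Choose $S,T$ uniformly among the centers, and overwrite $k$ uniformly random positions in each chosen center's list by the opposite target centers, isolating the displaced leaves. Any output containing no edge of $S\times T$ is then a subgraph of a forest. Hence $e(X,Y)\le|X|+|Y|\le2\phi(k)$ and $d(X,Y)\le\sqrt{e(X,Y)}\le\sqrt{2\phi(k)}=o(k)$, however large $\Delta$ is. No random decoy graph, local isomorphism, or union bound over small sets is needed, and you may take $\Delta\gg k^2q(k)$ and $N\gg k^2\phi(k)$.

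Couple the runs on the forest and on the planted graph using the same internal randomness. A differing answer then occurs with probability at most $kq(k)/\Delta$. The forest run's output contains a target center other than $v$ with probability at most $2\phi(k)k/(N-1)$.

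Finally, Yao's principle with a uniformly random start in $S$ only controls an average over start vertices. Refuting success ``from at least one vertex'' requires a single instance on which \emph{every} start vertex in $S\mathbin{\dot\cup}T$ has success probability below $2/3$. So bound $\mathbb E\bigl[\sum_{v\in S\mathbin{\dot\cup}T}p_v\bigr]<1/3$, where $p_v$ is the success probability from $v$, and fix an instance attaining this bound.
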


Up to the factor $k^\epsilon$, our output size is therefore the best possible in both settings. We discuss the main result in \Cref{sec:analysis} and the limitations in \Cref{sec:limits}.

\subsection{Our technique}
\label{sec:techniques}

\paragraph{From power iteration to prefix sweeps.}
Our starting point is Andersen's pruned power iteration \citep{Andersen2010}. We maintain a nonnegative vector $x$ indexed by vertices and multiply it by the adjacency matrix, setting $y(w)\coloneq\sum_u x(u)$, where the sum is over the neighbors of $w$. For now, we set aside rescaling and pruning. The mass of $x$ on a set $Q$ is $\sum_{u\in Q}x(u)$. Suppose every vertex of $S$ has at least $a$ neighbors in $T$ and every vertex of $T$ at least $b$ in $S$. After one multiplication, the mass of $y$ on $T$ is at least $a$ times the mass of $x$ on $S$, and after two, the mass on $S$ is at least $ab$ times its original value. These bounds hold even with edges leaving the target, since all coordinates are nonnegative. By cohesiveness, there exist powers of two $a,b\in[1,k]$, with $\sqrt{ab}>d(S,T)/4$ (see \Cref{sec:vertex-peeling}; we remove this choice later).
We cannot locate a dense pair from the growth of target mass alone: the target is unknown, and mass outside it may grow faster. Andersen extracts a dense pair from growth of the Euclidean norm. Instead we sort consecutive vectors and consider their initial segments, or \emph{prefixes}. We seek a Lov\'asz--Simonovits-type \emph{win--win} \citep{SpielmanTeng2008,AndersenChungLang2006}: either we find a dense prefix pair in a sweep, or we obtain upper bounds on prefix sums that eventually become incompatible with the mass that remains on the target. Here set size means number of vertices, rather than volume.

\paragraph{The choice of norm.}
Which bound on prefix sums should we use for this alternative? Write $P_i(x)$ for the vertices with the $i$ largest positive coordinates, breaking ties by identifier, and $I_x(m)\coloneq\max_{|Q|\le m}\sum_{u\in Q}x(u)$. A natural choice is the Euclidean norm, since $I_x(m)\le\|x\|_2\sqrt m$. However, with only this inequality we incur a logarithmic loss when summing comparable bounds over dyadic ranges of prefix sizes (see \Cref{sec:prefix-sweeps}). Changing the sweep alone is therefore not enough. We instead fix $p>2$ and use the \emph{weak-$\ell_p$ norm} \citep[Chapter~4]{BennettSharpley1988}
\begin{equation*}
\|x\|_{p,\infty}\coloneq
\sup_{m\in\{1,2,\ldots\}}\frac{I_x(m)}{m^{1-1/p}}
\enspace.
\end{equation*}
The bounds over dyadic ranges now form a geometric series, so we incur only a $p$-dependent factor when summing them (a detailed explanation of this comparison can be found in \Cref{app:log-factor}). More precisely, let $y=xA$, where $x\ge0$ is supported on $L$ and $A$ is an $L\times R$ adjacency matrix with column sums at most $B_\infty$. Let $\sigma>0$ be the density we aim to guarantee in the returned sets. For a prefix $P_m(y)$, if $d(P_i(x),P_m(y))\le\sigma$ for every prefix $P_i(x)$, then
\begin{equation*}
I_y(m)\le C_p\,\sigma^{2/p}B_\infty^{1-2/p}
\|x\|_{p,\infty}m^{1-1/p},
\qquad C_p\coloneq\frac{p-1}{p-2}
\enspace.
\end{equation*}
If the inequality fails for $P_m(y)$, we find dense sets by sweeping the prefixes of $x$ against $P_m(y)$. If no prefixes exceed density $\sigma$, we take the supremum over $m$ to bound $\|y\|_{p,\infty}$ (\Cref{sec:prefix-sweeps,app:interpolation}).

\paragraph{Removing the degree dependence.}
The logarithm is gone, but the column sums remain. For the full adjacency matrix, we only know the bound $B_\infty=\Delta$ on the column sums, where $\Delta$ is the maximum degree. Since $\sqrt{ab}>d(S,T)/4$, we obtain a constant-factor approximation if we can guarantee density $\sigma$ equal to a fixed fraction of $\sqrt{ab}$, depending only on $p$. The upper bound on the norm after two steps has factor $C_p^2\sigma^{4/p}\Delta^{2-4/p}$, compared with the factor $ab$ for the mass on $S$. To make the norm factor smaller than $ab$, we need $\sigma<C_p^{-p/2}\sqrt{ab}(\sqrt{ab}/\Delta)^{(p-2)/2}$, at the cost of an additional approximation factor $(\Delta/\sqrt{ab})^{(p-2)/2}$.

After rescaling, we want the target mass to grow while the weak norm decreases over pairs of steps. For a step from $L$ to $R$, we use \emph{peeling} to bound the number of selected edges incident to each vertex in $R$ while keeping enough target edges. We repeatedly remove a vertex in $R$ with at most $b/2$ remaining edges and select those edges, or a vertex in the current support with at most $a/4$ remaining edges and discard those edges. If vertices remain but neither removal is possible, their density exceeds $\sqrt{ab/8}$ (\Cref{sec:vertex-peeling}).
If every vertex is removed, each vertex in $R$ has at most $b/2$ selected edges, while at least $3a/4$ edges into $T$ are selected at each vertex of $S\cap\supp(x)$. Let $A$ be the adjacency matrix of the selected edges and set $y=(2/b)xA$, so that the weighted column sums are at most one. For $\sigma\coloneq\sqrt{3ab}/(4C_p^{p/2})$, we either find prefixes of density greater than $\sigma$, or
\begin{equation*}
\sum_{w\in T}y(w)\ge\frac{3a}{2b}\sum_{u\in S}x(u),
\qquad
\|y\|_{p,\infty}\le
\left(\frac{3a}{4b}\right)^{1/p}\|x\|_{p,\infty}
\enspace.
\end{equation*}
For the reverse step, we exchange $a$ and $b$. If no dense sets are returned in either step, then after two steps without pruning the mass on $S$ is at least $9/4$ times its earlier value, while the weak norm is at most $(3/4)^{2/p}$ times its earlier value. Crucially, neither factor now depends on $\Delta$ (\Cref{sec:algorithm}).

\paragraph{The local search.} We are now ready to combine the ideas above and present the algorithm. So far, we have assumed that suitable degree bounds $a,b$ are known. In reality, since they are unknown, we run a search for every pair of powers of two $a,b\in[1,k]$ and return the densest candidate. Assume $v\in L$, exchanging the sides if necessary. Each search starts with $x_0=\one_v$, the vector equal to one at $v$ and zero elsewhere, and alternates between the sides. At step $t$, the algorithm reads at most $\lceil\sigma^2\rceil$ entries of the adjacency list of each vertex in $\supp(x_t)$. If it finds that many neighbors in one list, it returns a star of density at least $\sigma$. If no star is returned, all these adjacency lists have been read completely. The algorithm then peels the edges incident to the support of $x_t$ and, unless dense sets are returned, performs the rescaled multiplication and sets coordinates below $\delta$ to zero. We choose $\delta$ to keep the support small without losing too much target mass.

We abandon the guess if the pruned vector $z$ is zero and otherwise set $x_{t+1}=z$ only if its weak norm satisfies the bound from the inequalities above, which decreases over pairs of steps. If this bound is exceeded, we sweep $x_t$ against the shortest prefix of $z$ whose sum exceeds the corresponding bound. This prefix's vertices and sum are unchanged by pruning, so we find dense sets.

For the degree bounds chosen using cohesiveness and $v\in S$, we measure target mass with a factor $b/a$ on $T$ and no factor on $S$. Before pruning, this mass is at least $3/2$ times its previous value, and we choose $\delta$ so that the loss during pruning is at most $1/2$. The weighted mass therefore stays at least one, its initial value, and the vector never becomes zero. At every even step reached by the search,
\begin{equation*}
1\le\sum_{u\in S}x_t(u)
\le\|x_t\|_{p,\infty}k^{1-1/p}
\le(3/4)^{t/p}k^{1-1/p}
\enspace.
\end{equation*}
After $O_p(\log_2(2k))$ steps, the right-hand side would be less than one, so the search must return sets of density at least $\sigma$ (\Cref{sec:algorithm}).

\subsection{Related Work}

\citet{SpielmanTeng2008} used the Lov\'asz--Simonovits curve \citep{LovaszSimonovits1990,LovaszSimonovits1993} to analyze their local partitioning algorithm \textsc{Nibble}, and \citet{AndersenChungLang2006} adapted this analysis to personalized PageRank. Other local graph-partitioning algorithms use heat-kernel PageRank \citep{Chung2009} or evolving sets \citep{AndersenPeres2009}, and local clustering by $p$-norm flow diffusions and $p$-norm cuts \citep{FountoulakisWangYang2020,LiuGleich2020} has been extended to vertex attributes and noisy labels \citep{YangFountoulakis2023,BackDeLucaFountoulakisYang2024}. Local algorithms for dense bipartite-like subgraphs \citep{Peng2012,LiPeng2013,MacgregorSun2021} minimize the bipartiteness ratio, a cut objective analogous to conductance.

Local and seeded variants of densest subgraph have also been studied from a practical perspective. When densest subgraphs are extracted sequentially, fragmented sub-regions of a single dominant global core may be returned while distinct local communities are missed. \citet{QinEtAl2015} therefore introduced \emph{locally densest subgraphs} (LDS), based on $\rho$-compact subgraphs, with structural core bounds and pruning techniques for exact verification on billion-edge graphs, but their algorithm traverses the whole graph. \citet{DaiQiaoChang2022} introduced the \emph{anchored densest subgraph} (ADS) problem, with explicit edge or vertex penalties for expanding beyond a specified reference set, and solved it via localized parametric flow networks. \citet{HuangGleichVeldt2024} generalized this formulation to hypergraphs using negative supermodular optimization, with strongly localized routines when the penalization parameter is sufficiently large. For dense-block discovery in fraud detection, \citet{HooiEtAl2016} gave a camouflage-resistant algorithm for the whole graph.

The local problem of \citet{Andersen2010} was revisited by \citet[Theorems~1.2.2 and~1.2.3]{Liu2026}, who obtains an output size of $O(k^2)$ with Andersen's approximation ratio, and $O(k^{1+\epsilon})$ with a ratio that also depends on $|V|$. The techniques involved are different from ours and are based on spectral analysis, degree sorting, and majorization. Our algorithm, with $\epsilon=1/\lceil\log_2(2k)\rceil$, also returns $O(k^2)$ vertices, using $O(k^4)$ neighbor queries with an $O(\log_2(2k))$ approximation ratio, independent of $\Delta$ (\Cref{cor:exponent}).

\section{The local algorithm and its analysis}
\label{sec:analysis}

Our algorithm is based on an alternating power iteration from the starting vertex: with $x_0\coloneq\one_v$, it repeatedly multiplies the current vector by the adjacency matrix of a subgraph of $G$ and then prunes, alternating between the two sides. For suitable degree guesses, the mass inside a cohesive target grows by a factor of at least $9/4$ over two weighted steps, before accounting for pruning. The target is unknown to the algorithm, and the mass outside it may grow as well, so we cannot identify dense sets from growth alone. For the analysis, we compare prefixes: either prefixes $P,Q$ of consecutive vectors have density $d(P,Q)>\sigma$ and the algorithm returns them, or the next prefix sums are bounded in terms of the current ones, with geometrically decreasing bounds over pairs of steps. This cannot last forever: the target has at most $k$ vertices per side, but its mass on the starting side stays at least one at every even step.

We use neighbor queries with fixed adjacency-list orders and distinct vertex identifiers. Vectors are row vectors indexed by vertices, with indicators $\one_S$, $\one_v\coloneq\one_{\{v\}}$ and support $\supp(x)\coloneq\{u:x(u)\ne0\}$. Vectors are restricted or extended by zero to match matrix indices. For any subgraph $H$ of $G$, numbers of edges and densities in $H$ are denoted by $e_H(S,T)$ and $d_H(S,T)$, and $N_G(U)$ denotes the set of neighbors of $U\subseteq V$ in $G$.

\subsection{Prefix sweeps}
\label{sec:prefix-sweeps}

In a step of the iteration from $L$ to $R$, the current vector $x\ge0$ is supported on $L$ and the next vector is $y\coloneq xA$, where $A$ is the adjacency matrix of a subgraph of $G$, so that each coordinate $y(w)$ is the sum of $x(u)$ over the neighbors $u$ of $w$ in this subgraph. To compare the largest coordinates of $y$ with those of $x$, we adapt the Lov\'asz--Simonovits curve \citep{LovaszSimonovits1990,LovaszSimonovits1993} to density: in place of the curve we use $I_x(m)$, the largest mass of the current vector on $m$ vertices, and in place of a sweep set of low conductance we look for prefixes $P,Q$ with $d(P,Q)>\sigma$, where $\sigma>0$ is a density threshold, fixed in \Cref{eq:parameters}.

\begin{definition}[Prefix sums and the weak norm {[\citealp[equation~(3)]{SpielmanTeng2008}; \citealp[Chapter~4]{BennettSharpley1988}]}]
\label{def:prefix}
For a vector $x\ge0$ indexed by a finite set and $\ell\coloneq|\supp(x)|$, write $x_{(1)},\ldots,x_{(\ell)}$ for its positive coordinates in nonincreasing order, with ties broken by identifier, and set $x_{(i)}\coloneq0$ for $i>\ell$. For $1\le i\le\ell$, the prefix $P_i(x)$ consists of the first $i$ vertices in this order. For $p\ge2$, the prefix sums and weak-$\ell_p$ norm are defined by
\begin{equation}
\label{eq:ov-weak}
I_x(m)\coloneq\max_{|P|\le m}\sum_{u\in P}x(u)=\sum_{i=1}^m x_{(i)},\qquad
\|x\|_{p,\infty}\coloneq\sup_{s>0}\frac{I_x(s)}{s^{1-1/p}}
\enspace,
\end{equation}
where $m$ is a nonnegative integer and $I_x$ is extended to real arguments $s\ge0$ by linear interpolation between consecutive integers, with $I_x(0)=0$. In particular, $I_x(s)\le\|x\|_{p,\infty}s^{1-1/p}$ for every $s>0$.
\end{definition}

\begin{definition}[Column-sum bound]
\label{def:column-sum}
A nonnegative matrix $A$ has \defemph{column-sum bound} $B_\infty$ if $\sum_uA_{uw}\le B_\infty$ for every column $w$.
\end{definition}

After sorting $y$, we take a prefix $P_m(y)$, whose sum is $I_y(m)$, and in a \defemph{sweep} we look for a prefix $P_i(x)$ with $d(P_i(x),P_m(y))>\sigma$. If there is none, then there are at most $\sigma\sqrt{im}$ edges between $P_i(x)$ and $P_m(y)$ for every $i$. For $p>2$ and a column-sum bound $B_\infty$ for $A$ (\Cref{def:column-sum}),
\begin{equation}
\label{eq:local-summation}
I_y(m)\le C_p\,\sigma^{2/p}B_\infty^{1-2/p}\|x\|_{p,\infty}m^{1-1/p},
\qquad\text{where }C_p\coloneq\frac{p-1}{p-2}
\enspace.
\end{equation}
This bound is a special case of \Cref{prop:interpolation}, proved in \Cref{app:interpolation} by summation by parts. In the proof, we use the smaller of two bounds for the number of edges from $P_i(x)$ into $P_m(y)$ in the subgraph represented by $A$: $\sigma\sqrt{im}$ from the density assumption and $B_\infty m$ from the column sums. Whenever $I_y(m)$ exceeds the right side of \Cref{eq:local-summation}, the algorithm finds a prefix $P_i(x)$ with $d(P_i(x),P_m(y))>\sigma$ in the sweep. If no such prefixes exist for any $m$, then $\|y\|_{p,\infty}\le C_p\sigma^{2/p}B_\infty^{1-2/p}\|x\|_{p,\infty}$.

The bound holds only for $p>2$: for $p=2$, we obtain a logarithmic factor in place of $C_p$, as in the guarantee of \citet{Andersen2010}, to which we compare ours in \Cref{app:andersen-prefix}. For the adjacency matrix of $G$ itself, whose column sums are the degrees, we can only take $B_\infty\coloneq\Delta$, so that the bound contains a factor $\Delta^{1-2/p}$, even though $d(S,T)$ is independent of edges outside the subgraph induced by $S\mathbin{\dot\cup}T$. To keep $\sigma$ a fixed fraction of $d(S,T)$, we therefore need to bound $B_\infty$ independently of $\Delta$.

\subsection{Vertex peeling}
\label{sec:vertex-peeling}

For each multiplication, we select edges by peeling: we repeatedly remove a vertex of low degree, selecting its remaining edges if it is on the side of the next vector, so that its column sum is small, and discarding them if it is on the side of the current vector, so that only a few of its edges are discarded. For each target vertex in the support, most edges into the opposite target set are selected for multiplication. The thresholds for low degree are fractions of the degrees $a,b$ guaranteed by cohesiveness.

\begin{lemma}[Degree guesses for cohesive sets]
\label{lem:degree-guesses}
For cohesive sets $S\subseteq L$, $T\subseteq R$, there are powers of two $a,b\ge1$, depending only on $S,T$, such that $e(u,T)\ge a$ for every $u\in S$, $e(S,w)\ge b$ for every $w\in T$, and $\sqrt{ab}>d(S,T)/4$.
For every $k$ with $|S|,|T|\le k$, $a,b\le k$, $a|S|\le2bk$ and $b|T|\le2ak$.
\end{lemma}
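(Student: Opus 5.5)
Let $e\coloneq e(S,T)$, $s\coloneq|S|$, $t\coloneq|T|$. Cohesiveness gives $e(u,T)\ge e/(2s)$ for every $u\in S$ and $e(S,w)\ge e/(2t)$ for every $w\in T$. I would take $a$ and $b$ to be the largest powers of two not exceeding these thresholds:
\[
a\coloneq2^{\lfloor\log_2(e/(2s))\rfloor},\qquad b\coloneq2^{\lfloor\log_2(e/(2t))\rfloor}.
\]
This requires $e/(2s)\ge1$ and $e/(2t)\ge1$. The minimum-degree statement itself only needs integrality: $e(u,T)$ is an integer that is positive, because cohesiveness and $e>0$ force $e(u,T)\ge e/(2s)>0$. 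So I would instead define $a$ as the largest power of two at most $\max\{1,e/(2s)\}$, and similarly for $b$. Then $e(u,T)\ge1$ and $e(u,T)\ge e/(2s)$ together give $e(u,T)\ge a$, and $a\ge1$; the same holds for $b$. Both depend only on $S,T$.

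\textbf{Density bound.} Rounding down to a power of two loses less than a factor $2$, so $a>\max\{1,e/(2s)\}/2\ge e/(4s)$ and $b>e/(4t)$. Hence
\[
\sqrt{ab}>\sqrt{e^2/(16st)}=\frac{d(S,T)}{4}.
\]
The inequality is strict because each rounding is strictly more than half.

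\textbf{Size bounds.} Since $e(u,T)\le|T|\le k$, we get $a\le\min_{u\in S} e(u,T)\le k$, and likewise $b\le|S|\le k$. For $a|S|\le2bk$ there are two cases.
\begin{itemize}
\item If $e/(2s)\ge1$, then $a\le e/(2s)$, so $a|S|\le e/2$. Also $e/(2t)\ge e/(2k)$, and $b>\max\{1,e/(2t)\}/2\ge e/(4k)$, so $2bk>e/2\ge a|S|$.
\item If $e/(2s)<1$, then $a=1$ and $a|S|=s\le k\le2bk$, since $b\ge1$.
\end{itemize}
The inequality $b|T|\le2ak$ follows symmetrically.

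\textbf{Main obstacle.} The only delicate point is the case where a threshold $e/(2s)$ or $e/(2t)$ is below one. There the power-of-two rounding must be floored at $1$, and the integer degree lower bound has to be used in place of the fractional one. The rest is direct bookkeeping.
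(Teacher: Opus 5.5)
Your proof is correct and follows essentially the same route as the paper: the same rounding of $\max\{1,e(S,T)/(2|S|)\}$ and $\max\{1,e(S,T)/(2|T|)\}$ down to powers of two, the same use of integrality for the minimum-degree bounds, and the same chain $a|S|\le e(S,T)/2<2b|T|\le2bk$ in the nontrivial case. The only minor difference is that you obtain $a,b\le k$ from $a\le e(u,T)\le|T|$, whereas the paper uses $e(S,T)/(2|S|)\le k/2$; both are immediate.
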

The proof is in \Cref{app:cohesive-sets}. Since $a,b$ depend on unknown $S,T$, the algorithm tries all pairs of powers of two in $[1,k]$.

For a step from $L$ to $R$ with $U\coloneq\supp(x)$, the \defemph{neighborhood graph} of $U$ is $H\coloneq(U\mathbin{\dot\cup}W,F)$, where $W\coloneq N_G(U)$ is the set of neighbors of $U$ and $F$ is the set of edges incident to $U$. (For a step from $R$ to $L$, we argue in the same way after exchanging $a$ and $b$.) We seek a set $F'\subseteq F$ of selected edges such that every $w\in W$ is incident to at most $b/2$ selected edges and every $u\in U$ to at most $a/4$ discarded edges. With each selected edge assigned weight $2/b$, the weighted matrix has column-sum bound $B_\infty=1$. Each $u\in S\cap U$ has at least $a$ edges into $T$, of which at most $a/4$ are discarded, so at least $3a/4$ of its edges into $T$ belong to $F'$, regardless of its degree in $G$. Hence
\begin{equation}
\label{eq:local-growth}
\sum_{w\in T}y(w)\ge\frac{3a}{2b}\sum_{u\in S}x(u).
\end{equation}
Here $\sum_{u\in S\cap U}x(u)=\sum_{u\in S}x(u)$ since $U=\supp(x)$.
$\Peel$ (\Cref{alg:peel}) selects edges by removing low-degree vertices, or returns dense subsets of the vertices it cannot remove.

\begin{algorithm}[t]
\caption{\texorpdfstring{$\Peel(H,x,a,b)$}{Peel}: select edges or return subsets}
\label{alg:peel}
\begin{algorithmic}[1]\small
\Require A bipartite graph $H=(U\mathbin{\dot\cup}W,F)$, a nonnegative vector $x$ indexed by $U$, and $a,b>0$.
\Ensure A nonnegative vector $y$ indexed by $W$, or sets $U_*,Y$ with $\varnothing\ne U_*\subseteq U$ and $\varnothing\ne Y\subseteq W$.
\State $U_*\gets U$, \quad $W_*\gets W$, \quad $F'\gets\varnothing$ \label{line:peel-init}
\While{there exists $u\in U_*$ with $e_H(u,W_*)\le a/4$ or $w\in W_*$ with $e_H(U_*,w)\le b/2$} \label{line:peel-while}
  \State Let $q$ be the vertex with smallest identifier satisfying either condition. \label{line:peel-pick}
  \If{$q\in W_*$} \label{line:peel-ifW}
    \State $F'\gets F'\cup\{uq\in F : u\in U_*\}$ and $W_*\gets W_*\setminus\{q\}$ \label{line:peel-select}\label{line:peel-removeW}
  \Else
    \State $U_*\gets U_*\setminus\{q\}$ \label{line:peel-discard}
  \EndIf
\EndWhile
\State \algorithmicif\ $U_*\mathbin{\dot\cup}W_*=\varnothing$ \algorithmicthen\ \Return $y$ indexed by $W$ with $y(w)\coloneq\frac2b\sum_{u:\,uw\in F'}x(u)$ \label{line:peel-empty}\label{line:peel-vector}
\State Let $Y$ consist of the first $\min\!\left\{|W_*|,\left\lceil\frac{a|U_*|}{2b}\right\rceil\right\}$ vertices of $W_*$ in identifier order. \label{line:peel-Y}
\State \Return $(U_*,Y)$ \label{line:peel-sets}
\end{algorithmic}
\end{algorithm}

If $\Peel$ removes every vertex, then an edge belongs to $F'$ exactly when its endpoint in $W$ is removed before its endpoint in $U$ (line~\ref{line:peel-select}). Thus every $w\in W$ has at most $b/2$ selected edges and every $u\in U$ at most $a/4$ discarded edges, since these are its remaining edges when it is removed (line~\ref{line:peel-while}).
If vertices remain, then every vertex of $U_*$ has more than $a/4$ neighbors in $W_*$ and every vertex of $W_*$ more than $b/2$ neighbors in $U_*$ (line~\ref{line:peel-while}), so that $d_H(U_*,W_*)>\sqrt{ab/8}$. In the search, $|U_*|\le|\supp(x)|$, and the support size is bounded by pruning as in \Cref{eq:ov-size}, but $|W_*|$ is not. To bound the output size, we replace $W_*$ by a subset $Y$ (line~\ref{line:peel-Y}) with $|Y|\le a|U_*|/(2b)+1$. Every vertex of $Y$ still has more than $b/2$ neighbors in $U_*$, and the density remains above $\sqrt{ab/8}$ (see \Cref{lem:step-outputs}(b)).

\subsection{The algorithm and its guarantees}
\label{sec:algorithm-guarantee}
\label{sec:algorithm}

\begin{algorithm}[t]
\caption{Local approximation, for $p>2$}
\label{alg:local}
\begin{algorithmic}[1]\small
\Statex \hspace*{-\leftmargin}\textbf{$\Local_p(G,v,k)$}
\Require Neighbor-query access to $G=(L\mathbin{\dot\cup}R,E)$, a vertex $v\in L\mathbin{\dot\cup}R$, and an integer $k\ge1$.
\Ensure Nonempty $X\subseteq L$, $Y\subseteq R$, or $\bot$ if $v$ is isolated.
\State Query the first neighbor $u$ of $v$. \label{line:local-first}
\State \algorithmicif\ $u=\bot$ \algorithmicthen\ \Return $\bot$ \label{line:local-isolated}
\State \algorithmicif\ $v\in R$ \algorithmicthen\ exchange the roles of $L$ and $R$ below, and swap the returned sets \label{line:local-swap}
\State $\mathcal C\gets\bigl\{(\{v\},\{u\})\bigr\}$ \label{line:local-init}
\For{each $a,b\in\{2^0,2^1,\ldots,2^{\lfloor\log_2k\rfloor}\}$} \label{line:local-guess}
  \State $r\gets\Search_p(G,v,k,a,b)$ \label{line:local-search}
  \State \algorithmicif\ $r\ne\bot$ \algorithmicthen\ add $r$ (ordered with its current $L$-subset first) to $\mathcal C$ \label{line:local-add}
\EndFor
\State \Return $(X,Y)\in\mathcal C$ maximizing $d(X,Y)$ \label{line:local-return}
\Statex \smallskip\hspace*{-\leftmargin}\textbf{$\Search_p(G,v,k,a,b)$}
\Require The inputs above, with $v\in L$, and a guess $a,b\in[1,k]$.
\Ensure Two nonempty vertex sets on opposite sides of $G$, or $\bot$.
\State Take $\sigma,\delta,(K_t)_{t\ge0}$ from \Cref{eq:parameters}, and set $x_0\gets\one_v$. \label{line:search-init}
\For{$t=0,1,2,\ldots$} \label{line:search-for}
  \State $(a_t,b_t)\gets(a,b)$ if $t$ is even, and $(b,a)$ if $t$ is odd \label{line:search-parity}
  \For{each $u\in\supp(x_t)$} \label{line:search-scan}
    \State Read the adjacency list of $u$, stopping after $\lceil\sigma^2\rceil$ entries, and let $N_u$ be the set of neighbors read. \label{line:search-list}
    \State \algorithmicif\ $|N_u|=\lceil\sigma^2\rceil$ \algorithmicthen\ \Return $(\{u\},N_u)$ \label{line:search-star}
  \EndFor
  \State Let $H\coloneq(\supp(x_t)\mathbin{\dot\cup}W,F)$ be the neighborhood graph of $\supp(x_t)$, with $W\coloneq N_G(\supp(x_t))$. \label{line:search-H}
  \State $r\gets\Peel(H,x_t,a_t,b_t)$ \label{line:search-peel}
  \State \algorithmicif\ $r$ consists of sets $U_*,Y$ \algorithmicthen\ \Return $(U_*,Y)$ \label{line:search-sets}
  \State $y_{t+1}\gets r$, \quad $z\gets\prune(\delta,y_{t+1})$ \label{line:search-prune}
  \State \algorithmicif\ $z=0$ \algorithmicthen\ \Return $\bot$ \label{line:search-zero}
  \If{$\|z\|_{p,\infty}\le K_{t+1}^{1/p}$} \label{line:search-test}
    \State $x_{t+1}\gets z$ \label{line:search-accept}
  \Else
    \State Let $m\ge1$ be the smallest index such that $I_z(m)>K_{t+1}^{1/p}m^{1-1/p}$. \label{line:search-m}
    \State \Return $(P_i(x_t),P_m(z))$ for the smallest $i$ with $d_H(P_i(x_t),P_m(z))>\sigma$, or $\bot$ if none exists \label{line:search-sweep}
  \EndIf
\EndFor
\end{algorithmic}
\end{algorithm}

$\Local_p$ (\Cref{alg:local}) keeps the edge from $v$ to its first neighbor as the first candidate, so that the output is nonempty with density at least one (lines~\ref{line:local-first} to~\ref{line:local-init}). By guessing $a,b$ among the powers of two in $[1,k]$ (line~\ref{line:local-guess}), we lose only a constant factor in the approximation.
To keep the computation local, $\Search_p$ reads at most $\lceil\sigma^2\rceil$ entries of each adjacency list and returns a star if it finds that many neighbors, since a star with that many leaves has density at least $\sigma$ (line~\ref{line:search-star}). It accepts the pruned vector as the next one only if it passes the \defemph{weak-norm test} of line~\ref{line:search-test}, analyzed below.

For a starting vertex in $L$ and a guess $(a,b)$, the search uses the density threshold $\sigma$, the pruning threshold $\delta$, and the weak-norm thresholds $K_t^{1/p}$, where
\begin{equation}
\label{eq:parameters}
\sigma\coloneq\frac{\sqrt{3ab}}{4C_p^{p/2}},\qquad
\delta\coloneq\frac{\max\{1,a/b\}}{4k},\qquad
K_t\coloneq\begin{cases}(3/4)^t,&t\text{ even},\\(a/b)(3/4)^t,&t\text{ odd}.
\end{cases}
\end{equation}
The density threshold $\sigma$ is a constant fraction of $\sqrt{ab}$, and $\sigma<\sqrt{ab/8}$ since $C_p^p>2$.

For a step from $L$ to $R$, write $x\coloneq x_t$ and $y\coloneq y_{t+1}=(2/b)xA$, where $A$ is the adjacency matrix of the selected edges and has column-sum bound $B_\infty=b/2$ (line~\ref{line:search-peel}). If $d(P_i(x),P_m(y))\le\sigma$ for every prefix $P_i(x)$, then, using \Cref{eq:local-summation}, rescaling by $2/b$, and substituting $\sigma$ from \Cref{eq:parameters}, we get
\begin{equation}
\label{eq:ov-selected-bound}
I_y(m)\le\Bigl(\frac{3a}{4b}\Bigr)^{1/p}\|x\|_{p,\infty}m^{1-1/p}.
\end{equation}

For $\delta>0$ and a nonnegative vector $y$, $\prune(\delta,y)$ is obtained from $y$ by setting every coordinate smaller than $\delta$ to zero, and the pruned vector is $z=\prune(\delta,y_{t+1})$ (line~\ref{line:search-prune}). For every prefix $P_m(z)$, $P_m(z)=P_m(y_{t+1})$ and $I_z(m)=I_{y_{t+1}}(m)$, so \Cref{eq:ov-selected-bound} holds with $z$ in place of $y$ as well. Every positive coordinate of $z$ is at least $\delta$, and the sum over any set $Q$ decreases by at most $\delta|Q|$ (\Cref{app:prefix}).

For cohesive sets $S\subseteq L$, $T\subseteq R$, a starting vertex $v\in S$, and the search with the guess $(a,b)$ from \Cref{lem:degree-guesses}, we consider the mass on the target side, multiplied by $b/a$ when that side is $T$. Whenever $\Peel$ returns a vector, its weighted target mass before pruning is at least $3/2$ times the previous value, by \Cref{eq:local-growth}. By \Cref{lem:degree-guesses}, the bounds $|S|,|T|\le k$, and the choice of $\delta$ in \Cref{eq:parameters},
\begin{equation*}
|S|\le\min\{k,2bk/a\}\le\frac1{2\delta},\qquad
(b/a)|T|\le\min\{(b/a)k,2k\}\le\frac1{2\delta}.
\end{equation*}
Thus at most $\max\{\delta|S|,(b/a)\delta|T|\}\le1/2$ is lost during pruning. The initial weighted target mass is one, so by induction it remains at least one after every pruning step in the search with guess $(a,b)$. In particular, $z\ne0$ at each such step.

The search accepts $z$ as the next vector only if $\|z\|_{p,\infty}\le K_{t+1}^{1/p}$, so every $x_t$ satisfies $\|x_t\|_{p,\infty}\le K_t^{1/p}$ (lines~\ref{line:search-test} and~\ref{line:search-accept}). The thresholds are chosen to match \Cref{eq:ov-selected-bound}: starting from $K_0=1$, $K_t^{1/p}$ is multiplied by $(3a/(4b))^{1/p}$ in each step from $L$ to $R$ and by $(3b/(4a))^{1/p}$ in each step from $R$ to $L$, as specified for $K_t$ in \Cref{eq:parameters}. Hence, if $d(P_i(x_t),P_m(z))\le\sigma$ for all prefixes $P_i(x_t),P_m(z)$, then, by \Cref{eq:ov-selected-bound} and $\|x_t\|_{p,\infty}\le K_t^{1/p}$, $I_z(m)\le K_{t+1}^{1/p}m^{1-1/p}$ for every integer $1\le m\le|\supp(z)|$. By \Cref{lem:prefix}(a), the supremum defining $\|z\|_{p,\infty}$ is attained at such an integer, so $z$ passes the test. Over two steps, the threshold decreases by a factor of $(3/4)^{2/p}$.
If $z$ fails the test, $\Search_p$ finds the shortest prefix $P_m(z)$ whose sum exceeds $K_{t+1}^{1/p}m^{1-1/p}$ (line~\ref{line:search-m}). Since the bound \Cref{eq:ov-selected-bound} fails for this prefix, some prefix $P_i(x_t)$ has $d(P_i(x_t),P_m(z))>\sigma$, and the search returns these two prefixes (line~\ref{line:search-sweep}).

\paragraph{Output size and locality.}
Every positive coordinate of $x_t$ is at least $\delta$ and $\|x_t\|_{p,\infty}\le K_t^{1/p}$, so, by the choice of parameters in \Cref{eq:parameters},
\begin{equation}
\label{eq:ov-size}
|\supp(x_t)|\le K_t\delta^{-p}\le(4k)^p(3/4)^t,
\qquad \sum_t|\supp(x_t)|\le4(4k)^p,
\end{equation}
and each set returned by a sweep or by peeling has at most $(4k)^p$ vertices (\Cref{lem:call}(d)), so the output has $O_p(k^p)$ vertices. Every vertex of $\supp(x_{t+1})$ has a neighbor in $\supp(x_t)$, so the sets returned at step $t$ lie within distance $t+1$ of $v$, where $t\le p\log_{4/3}(4k)$ by \Cref{eq:ov-size}. In particular, for $p=2+\epsilon$, every returned vertex lies within distance $O_\epsilon(\log_2(2k))$ of $v$ (\Cref{lem:distance}).

The algorithm makes $O_p\!\left(k^p\min\{(\Delta+1)\log_2^2(2k),\ k^2\}\right)$ neighbor queries (\Cref{app:computation}).

\paragraph{Constant approximation.}
We prove that, for every $p>2$ and $S,T,k,v$ as in \Cref{thm:overview}, the output $(X,Y)$ of $\Local_p(G,v,k)$ satisfies
\begin{equation}
\label{eq:approximation}
d(X,Y)>\frac{\sqrt3}{16}\,C_p^{-p/2}\,d(S,T).
\end{equation}
We choose $(a,b)$ as in \Cref{lem:degree-guesses} and suppose first that $v\in S$. By the induction above, $z\ne0$ at every pruning step in the search with guess $(a,b)$. The search cannot continue indefinitely, since $1\le\sum_{u\in S}x_t(u)\le(3/4)^{t/p}k^{1-1/p}$ at every even step. Thus $\Search_p$ returns sets of density at least $\sigma$ within $O_p(\log_2(2k))$ steps. Since $\sqrt{ab}>d(S,T)/4$, their density satisfies \Cref{eq:approximation}. For $v\in T$, we exchange $L,R$ and $a,b$. All other searches also terminate by \Cref{eq:ov-size}, and $\Local_p$ returns the densest candidate.
For $\epsilon>0$ and $p\coloneq2+\epsilon$, we obtain the guarantees of \Cref{thm:overview}, with approximation ratio $O(1+1/\epsilon)$, by direct calculation from \Cref{eq:approximation,eq:ov-size}, using $C_p^{-p/2}\ge\epsilon/(\sqrt{\mathrm e}(1+\epsilon))$; see \Cref{app:quadratic-output} for details.

\section{Limitations on improving the output size}
\label{sec:limits}

For a target with at most $k$ vertices per side, our algorithm returns $O_\epsilon(k^{2+\epsilon})$ vertices with a constant approximation ratio. Can the output be reduced to $O(k)$ or $o(k^2)$ while keeping the approximation ratio constant? The threshold $k^2$ appears because we ask for approximation rather than recovery: a star with $k^2$ leaves has density $k$ and is as dense as any target with at most $k$ vertices per side, whereas a star with $o(k^2)$ leaves has density $o(k)$. We rule out output size $\phi(k)=o(k^2)$ per side in two settings:
\begin{itemize}[leftmargin=*,nolistsep]
\item \emph{Restricted queries.} If the worst-case number of queries is bounded by a function of $k$ alone, or more generally is $o(\Delta)$ for every fixed $k$, while computation between queries is unrestricted, then, on some instance and from every target vertex, even a randomized algorithm fails with probability at least $2/3$ to return a constant approximation. We choose $k$ centers per side uniformly among many disjoint stars and place the edges of a complete bipartite subgraph at uniformly chosen adjacency-list positions. We compare runs before and after this modification using the same algorithmic randomness. Unless the algorithm receives different answers or includes a target center other than the starting vertex in its output, any returned sets have density at most $\sqrt{2\phi(k)}=o(k)$, whereas the target has density $k$. By summing the success probabilities over target vertices and averaging over the random construction, we obtain one graph on which the algorithm fails with probability at least $2/3$ from every target vertex. The formal statement is \Cref{thm:queries}, proved in \Cref{app:queries}.

\item \emph{Restricted running time.} If the algorithm may read the entire graph but must run in time $f(k)|V|^{O(1)}$ for a computable function $f$ of $k$, the limitation is that we need to find not only a dense subgraph, but also a \emph{small} one. For a graph $J$ promised to contain a $k$-clique, the two copies $S,T$ of the clique in its bipartite double cover have $d(S,T)=k-1$. From returned sets $X,Y$ with $d(X,Y)=\Omega(k)$ and $|X|,|Y|\le\phi(k)$, we extract a set $U$ of $k$ vertices of $J$ spanning $\Omega\!\left({k^3}/{(k+\phi(k))}\right)$ edges.
Since the optimal $k$-vertex subgraph has $\binom{k}{2}$ edges, the approximation ratio for Densest $k$-Subgraph would be $O(1+\phi(k)/k)=o(k)$, which under the Strongish Planted Clique Hypothesis (SPCH) of \citet{ManurangsiRubinsteinSchramm2021} is impossible in this running time, even with randomization. The formal statement is \Cref{thm:quadratic}, proved in \Cref{app:spch}. Under ETH, we obtain a weaker output-size bound by the same reduction (\Cref{thm:limits}, proved in \Cref{app:eth}).
\end{itemize}

\section{Conclusion}
\label{sec:conclusion}

We give the first constant-factor local approximation algorithm for the bipartite densest subgraph problem, with output size, number of queries, and running time depending on $k$ alone (\Cref{thm:overview}). Like the density itself, the guarantee does not depend on edges outside the subgraph induced by $S\mathbin{\dot\cup}T$.

The main technical question left unanswered is the following: while \emph{sub}-quadratic output is impossible in the settings of \Cref{sec:limits}, quadratic output is not ruled out, and we leave open whether it is possible to achieve a constant-factor approximation with output of size $O(k^2)$.

Beyond improving the output size and approximation ratio, is it possible, under suitable assumptions, to ensure that the returned sets overlap with the target sets?  Such results are known in the case of conductance, i.e., the classical application of the Lov\'asz--Simonovits curve---see, for example, \citet{YangFountoulakis2023,BackDeLucaFountoulakisYang2024}.

\section*{Acknowledgements}
Guy Weissenberg was supported by the Swiss National Science Foundation grant 10004935 - Sublinear Algorithms for Non-linear Structures (Project MINT).

\section*{AI use statement}
In this work, we used generative AI tools for brainstorming, writing and editing. We have reviewed all AI-assisted work, and constructed the main arguments in the paper ourselves. We checked the LLM-generated ideas for potential plagiarism through a manual literature survey. We take responsibility for the final content of this work, including its presentation, text, claims and proofs.

\bibliographystyle{plainnat}
\bibliography{references}

\appendix

\section{Technical lemmas for the local algorithm}
\label{app:local-lemmas}

We prove the statements deferred from \Cref{sec:introduction,sec:analysis}.

\subsection{Cohesive subsets and degree guesses}
\label{app:cohesive-sets}

\begin{lemma}[Existence of cohesive subsets]
\label{lem:cohesive}
For all sets $S\subseteq L$ and $T\subseteq R$ with $e(S,T)>0$, there are cohesive subsets $S^*\subseteq S$ and $T^*\subseteq T$ satisfying $d(S^*,T^*)\ge d(S,T)$.
\end{lemma}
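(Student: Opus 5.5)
We use a peeling argument on the subgraph induced by $S\mathbin{\dot\cup}T$, choosing a pair that maximizes density. Among all pairs of nonempty subsets $S'\subseteq S$, $T'\subseteq T$ with $e(S',T')>0$, pick one maximizing $d(S',T')$. If several pairs attain the maximum, take one with $|S'|+|T'|$ minimal. This is possible because the family is finite and nonempty, as it contains $(S,T)$. Then $d(S',T')\ge d(S,T)$ holds automatically, and it remains to show that $(S',T')$ is cohesive.

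Suppose cohesiveness fails, say some $u\in S'$ has $e(u,T')<\frac{e(S',T')}{2|S'|}$. The case of a vertex in $T'$ is symmetric. Write $e\coloneq e(S',T')$, $s\coloneq|S'|$ and $t\coloneq|T'|$. If $s=1$, then $e(u,T')=e<e/2$, which is impossible. So $s\ge2$ and $S'\setminus\{u\}$ is nonempty. The pair $(S'\setminus\{u\},T')$ keeps more than $e-e/(2s)>0$ edges, so it belongs to the family. Its density is
\begin{equation*}
\frac{e-e(u,T')}{\sqrt{(s-1)t}}>\frac{e\,(1-\frac1{2s})}{\sqrt{st}\sqrt{1-1/s}}.
\end{equation*}

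The main step is the inequality $1-\frac1{2s}\ge\sqrt{1-1/s}$. Squaring gives $1-\frac1s+\frac1{4s^2}\ge1-\frac1s$, which holds. Hence the smaller pair has density strictly greater than $d(S',T')$. This contradicts maximality, and cohesiveness of $(S',T')$ follows. We do not even need the tie-breaking by $|S'|+|T'|$, since the inequality is strict. This strictness is the only delicate point: it relies on $e(u,T')$ being strictly below $e/(2s)$, together with $s\ge2$ and $e>0$.

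For the symmetric case $w\in T'$ with $e(S',w)<\frac{e}{2t}$, the same computation applies with the roles of $s$ and $t$ exchanged. We conclude that $S^*\coloneq S'$ and $T^*\coloneq T'$ are cohesive, satisfy $e(S^*,T^*)>0$, and have $d(S^*,T^*)\ge d(S,T)$.
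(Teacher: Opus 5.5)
Your proof is correct and follows essentially the same route as the paper: take a density-maximizing pair of nonempty subsets, and observe that removing a vertex $u$ with $e(u,T')<e(S',T')/(2|S'|)$ would strictly increase the density, by the inequality $\sqrt{1-1/s}\le 1-1/(2s)$. The paper argues directly (maximality gives $d(S^*\setminus\{u\},T^*)\le d(S^*,T^*)$, which rearranges to the degree bound) rather than by contradiction, and it does not need your restriction to pairs with $e(S',T')>0$, since the maximizer's density is at least $d(S,T)>0$.
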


\begin{proof}
Choose nonempty sets $S^*\subseteq S$ and $T^*\subseteq T$ maximizing $d(S^*,T^*)$. Then $d(S^*,T^*)\ge d(S,T)>0$. For $u\in S^*$, the bound $e(u,T^*)\ge e(S^*,T^*)/(2|S^*|)$ is trivial if $S^*=\{u\}$. Otherwise, by maximality, $d(S^*\setminus\{u\},T^*)\le d(S^*,T^*)$, that is,
\begin{equation*}
\frac{e(S^*,T^*)-e(u,T^*)}{\sqrt{(|S^*|-1)|T^*|}}\le\frac{e(S^*,T^*)}{\sqrt{|S^*||T^*|}}
\enspace.
\end{equation*}
Rearranging and using $\sqrt{1-t}\le1-t/2$, we get
\begin{equation*}
1-\frac{e(u,T^*)}{e(S^*,T^*)}\le\sqrt{1-\frac1{|S^*|}}\le1-\frac1{2|S^*|}
\enspace,
\end{equation*}
so $e(u,T^*)\ge e(S^*,T^*)/(2|S^*|)$. By symmetry, $e(S^*,w)\ge e(S^*,T^*)/(2|T^*|)$ for every $w\in T^*$, so $S^*,T^*$ are cohesive.
\end{proof}

\begin{proof}[Proof of \Cref{lem:degree-guesses}]
Let $a_0\coloneq e(S,T)/(2|S|)$ and $b_0\coloneq e(S,T)/(2|T|)$. Since $S,T$ are cohesive, $e(u,T)\ge a_0$ for every $u\in S$ and $e(S,w)\ge b_0$ for every $w\in T$. The guesses $a$ and $b$ are the largest powers of two not exceeding $\max(a_0,1)$ and $\max(b_0,1)$, respectively, so that
\begin{equation}
\label{eq:rounding}
a_0<2a,\qquad b_0<2b
\enspace.
\end{equation}
Degrees are integers and $a_0,b_0>0$, so $e(u,T)\ge\max(a_0,1)\ge a$ for every $u\in S$, and similarly $e(S,w)\ge b$ for every $w\in T$. By \Cref{eq:rounding},
\begin{equation*}
\sqrt{ab}>\frac{\sqrt{a_0b_0}}2=\frac{d(S,T)}4
\enspace.
\end{equation*}

For any $k$ with $|S|,|T|\le k$, $a_0,b_0\le k/2$ and hence $a,b\le k$. It remains to show $a|S|\le2bk$, as $b|T|\le2ak$ follows by symmetry. If $a=1$, then $a|S|=|S|\le k\le2bk$. If $a>1$, then $a\le a_0$ and
\begin{equation*}
a|S|\le a_0|S|=b_0|T|<2b|T|\le2bk
\enspace.
\end{equation*}
\end{proof}

\begin{corollary}[Arbitrary target sets]
\label{cor:arbitrary}
For every $\epsilon>0$ and all sets $S\subseteq L$, $T\subseteq R$ with $e(S,T)>0$ and $|S|,|T|\le k$, there are $S'\subseteq S$ and $T'\subseteq T$ with $e(S',T')\ge e(S,T)/2$ such that, from every starting vertex $v\in S'\mathbin{\dot\cup}T'$, the algorithm $\Local_{2+\epsilon}(G,v,k)$ satisfies (i) of \Cref{thm:overview} and its output $(X,Y)$ satisfies $d(X,Y)\ge\frac{\sqrt3}{32\sqrt{\mathrm e}}\frac{\epsilon}{1+\epsilon}\,d(S,T)$.
\end{corollary}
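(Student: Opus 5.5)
We need sets $S'\subseteq S$, $T'\subseteq T$ with $e(S',T')\ge e(S,T)/2$ that are cohesive and satisfy $d(S',T')\ge d(S,T)/2$. Once we have them, \Cref{thm:overview} applied to $S',T'$ (which still have at most $k$ vertices per side) gives (i) directly. It also gives $d(X,Y)\ge\frac{\sqrt3}{16\sqrt{\mathrm e}}\frac{\epsilon}{1+\epsilon}d(S',T')\ge\frac{\sqrt3}{32\sqrt{\mathrm e}}\frac{\epsilon}{1+\epsilon}d(S,T)$, which is exactly the claimed constant. Note that we cannot use \Cref{lem:cohesive} here: the maximizer it produces may contain very few of the edges. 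We therefore replace it with a minimum-degree peeling argument, as in Andersen's proof.

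The peeling works as follows. Set $\alpha\coloneq e(S,T)/(4|S|)$ and $\beta\coloneq e(S,T)/(4|T|)$, and start from $S'\coloneq S$, $T'\coloneq T$. Repeatedly remove a vertex $u\in S'$ with $e(u,T')<\alpha$, or a vertex $w\in T'$ with $e(S',w)<\beta$, until neither exists. Each vertex of $S$ is removed at most once and removes fewer than $\alpha$ edges, so all removals from $S$ together destroy fewer than $|S|\alpha=e(S,T)/4$ edges. The same bound holds for $T$. Hence $e(S',T')>e(S,T)/2>0$, and in particular $S',T'$ are nonempty.

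Next we check the density and cohesiveness of what remains. Since $|S'|\le|S|$ and $|T'|\le|T|$, we get $d(S',T')=e(S',T')/\sqrt{|S'||T'|}\ge\frac12 e(S,T)/\sqrt{|S||T|}=d(S,T)/2$. For cohesiveness, every $u\in S'$ satisfies $e(u,T')\ge\alpha=e(S,T)/(4|S|)$, and we need $e(u,T')\ge e(S',T')/(2|S'|)$. This follows if $e(S',T')/|S'|\le e(S,T)/(2|S|)$, which fails in general because $S'$ may shrink. So the thresholds must be tied to the final sets rather than the original ones, and this is the main obstacle.

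The fix is to change the selection rather than the thresholds. Among all pairs of nonempty $S'\subseteq S$, $T'\subseteq T$ with $e(S',T')\ge e(S,T)/2$, we would pick one maximizing $d(S',T')$; the peeling above shows this family is nonempty and that the maximum is at least $d(S,T)/2$. For cohesiveness, the same exchange argument as in \Cref{lem:cohesive} applies to any removal that stays in the family. The difficulty is a vertex whose removal would push the edge count below $e(S,T)/2$. To handle this, I would first try the alternative of running the peeling with thresholds relative to the current sets: remove $u$ if $e(u,T')<e(S',T')/(2|S'|)$, and symmetrically for $T'$. The loss must then be controlled by a potential argument. Each such removal does not decrease $e(S',T')/\sqrt{|S'||T'|}$, by the computation in \Cref{lem:cohesive}, so the density never drops below $d(S,T)$. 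What remains to bound is the total edge loss by $e(S,T)/2$. I expect this to go through by tracking the product $|S'||T'|$ against $e(S',T')^2$. If it does not, I would accept a weaker constant and adjust the factor $2$ in the statement accordingly.
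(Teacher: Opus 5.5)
There is a genuine gap, and it cannot be closed along your route. The object you are trying to construct, cohesive $S'\subseteq S$, $T'\subseteq T$ carrying half (or any fixed fraction) of $e(S,T)$, need not exist.

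Take $S=\{u_1,\ldots,u_D\}$ and $T=\{w_1,\ldots,w_D\}$ with $u_iw_j\in E$ exactly when $ij\le D$. Then $\deg(u_i)=\lfloor D/i\rfloor$ and $e(S,T)\ge D(\ln D-1)$. Suppose $S',T'$ are cohesive, and write $e'=e(S',T')$ and $s=|S'|$. Every $u_i\in S'$ satisfies $D/i\ge e(u_i,T')\ge e'/(2s)$. So the $s$ distinct indices of $S'$ all lie in $\{1,\ldots,\lfloor 2sD/e'\rfloor\}$, which forces $e'\le 2D$. Hence for $D>\mathrm e^5$ no cohesive subpair keeps half the edges.

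This rules out each of your repairs:
\begin{itemize}
\item Your relative-threshold peeling stops precisely at a cohesive pair. Here it keeps only $O(D)$ of roughly $D\ln D$ edges, even though the density never drops. No potential argument can bound its loss.
\item Weakening the factor $2$ does not help, since the same example defeats any fixed fraction.
\item The constrained maximizer of your last paragraph cannot be cohesive either.
\end{itemize}
Applying \Cref{thm:overview} as a black box to a subpair is therefore the wrong strategy.

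The missing idea is to give up cohesiveness and reuse the \emph{proof} of \Cref{eq:approximation} rather than its statement. The absolute-threshold peeling in your second paragraph (this is \Cref{lem:andersen-growth}) already supplies everything that proof uses.

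\begin{itemize}
\item Round $\max\{\alpha,1\}$ and $\max\{\beta,1\}$ down to powers of two $a,b$. Since $\alpha,\beta>0$ and degrees are integers, $e(u,T')\ge a$ for $u\in S'$ and $e(S',w)\ge b$ for $w\in T'$.
\item Also $a,b\le k$ because $\alpha,\beta\le k/4$, and $\sqrt{ab}>\sqrt{\alpha\beta}/2=d(S,T)/8$.
\item The growth bound \Cref{eq:local-growth} needs only these absolute degree bounds.
\item The pruning-loss bound needs only $|S'|,|T'|\le k$, $a|S'|\le 2bk$ and $b|T'|\le 2ak$. For $a>1$ you have $a|S'|\le\alpha|S|=\beta|T|<2b|T|\le 2bk$, and $a=1$ is trivial; the bound for $b|T'|$ is symmetric.
\end{itemize}
So the search with this guess, started anywhere in $S'\mathbin{\dot\cup}T'$, returns density at least $\sigma=\sqrt{3ab}/(4C_p^{p/2})>\frac{\sqrt3}{32}C_p^{-p/2}d(S,T)$. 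Then \Cref{eq:exponent-factor} gives the stated constant, and part (i) holds from every starting vertex by \Cref{thm:computation}.

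Note that the factor $32$ instead of $16$ comes from $\sqrt{ab}>d(S,T)/8$, not from $d(S',T')\ge d(S,T)/2$ as in your plan.
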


\begin{proof}
By \Cref{lem:andersen-growth}, there are $S'\subseteq S$ and $T'\subseteq T$ with $e(S',T')\ge e(S,T)/2$ such that $e(u,T')>a_0$ for every $u\in S'$ and $e(S',w)>b_0$ for every $w\in T'$, where now $a_0\coloneq e(S,T)/(4|S|)$ and $b_0\coloneq e(S,T)/(4|T|)$. Since $e(S,T)\le|S||T|$, both $a_0$ and $b_0$ are at most $k/4$. Rounding as in the proof of \Cref{lem:degree-guesses}, we choose powers of two $a,b\le k$ with $e(u,T')\ge a$ for every $u\in S'$, $e(S',w)\ge b$ for every $w\in T'$, and $\sqrt{ab}>\sqrt{a_0b_0}/2=d(S,T)/8$.

If $a>1$, then $a|S'|\le a_0|S|=b_0|T|<2b|T|\le2bk$, and if $a=1$, then $a|S'|\le k\le2bk$. By symmetry, $b|T'|\le2ak$.

In the proof of \Cref{eq:approximation} in \Cref{sec:algorithm}, we use only these properties of the target sets and of the guess $(a,b)$, so from every $v\in S'\mathbin{\dot\cup}T'$ the output has density above $\frac{\sqrt3}{32}C_p^{-p/2}d(S,T)$. By \Cref{eq:exponent-factor}, $d(X,Y)>\frac{\sqrt3}{32\sqrt{\mathrm e}}\frac{\epsilon}{1+\epsilon}\,d(S,T)$. The bounds in (i) hold for every starting vertex by \Cref{thm:computation}.
\end{proof}

\subsection{\texorpdfstring{Weak-$\ell_p$}{Weak-lp} preliminaries and pruning}
\label{app:prefix}

\begin{lemma}[Prefix sums, support size, and pruning]
\label{lem:prefix}
For a nonzero vector $x\ge0$ with $\ell\coloneq|\supp(x)|$, $p\ge2$ and $\delta>0$:
\begin{enumerate}[label=(\alph*),nolistsep]
\item The supremum in the definition of $\|x\|_{p,\infty}$ is attained at an integer $m\le\ell$, that is,
\begin{equation}
\label{eq:weak-bound}
\|x\|_{p,\infty}=\max_{1\le m\le\ell}\frac{I_x(m)}{m^{1-1/p}}
\enspace.
\end{equation}
\item If every positive coordinate of $x$ is at least $\delta$, then $\ell\le\|x\|_{p,\infty}^p/\delta^p$.
\item For $y\ge0$ and $z\coloneq\prune(\delta,y)$, $z\le y$, every positive coordinate of $z$ is at least $\delta$, and
\begin{equation}
\label{eq:pruning-error}
0\le \sum_{w\in Q}\bigl(y(w)-z(w)\bigr)\le\delta|Q|\qquad(Q\subseteq V)
\enspace.
\end{equation}
\end{enumerate}
\end{lemma}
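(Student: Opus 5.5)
For (a), the plan is to use that $I_x$ is piecewise linear and concave on $[0,\infty)$, since the increments $x_{(1)}\ge x_{(2)}\ge\cdots$ are nonincreasing. It is constant equal to $\|x\|_1$ for $s\ge\ell$.

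\begin{itemize}
\item \emph{Case $s\ge\ell$.} The ratio $I_x(s)/s^{1-1/p}=\|x\|_1/s^{1-1/p}$ is nonincreasing in $s$, because $1-1/p>0$. So no point beyond $\ell$ beats $s=\ell$.
\item \emph{Case $m\le s\le m+1$ with $0\le m<\ell$.} On this interval $I_x(s)=\alpha+\beta s$ is affine, with $\alpha=I_x(m)-m\,x_{(m+1)}$ and $\beta=x_{(m+1)}$. Consider $g(s)=(\alpha+\beta s)s^{-(1-1/p)}$. I will show $g$ is quasiconvex on $(0,\infty)$ in the relevant region, so its maximum over $[m,m+1]$ sits at an endpoint. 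A simple route is to write $g(s)=\alpha s^{-q}+\beta s^{1/p}$ with $q=1-1/p$. Here $\beta>0$. If $\alpha\ge0$, both terms are convex, since $s^{1/p}$ is concave; this needs care.
\item \emph{Cleaner alternative.} Instead, check the sign of $g'$. We have $g'(s)\propto \beta s-q(\alpha+\beta s)=\beta s/p-q\alpha$. This is increasing in $s$, so $g'$ changes sign at most once, from negative to positive. Hence $g$ is decreasing and then increasing, and the maximum on the interval is at an endpoint.
\item \emph{Near zero.} For $m=0$ we have $\alpha=0$, so $g(s)=\beta s^{1/p}$ is increasing, and the endpoint $s=1$ wins.
\end{itemize}

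Combining the cases, the supremum equals the maximum over integers $1\le m\le\ell$.

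For (b), apply the definition at $s=\ell$. Each of the $\ell$ positive coordinates is at least $\delta$, so $\ell\delta\le I_x(\ell)\le\|x\|_{p,\infty}\ell^{1-1/p}$. Therefore $\ell^{1/p}\le\|x\|_{p,\infty}/\delta$, and raising both sides to the $p$th power gives the claim.

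For (c), work coordinatewise. Either $z(w)=y(w)$, or $z(w)=0$ with $y(w)<\delta$. So $0\le y(w)-z(w)<\delta$ in the second case and $0$ otherwise; summing over $Q$ gives the bound. A positive coordinate of $z$ must have been kept, so it is at least $\delta$.

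The only step needing any thought is the endpoint argument in (a). The derivative sign computation resolves it.
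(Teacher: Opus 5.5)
Your proof is correct and follows the paper's structure: for (a) you use the same three regimes, namely $(0,1]$, the linear pieces $[m,m+1]$ and $s\ge\ell$, and your arguments for (b) and (c) are the paper's; the only difference is that you get the endpoint property on each linear piece from the sign of $g'(s)\propto\beta s/p-q\alpha$, whereas the paper observes that $I_x(s)-cs^{1-1/p}$, with $c$ the largest ratio on the piece, is convex with maximum zero. Delete the abandoned first route in your second bullet, which is wrong as stated because $\beta s^{1/p}$ is concave, not convex.
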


\Needspace{4\baselineskip}
\begin{proof}[Proof of \Cref{lem:prefix}]
We prove each item below.
\begin{enumerate}[label=(\alph*),leftmargin=*,itemsep=\smallskipamount,parsep=0pt]
\item On $(0,1]$ the ratio $I_x(s)/s^{1-1/p}=x_{(1)}s^{1/p}$ increases, and for $s>\ell$ the numerator is constant while the denominator increases. On each interval $[i-1,i]$ with $2\le i\le\ell$, $I_x$ is linear (\Cref{def:prefix}) and $s^{1-1/p}$ is concave. With $c$ the maximum of the ratio over this interval, $I_x(s)-cs^{1-1/p}$ is convex with maximum zero, and a convex function attains its maximum at an endpoint of the interval, so the ratio equals $c$ at an endpoint. Hence the supremum is attained at an integer $m\le\ell$.

\item Every positive coordinate is at least $\delta$, so $\ell\delta\le I_x(\ell)\le\|x\|_{p,\infty}\ell^{1-1/p}$ by \Cref{def:prefix}, and hence $\ell\le\|x\|_{p,\infty}^p/\delta^p$.

\item Each coordinate of $y-z$ lies in $[0,\delta)$. \qedhere
\end{enumerate}
\end{proof}

\begin{remark*}
The norm $\|x\|_{p,\infty}$ is the norm $\|x\|_{(p,\infty)}$ of \citet[Chapter~4]{BennettSharpley1988} on the weak-$\ell_p$ sequence space, the Lorentz space $\ell_{p,\infty}$. It lies between the quasinorm $\sup_i i^{1/p}x_{(i)}$ and $p/(p-1)$ times that quasinorm.
\end{remark*}

\subsection{The interpolation inequality}
\label{app:interpolation}

\begin{proposition}
\label{prop:interpolation}
For $p>2$, a vector $x\ge0$ with nonempty support $U$ and $\ell\coloneq|U|$, and a nonnegative $U\times W$ matrix $B$ with column-sum bound $B_\infty>0$ (\Cref{def:column-sum}), if $\gamma>0$ and a set $Q\subseteq W$ of size $m>0$ satisfy
\begin{equation}
\label{eq:prefix-hypothesis}
\sum_{u\in P_i(x)}\sum_{w\in Q}B_{uw}\le\gamma\sqrt{im}
\qquad(1\le i\le\ell)
\enspace,
\end{equation}
then
\begin{equation}
\label{eq:power-p}
\sum_{w\in Q}(xB)(w)
\le C_p\,\gamma^{2/p}B_\infty^{1-2/p}\|x\|_{p,\infty}m^{1-1/p}
\enspace.
\end{equation}
\end{proposition}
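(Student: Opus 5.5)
Write $u_1,\ldots,u_\ell$ for the vertices of $U$ in the prefix order of \Cref{def:prefix}, so that $x(u_i)=x_{(i)}$. For $0\le i\le\ell$ let $E_i\coloneq\sum_{j\le i}\sum_{w\in Q}B_{u_jw}$ be the cumulative weight from the first $i$ vertices into $Q$, with $E_0=0$. Then the left side of \Cref{eq:power-p} is $\sum_{i=1}^\ell x_{(i)}(E_i-E_{i-1})$. We combine two upper bounds on $E_i$: the hypothesis \Cref{eq:prefix-hypothesis} gives $E_i\le\gamma\sqrt{im}$, and the column-sum bound gives $E_i\le B_\infty m$. Setting $g(i)\coloneq\min\{\gamma\sqrt{im},B_\infty m\}$, we have $E_i\le g(i)$, and $g$ is nondecreasing and concave in $i$. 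The crossover point is $i^*\coloneq B_\infty^2m/\gamma^2$.

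The main step is summation by parts, done twice. In the first pass we write
\[
\sum_i x_{(i)}(E_i-E_{i-1})=\sum_{i=1}^{\ell}(x_{(i)}-x_{(i+1)})E_i,
\]
with $x_{(\ell+1)}=0$. The coefficients $x_{(i)}-x_{(i+1)}$ are nonnegative, so we may replace $E_i$ by $g(i)$. In the second pass we sum back by parts to get $\sum_i x_{(i)}(g(i)-g(i-1))$ with $g(0)=0$, and then once more in the form
\[
\sum_i I_x(i)\bigl(\Delta g(i)-\Delta g(i+1)\bigr)+I_x(\ell)\,\Delta g(\ell+1)\cdots,
\]
where $\Delta g(i)=g(i)-g(i-1)$. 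Concavity of $g$ makes these second differences nonnegative, so each $I_x(i)$ can be bounded by $\|x\|_{p,\infty}i^{1-1/p}$ (\Cref{def:prefix}). After undoing the summation by parts, the left side is at most $\|x\|_{p,\infty}\sum_i \phi'(i)\,\Delta g(i)$ with $\phi(s)=s^{1-1/p}$. The cleanest version replaces the sums by integrals: the bound becomes $\|x\|_{p,\infty}\int_0^\infty \phi'(s)\,g'(s)\,ds$ up to discretization. To handle this rigorously I would use the interpolated $I_x$ and continuous $g$, or check the endpoint terms directly.

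The integral is now explicit. Since $g'(s)=\tfrac12\gamma\sqrt m\,s^{-1/2}$ for $s<i^*$ and $g'(s)=0$ afterwards, we get
\[
\int_0^{i^*}\Bigl(1-\tfrac1p\Bigr)s^{-1/p}\cdot\tfrac12\gamma\sqrt m\,s^{-1/2}\,ds
=\Bigl(1-\tfrac1p\Bigr)\tfrac12\gamma\sqrt m\cdot\frac{(i^*)^{1/2-1/p}}{1/2-1/p}.
\]
This integral converges at $0$ because $1/2+1/p<1$. It is finite only because $p>2$, which is exactly where the factor $(p-1)/(p-2)=C_p$ comes from: $(1-1/p)/(2(1/2-1/p))=(p-1)/(p-2)$. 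Substituting $i^*=B_\infty^2m/\gamma^2$ gives
\[
\gamma\sqrt m\,(B_\infty^2m/\gamma^2)^{1/2-1/p}=\gamma^{2/p}B_\infty^{1-2/p}m^{1-1/p},
\]
which matches \Cref{eq:power-p}.

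The main obstacle is making the discrete-to-continuous step rigorous without losing a constant. The two places to watch are the bound $g(i)-g(i-1)$ against the derivative, and the fact that $i^*$ may be non-integer or may exceed $\ell$. I would handle both by working throughout with the piecewise-linear interpolations $I_x(s)$ and $E(s)$, since $I_x(s)\le\|x\|_{p,\infty}s^{1-1/p}$ holds for real $s$. I would write the left side as a Stieltjes integral $\int x\,dE$ and integrate by parts against $dI_x$: it equals $\int_0^\ell E'(s)\,dI_x(s)$-type expressions. From there the bound reduces to comparing with $g$ via the monotone rearrangement. If $i^*>\ell$, the truncated integral is only smaller, so the bound still holds.
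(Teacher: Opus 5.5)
Your proposal is essentially the paper's proof, with one wrong intermediate line and a rigor step that closes more easily than you expect. The shared steps are:
\begin{itemize}
\item summation by parts against $g(s)=\min\{\gamma\sqrt{ms},B_\infty m\}$;
\item the bound $I_x(s)\le\|x\|_{p,\infty}s^{1-1/p}$, applied against the nonnegative weights that come from concavity of $g$;
\item the same final integral yielding $C_p$.
\end{itemize}
The paper integrates by parts once, $\int_0^r I_x'(s)s^{-1/2}\,ds=I_x(r)/\sqrt r+\frac12\int_0^r I_x(s)s^{-3/2}\,ds$. This is the continuous form of your second-difference step, with the boundary term coming from the kink of $g$ at $r=i^*$. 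Bounding $I_x$ there and integrating back gives exactly your $\|x\|_{p,\infty}\int_0^\infty\phi'(s)g'(s)\,ds$.

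\textbf{The wrong line.} You assert that the left side is at most $\|x\|_{p,\infty}\sum_i\phi'(i)\,\Delta g(i)$. This is false as stated. Undoing the summation by parts gives $\sum_i(\phi(i)-\phi(i-1))\,\Delta g(i)$, and $\phi(i)-\phi(i-1)\ge\phi'(i)$, so your version is too small. For a concrete failure, take $x=\one_v$, $Q=\{w\}$ and $B_{vw}=\min\{\gamma,B_\infty\}$. Then the left side is $g(1)$, while your bound is $(1-1/p)\,g(1)$.

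\textbf{The discretization step loses nothing.} Since $I_x'=x_{(i)}$ on $(i-1,i)$ and $I_x'=0$ beyond $\ell$, one has exactly
\[
\sum_{i=1}^{\ell}x_{(i)}\bigl(g(i)-g(i-1)\bigr)=\int_0^\infty I_x'(s)g'(s)\,ds,
\]
and this is how the paper passes to the continuum. If you prefer to stay discrete, use Chebyshev's integral inequality on each $[i-1,i]$; it applies because both $\phi'$ and $g'$ are nonincreasing. It gives $(\phi(i)-\phi(i-1))(g(i)-g(i-1))\le\int_{i-1}^i\phi'(s)g'(s)\,ds$, so the corrected discrete sum is bounded by your integral with no constant lost.
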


\begin{proof}[Proof of \Cref{prop:interpolation}]
We write $U=\{u_1,\ldots,u_\ell\}$ in the order of \Cref{def:prefix} and set $c_i\coloneq\sum_{w\in Q}B_{u_i,w}$, so that $\sum_{w\in Q}(xB)(w)=\sum_{i=1}^{\ell}x_{(i)}c_i$. By \Cref{eq:prefix-hypothesis} and the column-sum bound,
\begin{equation*}
\sum_{j\le i}c_j\le g(i)\qquad(1\le i\le\ell),
\qquad\text{where}\qquad
g(s)\coloneq\min\{\gamma\sqrt{ms},\,B_\infty m\}\quad(s\ge0)
\enspace.
\end{equation*}
Then $g(0)=0$, and the two terms defining $g$ are equal at $r\coloneq B_\infty^2m/\gamma^2$, so $g'(s)=\gamma\sqrt m/(2\sqrt s)$ for $0<s<r$ and $g'(s)=0$ for $s>r$. By summation by parts, with $x_{(\ell+1)}=0$,
\begin{equation*}
\sum_{i=1}^{\ell}x_{(i)}c_i
=\sum_{i=1}^{\ell}(x_{(i)}-x_{(i+1)})\sum_{j\le i}c_j
\le\sum_{i=1}^{\ell}(x_{(i)}-x_{(i+1)})g(i)
=\sum_{i=1}^{\ell}x_{(i)}\bigl(g(i)-g(i-1)\bigr)
\enspace.
\end{equation*}
Since $I_x$ is linear with slope $x_{(i)}$ on $[i-1,i]$ and constant beyond $\ell$ (\Cref{def:prefix}), we can write the last sum as an integral. Integrating by parts, with a zero boundary term at $0$ because $I_x(s)=x_{(1)}s$ for $0\le s\le1$, we obtain
\begin{equation*}
\begin{aligned}
\sum_{w\in Q}(xB)(w)
&\le\int_0^\infty I_x'(s)g'(s)\,ds
&=\frac{\gamma\sqrt m}2\int_0^r\frac{I_x'(s)}{\sqrt s}\,ds
&=\frac{\gamma\sqrt m}2\left[\frac{I_x(r)}{\sqrt r}
+\frac12\int_0^r\frac{I_x(s)}{s^{3/2}}\,ds\right]
\enspace.
\end{aligned}
\end{equation*}
By \Cref{def:prefix}, $I_x(s)\le\|x\|_{p,\infty}s^{1-1/p}$ for every real $s>0$. With $\tau\coloneq1/2-1/p>0$, the bracketed expression is at most
\begin{equation*}
\|x\|_{p,\infty}\left(r^\tau+\frac12\int_0^rs^{\tau-1}\,ds\right)
=\|x\|_{p,\infty}r^\tau\Bigl(1+\frac1{2\tau}\Bigr)
=2C_p\,\|x\|_{p,\infty}r^\tau
\enspace.
\end{equation*}
Finally
\begin{equation*}
\gamma\sqrt m\,r^\tau
=\gamma\sqrt m\Bigl(\frac{B_\infty^2m}{\gamma^2}\Bigr)^{1/2-1/p}
=\gamma^{2/p}B_\infty^{1-2/p}m^{1-1/p}
\enspace,
\end{equation*}
and \Cref{eq:power-p} holds.
\end{proof}

The restriction $p>2$ is needed only for $\tau>0$, so that the integral of $s^{\tau-1}$ converges at zero. For $p=2$, the integrand is $1/s$, whose integral over $[1,r]$ is $\ln r$ for $r\ge1$, and we must bound the small prefixes separately, as in \Cref{app:log-factor}.

\paragraph{The prefix-sum bound.}
The bound \Cref{eq:local-summation} in \Cref{sec:prefix-sweeps} is the case of \Cref{prop:interpolation} in which $B$ is the adjacency matrix $A$ of a subgraph of $G$ with column-sum bound $B_\infty$, $x\ge0$ is supported on $L$, $y=xA$, and $Q=P_m(y)$ is a prefix of $y$. If $d(P_i(x),P_m(y))\le\sigma$ for every prefix $P_i(x)$, then
\begin{equation*}
\sum_{u\in P_i(x)}\sum_{w\in P_m(y)}A_{uw}\le e(P_i(x),P_m(y))\le\sigma\sqrt{im}
\qquad(1\le i\le\ell)
\enspace,
\end{equation*}
which is \Cref{eq:prefix-hypothesis} with $\gamma=\sigma$, and \Cref{eq:power-p} is \Cref{eq:local-summation}.

\subsection{The one-step search bound}
\label{app:step}

\begin{lemma}[Prefix sums after peeling]
\label{lem:step}
For $p>2$, $a,b,\sigma>0$, a vector $x\ge0$ with nonempty support $U\subseteq L$, the neighborhood graph $H=(U\mathbin{\dot\cup}W,F)$ of $U$ with $W\coloneq N_G(U)$, and a vector $y$ returned by $\Peel(H,x,a,b)$, every nonempty $Q\subseteq W$ with $d(P_i(x),Q)\le\sigma$ for all $1\le i\le|U|$ satisfies
\begin{equation}
\label{eq:one-step}
\sum_{w\in Q}y(w)\le\left(\frac{4C_p^p\sigma^2}{b^2}\right)^{1/p}\|x\|_{p,\infty}|Q|^{1-1/p}
\enspace.
\end{equation}
The same bound holds for $z\coloneq\prune(\delta,y)$, for every $\delta>0$.
For $U\subseteq R$, the same holds with $a$ and $b$ exchanged, so that $a^2$ replaces $b^2$ in \Cref{eq:one-step}.
\end{lemma}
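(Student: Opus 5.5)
The plan is to apply \Cref{prop:interpolation} to a weighted version of the selected-edge matrix. Since $\Peel$ returned a vector, every vertex was removed, and $y=xB$ where $B\coloneq(2/b)A$ and $A$ is the $U\times W$ adjacency matrix of the selected edges $F'$ (line~\ref{line:peel-vector}). As noted after \Cref{alg:peel}, an edge is selected exactly when its $W$-endpoint is removed before its $U$-endpoint. When $w$ is removed it has at most $b/2$ remaining edges, and those are precisely the ones selected. Hence every column of $A$ sums to at most $b/2$, so $B$ has column-sum bound $B_\infty=1$ in the sense of \Cref{def:column-sum}.

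Next we check hypothesis \Cref{eq:prefix-hypothesis} with $\gamma\coloneq2\sigma/b$. For each $i$, the entries $A_{uw}$ with $u\in P_i(x)$ and $w\in Q$ count a subset of the edges of $G$ between $P_i(x)$ and $Q$. So $\sum_{u\in P_i(x)}\sum_{w\in Q}A_{uw}\le e(P_i(x),Q)=d(P_i(x),Q)\sqrt{i|Q|}\le\sigma\sqrt{i|Q|}$. Multiplying by $2/b$ gives \Cref{eq:prefix-hypothesis} with $m=|Q|$. (Coordinates of $y$ outside the selected-edge neighbourhood are zero, so restricting $B$ to $U\times W$ is harmless.)

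\Cref{prop:interpolation} then yields
\[
\sum_{w\in Q}y(w)\le C_p\,(2\sigma/b)^{2/p}\|x\|_{p,\infty}|Q|^{1-1/p}=\Bigl(\frac{4C_p^p\sigma^2}{b^2}\Bigr)^{1/p}\|x\|_{p,\infty}|Q|^{1-1/p},
\]
which is \Cref{eq:one-step}. For $z=\prune(\delta,y)$, \Cref{lem:prefix}(c) gives $0\le z\le y$ coordinatewise, so $\sum_{w\in Q}z(w)\le\sum_{w\in Q}y(w)$ and the same bound holds. For $U\subseteq R$, the roles of $L,R$ and of $a,b$ swap: $\Peel$ is called with $(b,a)$, the weight becomes $2/a$, and $a^2$ replaces $b^2$.

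The only step needing care is the column-sum bound, namely confirming that the selected edges at $w$ are exactly its remaining edges at removal time, so there are at most $b/2$ of them. Everything else is substitution into \Cref{prop:interpolation}.
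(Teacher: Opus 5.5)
Your proposal is correct and follows the paper's proof essentially verbatim: you weight the selected edges by $2/b$ to get column-sum bound $B_\infty=1$, verify \Cref{eq:prefix-hypothesis} with $\gamma=2\sigma/b$ via $e(P_i(x),Q)\le\sigma\sqrt{i|Q|}$, and apply \Cref{prop:interpolation}. You then transfer the bound to $z$ using $0\le z\le y$, exactly as the paper does.
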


\begin{proof}[Proof of \Cref{lem:step}]
We write $F'$ for the selected edges and $B$ for the $U\times W$ matrix with $B_{uw}\coloneq2/b$ for $uw\in F'$ and $B_{uw}\coloneq0$ otherwise, so that $y=xB$. Since $\Peel$ selects at most $b/2$ edges incident to each $w\in W$,
\begin{equation*}
\sum_{u\in U}B_{uw}\le1\qquad(w\in W)
\enspace.
\end{equation*}
For every prefix $P_i(x)$,
\begin{equation*}
\sum_{u\in P_i(x)}\sum_{w\in Q}B_{uw}
\le\frac2b\,e(P_i(x),Q)
\le\frac{2\sigma}b\sqrt{i|Q|}
\enspace.
\end{equation*}
We apply \Cref{prop:interpolation} with $B_\infty=1$ and $\gamma=2\sigma/b$ to obtain \Cref{eq:one-step}. Since $0\le z\le y$, the same bound holds for $z$.
\end{proof}

With $\sigma$ from \Cref{eq:parameters} and $Q=P_m(y)$, \Cref{eq:one-step} is \Cref{eq:ov-selected-bound}.

\begin{lemma}[Scanning and returned sets]
\label{lem:step-outputs}
The following holds for one step of $\Search_p$ with peeling parameters $a,b>0$ and density threshold $\sigma>0$ from a nonnegative vector $x$ with nonempty support $U\subseteq L$, where $W\coloneq N_G(U)$.
\begin{enumerate}[label=(\alph*),nolistsep]
\item If no star is returned, then the adjacency lists read in line~\ref{line:search-list} contain every edge of the neighborhood graph $H$ of $U$ (line~\ref{line:search-H}), and $\deg(u)<\sigma^2$ for every $u\in U$.
\item If $\Peel$ returns sets $U_*\subseteq U$ and $Y\subseteq W$, both are nonempty and
\begin{equation*}
|Y|\le\frac{a|U_*|}{2b}+1,\qquad d(U_*,Y)>\sqrt{ab/8}
\enspace.
\end{equation*}
\end{enumerate}
For $U\subseteq R$, the same conclusions hold with $a$ and $b$ exchanged.
\end{lemma}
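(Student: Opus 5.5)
For (a), I would go through the scan in line~\ref{line:search-list}. If no star is returned, then every $u\in U$ gives $|N_u|<\lceil\sigma^2\rceil$. Because the reading of a list stops after $\lceil\sigma^2\rceil$ entries, the only way to obtain fewer entries is to reach the end of the list, i.e.\ to receive $\bot$. So each adjacency list of $u\in U$ has been read completely and $N_u$ is the full neighborhood of $u$. Then $\deg(u)=|N_u|\le\lceil\sigma^2\rceil-1<\sigma^2$. The edge set $F$ of $H$ consists of the edges incident to $U$, and $W=N_G(U)=\bigcup_{u\in U}N_u$, so all of $H$ has been read.

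For (b), I would rely on the invariants of $\Peel$ when the loop in line~\ref{line:peel-while} stops with $U_*\mathbin{\dot\cup}W_*\ne\varnothing$. In that case every $u\in U_*$ has $e_H(u,W_*)>a/4$ and every $w\in W_*$ has $e_H(U_*,w)>b/2$. First, nonemptiness. If $U_*$ were empty, any $w\in W_*$ would have $0\le b/2$ edges and could still be removed. If $W_*$ were empty, any $u\in U_*$ would have $0\le a/4$ edges. Hence both sets are nonempty. It follows that $\lceil a|U_*|/(2b)\rceil\ge1$, so $Y\neq\varnothing$. The size bound is immediate from line~\ref{line:peel-Y}: $|Y|\le\lceil a|U_*|/(2b)\rceil\le a|U_*|/(2b)+1$.

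The density bound is the main step, and I would split into the two cases of the minimum. If $Y=W_*$, double counting edges between $U_*$ and $W_*$ gives $e\ge (a/4)|U_*|$ and $e\ge (b/2)|W_*|$, both strict. Multiplying the two, $e^2>(ab/8)|U_*||W_*|$. Otherwise $|Y|=\lceil a|U_*|/(2b)\rceil$, and every vertex of $Y$ still has more than $b/2$ neighbors in $U_*$, so $e(U_*,Y)>(b/2)|Y|$. Then
\[
d(U_*,Y)>\frac{(b/2)|Y|}{\sqrt{|U_*||Y|}}=\frac b2\sqrt{\frac{|Y|}{|U_*|}}\ge\frac b2\sqrt{\frac{a}{2b}}=\sqrt{\frac{ab}{8}},
\]
where I used $|Y|\ge a|U_*|/(2b)$. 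Edges of $H$ between $U_*$ and $Y$ are edges of $G$, so $d\ge d_H$ and the bound transfers to $G$.

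For $U\subseteq R$, the algorithm calls $\Peel$ with $(a_t,b_t)=(b,a)$, so the same argument applies with $a$ and $b$ exchanged. The only delicate point is the truncated case of $Y$, where the lower bound on $|Y|$ from the ceiling has to compensate for losing the $U_*$-side degree bound; the computation above shows it suffices.
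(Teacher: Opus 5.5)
Your proof is correct and follows the paper's argument essentially step for step: (a) holds because each list is read to its end, and (b) follows from the loop invariants of $\Peel$, with the same two-case density computation for $Y=W_*$ and for the truncated $Y$. The only cosmetic difference is that you transfer the bound via $d\ge d_H$, whereas the paper invokes (a) to get $d_H=d$; both are valid.
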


\Needspace{4\baselineskip}
\begin{proof}[Proof of \Cref{lem:step-outputs}]
We prove each item below.
\begin{enumerate}[label=(\alph*),leftmargin=*,itemsep=\smallskipamount,parsep=0pt]
\item If no star is returned, every adjacency list of a vertex in $U$ is read to its end (line~\ref{line:search-star}). Thus these lists contain every edge incident to $U$, and $\deg(u)\le\lceil\sigma^2\rceil-1<\sigma^2$ for every $u\in U$.

\item $\Search_p$ calls $\Peel$ only if no star is returned, so by (a), $d_H(P,Q)=d(P,Q)$ for all $P\subseteq U$ and $Q\subseteq W$. Both $U_*$ and $W_*$ are nonempty, since an isolated vertex would have been removed. $Y$ is $W_*$ or consists of $\lceil a|U_*|/(2b)\rceil$ vertices of $W_*$ (line~\ref{line:peel-Y}), so $Y$ is nonempty and
\begin{equation*}
|Y|\le\left\lceil\frac{a|U_*|}{2b}\right\rceil\le\frac{a|U_*|}{2b}+1
\enspace.
\end{equation*}

If $Y=W_*$, then $e_H(U_*,Y)>(a/4)|U_*|$ and $e_H(U_*,Y)>(b/2)|Y|$ by the loop condition (line~\ref{line:peel-while}), so
\begin{equation*}
d_H(U_*,Y)^2=\frac{e_H(U_*,Y)}{|U_*|}\cdot\frac{e_H(U_*,Y)}{|Y|}>\frac a4\cdot\frac b2=\frac{ab}8
\enspace.
\end{equation*}

If $Y\ne W_*$, then $|Y|=\lceil a|U_*|/(2b)\rceil$, and every vertex of $Y$ has more than $b/2$ neighbors in $U_*$, so
\begin{equation*}
d_H(U_*,Y)>\frac b2\sqrt{\frac{|Y|}{|U_*|}}
\ge\frac b2\sqrt{\frac a{2b}}=\sqrt{ab/8}
\enspace.\qedhere
\end{equation*}
\end{enumerate}
\end{proof}

\section{Completing the analysis of the local algorithm}
\label{app:local}

We complete the proof of \Cref{thm:overview} and prove \Cref{cor:exponent}.

\subsection{Search invariants and termination}
\label{app:call}

\begin{lemma}
\label{lem:call}
For $p>2$, $k\ge1$, $a,b\in[1,k]$ and a vertex $v\in L$ of positive degree, $\Search_p(G,v,k,a,b)$ with the parameters in \Cref{eq:parameters} satisfies the following.
\begin{enumerate}[label=(\alph*),nolistsep]
\item Every vector $x_t$ satisfies $\|x_t\|_{p,\infty}\le K_t^{1/p}$.
\item Every positive coordinate of $x_t$ is at least $\delta$, and $|\supp(x_t)|\le K_t/\delta^p\le(4k)^p(3/4)^t$.
\item The search terminates and never returns $\bot$ from the sweep.
\item In each step, the search reads at most $\lceil\sigma^2\rceil\le ab$ entries of each adjacency list. Whenever the search returns sets $X,Y$, they satisfy $d(X,Y)\ge\sigma$ and $|X|,|Y|\le(4k)^p$.
\end{enumerate}
For $v\in R$, the same holds for the search with $L$ and $R$ exchanged.
\end{lemma}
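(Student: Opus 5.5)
I will prove the four items by induction on $t$, following the structure of $\Search_p$, and treat $v\in R$ by exchanging $L,R$ and $a,b$ throughout.

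\emph{Items (a) and (b).} For (a), the base case is $x_0=\one_v$: then $I_{x_0}(s)=\min\{s,1\}$, so $\|x_0\|_{p,\infty}=1=K_0^{1/p}$. For the induction step, $x_{t+1}$ is only ever assigned at line~\ref{line:search-accept}, after the test $\|z\|_{p,\infty}\le K_{t+1}^{1/p}$ has passed.

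For the first half of (b), $x_0$ has its only positive coordinate equal to $1\ge\delta$. This uses $\delta=\max\{1,a/b\}/(4k)\le 1/4$, which holds since $a\le k$. Every later $x_{t+1}=\prune(\delta,y_{t+1})$ has positive coordinates at least $\delta$ by \Cref{lem:prefix}(c).

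The support bound then follows from \Cref{lem:prefix}(b) combined with (a), giving $|\supp(x_t)|\le K_t\delta^{-p}$. To finish, I check $K_t\delta^{-p}\le(4k)^p(3/4)^t$ by cases on parity.
\begin{itemize}
\item For even $t$, use $\delta\ge1/(4k)$.
\item For odd $t$, we have $K_t=(a/b)(3/4)^t$. If $a\le b$ this is at most $(3/4)^t$ and $\delta\ge 1/(4k)$ suffices. If $a>b$, then $\delta^{-p}=(4kb/a)^p$, and $(a/b)(b/a)^p\le1$ because $p>2$.
\end{itemize}

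\emph{Item (c).} Termination follows from (b). Since $(4k)^p(3/4)^t<1$ for $t$ large enough, the support would be empty. But each $x_t$ is nonzero: a zero $z$ makes the search return at line~\ref{line:search-zero}. Hence the loop runs only finitely many iterations, and each iteration does finite work.

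For the sweep never returning $\bot$, this is the step needing the most care. Suppose line~\ref{line:search-m} is reached, and suppose for contradiction that $d_H(P_i(x_t),P_m(z))\le\sigma$ for all $i$.
\begin{enumerate}
\item By \Cref{lem:step-outputs}(a), $d_H=d$ on these sets, since $H$ contains every edge incident to $\supp(x_t)$.
\item \Cref{lem:step} then gives $I_z(m)\le (4C_p^p\sigma^2/b_t^2)^{1/p}\|x_t\|_{p,\infty}m^{1-1/p}$.
\item With $\sigma$ from \Cref{eq:parameters} we have $4C_p^p\sigma^2/b_t^2=3a_t/(4b_t)$. Together with (a), the right-hand side is at most $(3a_t/(4b_t))^{1/p}K_t^{1/p}m^{1-1/p}$.
\item I verify $(3a_t/(4b_t))K_t=K_{t+1}$ from the parity cases in \Cref{eq:parameters}. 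For $t$ even, $(3a/(4b))(3/4)^t=(a/b)(3/4)^{t+1}$. For $t$ odd, $(3b/(4a))(a/b)(3/4)^t=(3/4)^{t+1}$.
\end{enumerate}
This yields $I_z(m)\le K_{t+1}^{1/p}m^{1-1/p}$, contradicting the choice of $m$. The same applies to every integer $m$, so the existence of $m$ is also guaranteed: by \Cref{lem:prefix}(a), a failed test forces a violation at some integer $m\le|\supp(z)|$, so $P_m(z)$ is well defined.

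\emph{Item (d).} The bound $\lceil\sigma^2\rceil\le ab$ follows from $\sigma^2=3ab/(16C_p^p)<ab$ together with $ab$ being an integer. Returned sets arise in three ways.
\begin{itemize}
\item \emph{Stars.} These have $\lceil\sigma^2\rceil$ leaves, so density at least $\sigma$. Their size is at most $ab\le k^2\le(4k)^p$.
\item \emph{Sweeps.} These have density greater than $\sigma$. The first set is a prefix of $x_t$, bounded by (b). The second is a prefix of $z$. The bound $|\supp(z)|\le(4k)^p$ needs an argument, since $z$ failed the test. I take the shortest violating $m$: then $I_z(m-1)\le K_{t+1}^{1/p}(m-1)^{1-1/p}$, and the sweep inequality gives $I_z(m)\le(3a_t/(4b_t))^{1/p}\|x_t\|_{p,\infty}m^{1-1/p}$. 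Since $z_{(i)}\ge\delta$ for $i\le m$, we get $m\delta\le I_z(m)\le K_{t+1}^{1/p}m^{1-1/p}$, so $m\le K_{t+1}\delta^{-p}\le(4k)^p$ by the parity computation in (b).
\item \emph{Peeling.} These have density greater than $\sqrt{ab/8}>\sigma$ by \Cref{lem:step-outputs}(b). The set $U_*$ is bounded by (b). For $Y$, we have $|Y|\le a_t|U_*|/(2b_t)+1$; combining $|U_*|\le K_t\delta^{-p}$ with the parity of $t$ should give $|Y|\le(4k)^p$.
\end{itemize}
The main obstacle I expect is this last peeling estimate for $|Y|$. There the factor $a_t/b_t$ must be absorbed by the slack in $(4k)^p(3/4)^t$ and in $\delta^{-p}$, which needs a case analysis on whether $a\le b$ and on the parity of $t$.
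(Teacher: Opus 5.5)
Items (a)--(c) are correct and follow the paper's proof. Your parity case split for $K_t\delta^{-p}\le(4k)^p(3/4)^t$ is equivalent to the paper's single observation $K_t\le4k\delta(3/4)^t$, since $4k\delta=\max\{1,a/b\}$. The problems are in (d): one step is wrong as written and one is missing.

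\textbf{The sweep prefix size.} Your bound on $|P_m(z)|$ is incorrect. The index $m$ is chosen so that $I_z(m)>K_{t+1}^{1/p}m^{1-1/p}$. So your chain $m\delta\le I_z(m)\le K_{t+1}^{1/p}m^{1-1/p}$ asserts the negation of the defining property of $m$. Likewise, the ``sweep inequality'' $I_z(m)\le(3a_t/(4b_t))^{1/p}\|x_t\|_{p,\infty}m^{1-1/p}$ is unavailable in this branch. Here some prefix of $x_t$ does have density above $\sigma$ against $P_m(z)$, which is exactly what you proved in (c).

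You wrote down the right ingredient, minimality of $m$, but it has to be applied at $m-1$. By \Cref{lem:prefix}(a), $m\le|\supp(z)|$, so $z_{(m-1)}\ge\delta$. Hence $(m-1)\delta\le I_z(m-1)\le K_{t+1}^{1/p}(m-1)^{1-1/p}$, which gives $m\le K_{t+1}\delta^{-p}+1\le\frac34(4k)^p+1\le(4k)^p$. The slack $(3/4)^{t+1}\le3/4$ and the bound $(4k)^p\ge16$ are genuinely needed to absorb the $+1$. You only get $m\le K_{t+1}\delta^{-p}+1$, not $m\le K_{t+1}\delta^{-p}$. Also, $|\supp(z)|$ itself need not be at most $(4k)^p$, since $z$ failed the test; only $m$ is controlled.

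\textbf{The peeling output size.} The bound $|Y|\le(4k)^p$ is left unproved, and no case analysis on $a\le b$ is needed. The identity you already verified in (c), $(3a_t/(4b_t))K_t=K_{t+1}$, says exactly that the coefficient $a_t/(2b_t)$ equals $2K_{t+1}/(3K_t)$. Using the arithmetic bound $K_s\delta^{-p}\le(4k)^p(3/4)^s$ from (b) with $s=t+1$, you get
$|Y|\le\frac{2K_{t+1}}{3K_t}\cdot\frac{K_t}{\delta^p}+1=\frac{2K_{t+1}}{3\delta^p}+1\le\frac12(4k)^p+1\le(4k)^p$.

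\textbf{A minor point.} The lemma allows real $a,b\in[1,k]$, so do not rely on integrality of $ab$ for $\lceil\sigma^2\rceil\le ab$. Instead use $\sigma^2<ab/8$ (from $C_p^p>2$) and $ab\ge1$, which give $\lceil\sigma^2\rceil\le\lceil ab/8\rceil\le ab$.
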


\Needspace{4\baselineskip}
\begin{proof}[Proof of \Cref{lem:call}]
We prove each item below.
\begin{enumerate}[label=(\alph*),leftmargin=*,itemsep=\smallskipamount,parsep=0pt]
\item $\|x_0\|_{p,\infty}=1=K_0^{1/p}$, and every later $x_t$ satisfies the weak-norm test $\|x_t\|_{p,\infty}\le K_t^{1/p}$ (line~\ref{line:search-test}).

\item Positive coordinates are at least $\delta$ at $t=0$, since $x_0=\one_v$ and $\delta=\max\{1,a/b\}/(4k)\le1/4$, and at later steps by pruning (\Cref{lem:prefix}(c)). By (a) and \Cref{lem:prefix}(b), $|\supp(x_t)|\le\|x_t\|_{p,\infty}^p/\delta^p\le K_t/\delta^p$. Since $\delta\ge1/(4k)$ and $K_t\le4k\delta(3/4)^t$ by \Cref{eq:parameters},
\begin{equation}
\label{eq:support-bound}
\frac{K_t}{\delta^p}
\le4k\delta^{1-p}\Bigl(\frac34\Bigr)^t
\le(4k)^p\Bigl(\frac34\Bigr)^t
\enspace.
\end{equation}

\item Every vector $x_t$ is nonzero (line~\ref{line:search-zero}), so, by (b), $1\le|\supp(x_t)|\le(4k)^p(3/4)^t$. The upper bound tends to zero, so the search computes only finitely many vectors $x_t$ and terminates.

For the second claim of (c), we show that in the sweep at step $t$ we find $i$ with $d(P_i(x_t),P_m(z))>\sigma$, where $z=\prune(\delta,y_{t+1})$ and $P_m(z)$ is the prefix of line~\ref{line:search-m}. Otherwise, for even $t$, by \Cref{lem:step} with $Q=P_m(z)$ and by (a),
\begin{equation*}
I_z(m)\le\left(\frac{4C_p^p\sigma^2}{b^2}\right)^{1/p}K_t^{1/p}m^{1-1/p}
\enspace.
\end{equation*}
By \Cref{eq:parameters}, $4C_p^p\sigma^2=3ab/4$, so the factor is $(3a/(4b))^{1/p}=(K_{t+1}/K_t)^{1/p}$. For odd $t$, \Cref{eq:one-step} holds with $a^2$ in place of $b^2$, and the factor is $(3b/(4a))^{1/p}=(K_{t+1}/K_t)^{1/p}$. In both cases, $I_z(m)\le K_{t+1}^{1/p}m^{1-1/p}$, contradicting the choice of $m$. The graph $H$ is the neighborhood graph of $\supp(x_t)$ (line~\ref{line:search-H}), so $d_H(P_i(x_t),P_m(z))=d(P_i(x_t),P_m(z))>\sigma$, and the sets $P_i(x_t),P_m(z)$ are returned by the sweep.

\item In each step, the search reads at most $\lceil\sigma^2\rceil$ entries of each adjacency list (line~\ref{line:search-list}). By Bernoulli's inequality, $C_p^p=(1+\frac1{p-2})^p\ge1+\frac p{p-2}>2$, so $\sigma^2=3ab/(16C_p^p)<ab/8$, and since $ab\ge1$, it follows that $\lceil\sigma^2\rceil\le\lceil ab/8\rceil\le ab\le k^2\le(4k)^p$. A returned star therefore has at most $(4k)^p$ leaves, and its density is $\sqrt{\lceil\sigma^2\rceil}\ge\sigma$. The sets returned by the sweep have density above $\sigma$ (line~\ref{line:search-sweep}), and the sets returned by peeling have density above $\sqrt{ab/8}>\sigma$ (\Cref{lem:step-outputs}(b)).

For sets $P_i(x_t),P_m(z)$ returned by the sweep at step $t$, the prefix $P_i(x_t)$ lies in $\supp(x_t)$, so $i\le(4k)^p$ by (b). By \Cref{lem:prefix}(a), $m\le|\supp(z)|$, so $z_{(m)}\ge\delta$ by pruning, and since $m$ is the smallest index with $I_z(m)>K_{t+1}^{1/p}m^{1-1/p}$,
\begin{equation*}
(m-1)\delta\le I_z(m-1)\le K_{t+1}^{1/p}(m-1)^{1-1/p}
\enspace,
\end{equation*}
so, by \Cref{eq:support-bound} and $(4k)^p\ge16$,
\begin{equation*}
m\le\frac{K_{t+1}}{\delta^p}+1
\le\frac34(4k)^p+1\le(4k)^p
\enspace.
\end{equation*}

For sets $U_*,Y$ returned by peeling at step $t$, the set $U_*$ lies in $\supp(x_t)$, so $|U_*|\le K_t/\delta^p$ by (b). By \Cref{lem:step-outputs}(b), $|Y|\le a|U_*|/(2b)+1$ for even $t$, and $|Y|\le b|U_*|/(2a)+1$ for odd $t$. In both cases, the coefficient of $|U_*|$ is $2K_{t+1}/(3K_t)$ by \Cref{eq:parameters}. Hence, by \Cref{eq:support-bound},
\begin{equation*}
|Y|\le\frac{2K_{t+1}}{3K_t}|U_*|+1
\le\frac{2K_{t+1}}{3\delta^p}+1
\le\frac{(4k)^p}2+1\le(4k)^p
\enspace.\qedhere
\end{equation*}
\end{enumerate}
\end{proof}

\subsection{Query complexity, running time, and locality}
\label{app:computation}

\paragraph{Graph access.}
\label{sec:graph-access}
The algorithm accesses $G$ through neighbor queries. Each adjacency list has a fixed order. When the algorithm makes a query $(u,i)$ for an integer $i\ge1$, it receives the $i$th neighbor of $u$, or $\bot$ if $i>\deg(u)$, where $\deg(u)$ is the degree of $u$. To read the list to its end, the algorithm makes $\deg(u)+1$ queries. We assume that the side of a vertex can be read from its identifier. We count neighbor queries and elementary operations at unit cost, including reads, copies, and comparisons of vertex identifiers. For $\Theta(\log|V|)$-bit identifiers, reading an identifier takes $\Theta(\log|V|)$ bit operations.

\begin{theorem}[Local computation]
\label{thm:computation}
For $p>2$, $k\ge1$ and every starting vertex $v$, the algorithm $\Local_p(G,v,k)$ terminates. If $v$ is isolated, it returns $\bot$ after one query. Otherwise, the output has at most $2(4k)^p$ vertices and density at least one, the number of neighbor queries is at most
\begin{equation}
\label{eq:queries}
1+4(4k)^p\min\left\{(\Delta+1)(1+\lfloor\log_2k\rfloor)^2,\ 4k^2\right\}
\enspace,
\end{equation}
and the running time is
\begin{equation}
\label{eq:time}
O_p\!\left(\min\left\{\Delta k^p\log_2^3(2k),\
k^{p+2}\log_2(2k)\right\}\right)
\enspace.
\end{equation}
\end{theorem}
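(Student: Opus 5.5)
Termination, the isolated case, output size and density follow directly from \Cref{lem:call}. If $v$ is isolated, line~\ref{line:local-first} spends one query and returns $\bot$. Otherwise, $\mathcal C$ contains the edge $(\{v\},\{u\})$ of density one. There are at most $(1+\lfloor\log_2k\rfloor)^2$ searches, and each terminates by \Cref{lem:call}(c). Every candidate added in line~\ref{line:local-add} has at most $(4k)^p$ vertices per side by \Cref{lem:call}(d). The returned pair therefore has at most $2(4k)^p$ vertices and density at least one.

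For the query bound \Cref{eq:queries}, I will bound the queries of a single search in two ways and then sum over searches. In step $t$, the search queries only the adjacency lists of $\supp(x_t)$ and reads at most $\lceil\sigma^2\rceil$ entries of each. If no star is returned, it also needs one extra query per vertex to receive $\bot$, so each vertex costs at most $\min\{\deg(u)+1,\lceil\sigma^2\rceil+1\}$ queries. The graph $H$ and the peeling use only these lists, by \Cref{lem:step-outputs}(a), so they cost no further queries. Summing over steps with $\sum_t|\supp(x_t)|\le4(4k)^p$ (from \Cref{lem:call}(b) and the geometric series $\sum_t(3/4)^t=4$), one search makes at most $4(4k)^p\min\{\Delta+1,ab+1\}$ queries. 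For the degree-dependent bound, I multiply by the number of guesses, which gives $4(4k)^p(\Delta+1)(1+\lfloor\log_2k\rfloor)^2$. For the degree-free bound, I sum over guesses instead: $\sum_{a,b}(ab+1)\le(\sum_a a)^2+(\text{number of guesses})\le(2k)^2+\ldots$. Some care is needed so that the total is at most $4k^2$. I would try bounding $ab+1\le2ab$ and $\sum_{a}a<2k$; this gives $8k^2$, which is slightly too much. So I would instead use $\lceil\sigma^2\rceil\le\lceil ab/8\rceil$ from \Cref{lem:call}(d), which gives $\lceil ab/8\rceil+1\le ab$ per vertex when $ab\ge2$, and handle $a=b=1$ separately. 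Adding the initial query gives \Cref{eq:queries}.

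For the running time \Cref{eq:time}, I will account for each step: building $H$ is linear in the number of edges read; $\Peel$ can be implemented in near-linear time in $|F|$ with bucket queues (or a heap, for the identifier tie-breaking); the multiplication and pruning are linear in $|F|$; sorting $z$ and computing $\|z\|_{p,\infty}$ via \Cref{lem:prefix}(a) costs $O(|W|\log|W|)$; and the sweep needs, for a fixed $m$, incremental counts $e(P_i(x_t),P_m(z))$ for all $i$, which takes $O(|F|)$ time after marking $P_m(z)$. Since $|F|\le|\supp(x_t)|\min\{\Delta,ab\}$, summing over steps and guesses as above gives the two terms of \Cref{eq:time}.

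The main obstacle is the bookkeeping. The logarithmic factors are the first issue: sorting contributes a $\log$ of the support size, which is $O_p(\log k)$ since $|W|\le(4k)^p\min\{\Delta,k^2\}$. This gives the $\log_2^3(2k)$ from two logs of guesses plus one of sorting in the $\Delta$-term, and a single log in the $k^{p+2}$-term, where the sum over guesses is geometric. The second issue is ensuring that the $\Delta$-branch does not pick up a hidden $\log\Delta$; I would avoid it by noting that $\log|W|=O(\log k+\log\Delta)$ and, if necessary, replacing comparison sorting of $W$ by selecting only the entries above $\delta$, of which there are at most $(4k)^p$.
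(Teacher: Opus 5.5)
Your proposal is correct and follows essentially the same route as the paper: termination, output size and density from \Cref{lem:call}; queries from $\sum_t|\supp(x_t)|\le4(4k)^p$ and $\sum_{a,b}ab<4k^2$; and time from $|F|<k^2(4k)^p$ in every step without a star, so the sorting logarithm is $O_p(\log_2(2k))$ regardless of $\Delta$ (this already settles your worry about a hidden $\log\Delta$). One simplification the paper uses: a list with $\deg(u)<\lceil\sigma^2\rceil$ is read to its end with $\deg(u)+1\le\lceil\sigma^2\rceil\le ab$ queries, so the per-list cost is already $\min\{\Delta+1,\lceil\sigma^2\rceil\}$, and you need neither the extra $+1$ nor the separate case $a=b=1$.
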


\begin{proof}[Proof of \Cref{thm:computation}]
If $v$ is isolated, the first query is answered with $\bot$, and $\Local_p$ returns $\bot$ as well (line~\ref{line:local-isolated}). Otherwise, every search terminates by \Cref{lem:call}(c). By \Cref{lem:call}(d), the sets returned by the searches have at most $(4k)^p$ vertices per side, so the output has at most $2(4k)^p$ vertices. Its density is at least one because the initial edge is a candidate.

By \Cref{lem:call}(b), for each guess $(a,b)$, the algorithm scans at most $\sum_t|\supp(x_t)|\le4(4k)^p$ adjacency lists in total, with at most $\min\{\Delta+1,\lceil\sigma^2\rceil\}\le\min\{\Delta+1,ab\}$ queries per list by \Cref{lem:call}(d). Including the initial query, the total is at most
\begin{equation*}
1+4(4k)^p\sum_{a,b}\min\{\Delta+1,ab\}
\enspace,
\end{equation*}
where there are $(1+\lfloor\log_2k\rfloor)^2$ guesses and
\begin{equation*}
\sum_{a,b}ab=\left(\sum_{j=0}^{\lfloor\log_2k\rfloor}2^j\right)^2<4k^2
\enspace.
\end{equation*}
Thus the number of queries is at most \Cref{eq:queries}.

For the running time, we consider a step in which no star is returned, with the neighborhood graph $H=(U\mathbin{\dot\cup}W,F)$ of $U\coloneq\supp(x_t)$. Every vertex of $U$ has a neighbor, since $v$ has positive degree and every vertex of $\supp(x_{t+1})$ is joined to $\supp(x_t)$ by a selected edge (line~\ref{line:peel-vector}), so $|U|+|W|\le2|F|$. By \Cref{lem:call}(b), $|U|\le(4k)^p$, and by \Cref{lem:step-outputs}(a) and \Cref{eq:parameters}, $\deg(u)<\sigma^2<ab\le k^2$ for every $u\in U$, so
\begin{equation*}
|F|=\sum_{u\in U}\deg(u)<k^2(4k)^p
\enspace.
\end{equation*}
The time spent in this step sorting the vectors, running $\Peel$ and performing any sweep is $O(|F|\log_2|F|)=O_p(|F|\log_2(2k))$, and the algorithm makes at least $|F|$ queries to build $H$. When a star is returned, the work in that step is linear in the number of queries. Summing over all steps and guesses, we obtain a time bound of $O_p(\log_2(2k))$ per query and hence \Cref{eq:time}.
\end{proof}

\begin{lemma}[Distance from the starting vertex]
\label{lem:distance}
For sets $X\subseteq L$ and $Y\subseteq R$ returned by $\Local_p(G,v,k)$, every vertex of $X\mathbin{\dot\cup}Y$ lies within distance $1+p\log_2(4k)/\log_2(4/3)$ of $v$ in $G$. If $v\in L$ then $d(X\cup\{v\},Y)\ge d(X,Y)/\sqrt2$, and if $v\in R$ then $d(X,Y\cup\{v\})\ge d(X,Y)/\sqrt2$.
\end{lemma}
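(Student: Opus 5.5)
The lemma has two parts: a distance bound and a bound on adding $v$. I will treat them separately.

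\emph{Distance.} The returned pair comes either from the initial candidate $(\{v\},\{u\})$, which lies within distance one of $v$, or from some search with a guess $(a,b)$ that returned at step $t$. The main step is an induction on $t$ showing that every vertex of $\supp(x_t)$ is within distance $t$ of $v$. This holds for $x_0=\one_v$. For the inductive step, every vertex of $\supp(x_{t+1})\subseteq\supp(y_{t+1})$ receives mass only through a selected edge of $F'\subseteq F$, and $F$ consists of edges incident to $\supp(x_t)$. At step $t$, every returned set is contained in $\supp(x_t)$ or in $N_G(\supp(x_t))$:
\begin{itemize}[nolistsep]
\item a star $(\{u\},N_u)$ has $u\in\supp(x_t)$;
\item peeling gives $U_*\subseteq\supp(x_t)$ and $Y\subseteq W$;
\item a sweep gives $P_i(x_t)\subseteq\supp(x_t)$ and $P_m(z)\subseteq\supp(z)\subseteq W$.
\end{itemize}
So all returned vertices lie within distance $t+1$ of $v$. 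For the step count, \Cref{lem:call}(b) gives $1\le|\supp(x_t)|\le(4k)^p(3/4)^t$, which forces $t\le p\log_2(4k)/\log_2(4/3)$. This yields the stated distance bound. For $v\in R$, exchange the sides.

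\emph{Adding $v$.} Let $v\in L$, put $X'\coloneq X\cup\{v\}$, and note $|X'|\le|X|+1\le2|X|$ because $X$ is nonempty. Edges are monotone, so $e(X',Y)\ge e(X,Y)$. Hence
\[
d(X',Y)\ge\frac{e(X,Y)}{\sqrt{2|X||Y|}}=\frac{d(X,Y)}{\sqrt2}.
\]
The case $v\in R$ is symmetric. The only point needing care is that the returned $L$-set is nonempty, which the algorithm guarantees.

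I do not expect a serious obstacle. The one step to check is the inductive claim that the support only moves along edges of $H$, which follows from line~\ref{line:peel-vector} together with $z\le y$.
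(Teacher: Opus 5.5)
Your proposal is correct and follows the paper's proof essentially step for step. The shared steps are:
\begin{itemize}
\item the induction showing that $\supp(x_t)$ lies within distance $t$ of $v$;
\item the case check that every returned set lies in $\supp(x_t)\mathbin{\dot\cup}N_G(\supp(x_t))$;
\item the step bound from $1\le|\supp(x_t)|\le(4k)^p(3/4)^t$ in \Cref{lem:call}(b);
\item the estimate $|X\cup\{v\}|\le2|X|$ together with $e(X\cup\{v\},Y)\ge e(X,Y)$.
\end{itemize}
The only cosmetic difference is how you get the support inclusion: the paper reads it off directly from $y_{t+1}$ being indexed by $N_G(\supp(x_t))$ and $x_{t+1}\le y_{t+1}$, whereas you trace it through the selected edges $F'$. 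Both work.
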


\begin{proof}[Proof of \Cref{lem:distance}]
In every search, $\supp(x_0)=\{v\}$, and every vertex of $\supp(x_{t+1})$ has a neighbor in $\supp(x_t)$, since $y_{t+1}$ is indexed by $N_G(\supp(x_t))$ and $x_{t+1}\le y_{t+1}$. By induction, $\supp(x_t)$ lies within distance $t$ of $v$. At step $t$, the vertices of a returned star, of the sets returned by peeling and of the prefixes returned by the sweep lie in $\supp(x_t)\mathbin{\dot\cup}N_G(\supp(x_t))$ and hence within distance $t+1$ of $v$. The initial edge is within distance one of $v$. Every vector $x_t$ is nonzero, so $1\le|\supp(x_t)|\le(4k)^p(3/4)^t$ by \Cref{lem:call}(b), and thus $t\le p\log_2(4k)/\log_2(4/3)$.

For the last claim, if $v\in L$, then $e(X\cup\{v\},Y)\ge e(X,Y)$ and $|X\cup\{v\}|\le2|X|$, so $d(X\cup\{v\},Y)\ge d(X,Y)/\sqrt2$. The case $v\in R$ is symmetric.
\end{proof}

\subsection{Choosing the exponent}
\label{app:quadratic-output}

\emph{Proof of \Cref{thm:overview}.} For every $p>2$, $\Local_p(G,v,k)$ returns nonempty sets $X\subseteq L$ and $Y\subseteq R$ with
\begin{equation*}
d(X,Y)>\frac{\sqrt3}{16}\,C_p^{-p/2}\,d(S,T),
\qquad C_p=1+\frac1{p-2},
\end{equation*}
as proved in \Cref{sec:algorithm}. For $\epsilon>0$ and $p\coloneq2+\epsilon$, the constant $C_p$ equals $1+1/\epsilon$. By $\ln(1+1/\epsilon)\le1/\epsilon$,
\begin{equation}
\label{eq:exponent-factor}
C_p^{-p/2}
=\left(\frac{\epsilon}{1+\epsilon}\right)^{1+\epsilon/2}
=\frac{\epsilon}{1+\epsilon}\left(1+\frac1\epsilon\right)^{-\epsilon/2}
\ge\frac{\epsilon}{\sqrt{\mathrm e}(1+\epsilon)}
\enspace.
\end{equation}
Hence $d(X,Y)>\frac{\sqrt3}{16\sqrt{\mathrm e}}\frac{\epsilon}{1+\epsilon}\,d(S,T)$, which is (ii) of \Cref{thm:overview}. By \Cref{thm:computation}, the output has $O_\epsilon(k^{2+\epsilon})$ vertices, and the algorithm makes $O_\epsilon(k^{2+\epsilon}\min\{\Delta\log_2^2(2k),k^2\})$ neighbor queries and runs in time $\widetilde O_\epsilon(k^{2+\epsilon}\min\{\Delta,k^2\})$, as claimed in \Cref{thm:overview}.

\begin{corollary}[Quadratic output size]
\label{cor:exponent}
For cohesive sets $S\subseteq L$ and $T\subseteq R$, $k\ge1$ with $|S|,|T|\le k$, a vertex $v\in S\mathbin{\dot\cup}T$, $\epsilon\coloneq1/\lceil\log_2(2k)\rceil$ and $p\coloneq2+\epsilon$, the algorithm $\Local_p(G,v,k)$ returns nonempty sets $X\subseteq L$ and $Y\subseteq R$ with $|X|+|Y|\le128k^2$ and
\begin{equation*}
d(X,Y)>\frac{\sqrt6}{128}\,\frac{d(S,T)}{\log_2(2k)}
\enspace.
\end{equation*}
For every starting vertex, the number of neighbor queries is at most
\begin{equation*}
1+256k^2\min\left\{(\Delta+1)(1+\lfloor\log_2k\rfloor)^2,\ 4k^2\right\}
\enspace.
\end{equation*}
\end{corollary}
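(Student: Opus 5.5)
The plan is to specialize \Cref{thm:computation} and the approximation bound \Cref{eq:approximation} to $p=2+\epsilon$ with $\epsilon=1/\lceil\log_2(2k)\rceil$. Both follow by direct calculation once we control two quantities: the factor $(4k)^\epsilon$ in the size and query bounds, and the constant $C_p^{-p/2}$ in the density bound. No new algorithmic argument is needed; we only check the numerical constants.

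\emph{Output size and queries.} By \Cref{thm:computation}, the output has at most $2(4k)^p=32k^2(4k)^\epsilon$ vertices. We will show $(4k)^\epsilon\le4$. Indeed,
\begin{equation*}
\epsilon\log_2(4k)=\frac{1+\log_2(2k)}{\lceil\log_2(2k)\rceil}\le2,
\end{equation*}
since the denominator is at least $\log_2(2k)$ and at least $1$. This gives $|X|+|Y|\le128k^2$. The same bound turns the prefactor $4(4k)^p$ in \Cref{eq:queries} into at most $256k^2$, which is exactly the claimed query bound for every starting vertex.

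\emph{Density.} By \Cref{eq:approximation}, $d(X,Y)>\frac{\sqrt3}{16}C_p^{-p/2}d(S,T)$, where $C_p=1+1/\epsilon$. We write
\begin{equation*}
C_p^{-p/2}=\frac{\epsilon}{1+\epsilon}\,(1+1/\epsilon)^{-\epsilon/2}.
\end{equation*}
Rather than using the cruder $\sqrt{\mathrm e}$ bound of \Cref{eq:exponent-factor}, we use that $(1+1/\epsilon)^\epsilon$ is increasing in $\epsilon$. Since $\epsilon\le1$, this gives $(1+1/\epsilon)^{-\epsilon/2}\ge1/\sqrt2$. Also $\epsilon/(1+\epsilon)=1/(\lceil\log_2(2k)\rceil+1)$.

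It remains to show $\lceil\log_2(2k)\rceil+1\le2\log_2(2k)$, which we do by cases on the integer $k$:
\begin{itemize}[leftmargin=*,nolistsep]
\item for $k=1$, both sides equal $2$;
\item for $k\ge2$, we have $x\coloneq\log_2(2k)\ge2$, so $\lceil x\rceil+1\le x+2\le2x$.
\end{itemize}
Combining these bounds gives
\begin{equation*}
d(X,Y)>\frac{\sqrt3}{16}\cdot\frac1{\sqrt2}\cdot\frac{d(S,T)}{2\log_2(2k)}=\frac{\sqrt6}{64}\,\frac{d(S,T)}{\log_2(2k)}.
\end{equation*}
This is stronger than the stated bound with constant $\sqrt6/128$.

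The only delicate step is this last inequality between the ceiling and $\log_2(2k)$. It fails for real arguments in $(1,1.5)$, so the argument must use that $k$ is an integer, which excludes values of $\log_2(2k)$ strictly between $1$ and $2$. Everything else is substitution into earlier results.
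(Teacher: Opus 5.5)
Your proof is correct and follows the paper's route. You get $(4k)^\epsilon\le4$ from $\log_2(4k)\le2\lceil\log_2(2k)\rceil$ for the size and query bounds, and you bound the density via the factorization of $C_p^{-p/2}$ in \Cref{eq:exponent-factor} with $(1+1/\epsilon)^{\epsilon}\le2$. The one difference is that you bound $\epsilon/(1+\epsilon)=1/(\lceil\log_2(2k)\rceil+1)\ge1/(2\log_2(2k))$ directly, instead of using $\epsilon/(1+\epsilon)\ge\epsilon/2$ and $\epsilon\ge1/(2\log_2(2k))$ as the paper does, so you obtain the constant $\sqrt6/64$, a factor of two better than the stated $\sqrt6/128$.
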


\begin{proof}[Proof of \Cref{cor:exponent}]
The choice $\epsilon=1/\lceil\log_2(2k)\rceil$ satisfies $1/(2\log_2(2k))\le\epsilon\le1$ and $2^{1/\epsilon}\ge1+1/\epsilon$. By the identity in \Cref{eq:exponent-factor},
\begin{equation*}
C_p^{-p/2}\ge\frac{\epsilon}{2\sqrt2}\ge\frac1{4\sqrt2\log_2(2k)}
\enspace.
\end{equation*}
By this inequality and \Cref{eq:approximation}, $d(X,Y)>\frac{\sqrt6}{128}\,d(S,T)/\log_2(2k)$. Since $\log_2(2k)\ge1$, $\log_2(4k)=\log_2(2k)+1\le2\lceil\log_2(2k)\rceil$, and hence
\begin{equation*}
(4k)^\epsilon
=2^{\log_2(4k)/\lceil\log_2(2k)\rceil}
\le2^2=4
\enspace.
\end{equation*}
Thus $(4k)^p\le64k^2$, and by \Cref{thm:computation} there are at most $2(4k)^p\le128k^2$ output vertices. Substituting $(4k)^p\le64k^2$ into \Cref{eq:queries}, we get the query bound in \Cref{cor:exponent}.
\end{proof}

\section{\texorpdfstring{The logarithmic factor at $p=2$}{The logarithmic factor at p=2} and comparison with Andersen}
\label{app:andersen-prefix}

Andersen's guarantee has two parts: growth of the Euclidean norm inside a dense bipartite subgraph, and extraction of a dense bipartite subgraph from that growth (\Cref{app:andersen-lemmas}). After recalling his iteration and the two lemmas, we describe the neighbor-query implementation of his algorithm in \Cref{app:andersen-queries}. With the weak-$\ell_2$ bound $I_x(i)\le\sqrt i$ in place of the bound on $\|x\|_{p,\infty}$, we incur a logarithmic factor in the prefix-sum argument of \Cref{sec:prefix-sweeps} (\Cref{app:log-factor}), and there is a weighted matrix for which the logarithmic factor in the extraction is necessary (\Cref{app:extraction-example}).

\subsection{Andersen's growth and extraction lemmas}
\label{app:andersen-lemmas}

\paragraph{Andersen's iteration.}
Andersen's algorithm runs the pruned growth process \citep[Definition~5]{Andersen2010} for $t_0\coloneq\lceil\log_2(2\sqrt{k})\rceil$ steps. Starting from $x_0\coloneq\one_v$, with the pruning thresholds $\delta_{t+1}>0$ of that definition and the adjacency matrix $A$ of $G$, it computes
\begin{equation*}
y_{t+1}\coloneq\frac{x_tA}{\|x_tA\|_2},\qquad
x_{t+1}\coloneq\prune(\delta_{t+1},y_{t+1})
\enspace.
\end{equation*}
Each $y_{t+1}$ is a unit vector, so $\|x_t\|_2\le1$.

\begin{definition}[Dyadic level sets {[\citealp[Definition~6]{Andersen2010}]}]
\label{def:dyadic-level-sets}
For a nonnegative vector $x$, its \defemph{dyadic level sets} are
$D_j(x)\coloneq\{u:2^{-j-1}<x(u)\le2^{-j}\}$ for $j\in\mathbb Z$.
\end{definition}

At each step $0\le t<t_0$, Andersen's algorithm tests the nonempty sets
\begin{equation*}
X\coloneq D_i\!\left(\frac{x_t}{\|x_t\|_2}\right),\qquad
Y\coloneq D_j\!\left(\frac{x_tA}{\|x_tA\|_2}\right)
\quad\text{with }|i-j|\le4+\log_2\Delta
\enspace.
\end{equation*}
It returns the tested sets $X,Y$ of maximum density.

\begin{lemma}[Growth inside a dense bipartite subgraph {[\citealp[Lemma~2 and proof of Theorem~1]{Andersen2010}]}]
\label{lem:andersen-growth}
For $S\subseteq L$ and $T\subseteq R$ with $e(S,T)>0$ and $k$ with $|S|,|T|\le k$, there are $S'\subseteq S$ and $T'\subseteq T$ with $e(S',T')\ge e(S,T)/2$ such that every vertex of $S'$ has more than $e(S,T)/(4|S|)$ neighbors in $T'$ and every vertex of $T'$ has more than $e(S,T)/(4|T|)$ neighbors in $S'$.

From every starting vertex $v\in S'\mathbin{\dot\cup}T'$, the vectors of Andersen's algorithm with parameter $k$ satisfy $\|x_tA\|_2\ge(d(S,T)/16)\|x_t\|_2$ for some $t<t_0$. If $d(S,T)<16$, then also $\|x_0A\|_2\ge\|x_0\|_2$.
\end{lemma}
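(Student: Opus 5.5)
The lemma has two parts, and I would prove them separately: a peeling argument for the existence of $S',T'$, then a potential argument on target mass for the growth claim.

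\emph{Existence of $S',T'$.} Set $a_0\coloneq e(S,T)/(4|S|)$ and $b_0\coloneq e(S,T)/(4|T|)$, and peel, mirroring $\Peel$. Starting from $(S,T)$, repeatedly delete a vertex of the current $S$-side with at most $a_0$ neighbors in the current $T$-side, or a vertex of the current $T$-side with at most $b_0$ neighbors in the current $S$-side. Each vertex of $S$ is deleted at most once and takes at most $a_0$ edges with it, and similarly for $T$. So the total number of deleted edges is at most $|S|a_0+|T|b_0=e(S,T)/2$. The surviving sets $S',T'$ therefore span at least $e(S,T)/2>0$ edges, so both are nonempty. Since peeling stopped, every survivor has strictly more than $a_0$ (respectively $b_0$) neighbors on the other side.

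\emph{Growth of the Euclidean norm.} Take $v\in S'$; the case $v\in T'$ is symmetric, with $a_0$ and $b_0$ exchanged. Argue by contradiction: suppose $\|x_tA\|_2<(d/16)\|x_t\|_2$ for all $t<t_0$, where $d\coloneq d(S,T)$. Track a weighted target mass
\[
\mu_t=\sum_{u\in S'}x_t(u)\ \text{for even }t,\qquad \mu_t=\sqrt{b_0/a_0}\sum_{w\in T'}x_t(w)\ \text{for odd }t,
\]
the analogue of the weighting used in \Cref{sec:algorithm}. All coordinates are nonnegative and every vertex of $S'$ has more than $a_0$ neighbors in $T'$, so $\sum_{T'}x_tA\ge a_0\sum_{S'}x_t$, and symmetrically with $b_0$ in the other direction. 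Dividing by $\|x_tA\|_2<(d/16)\|x_t\|_2\le d/16$ and using $a_0b_0=d^2/16$, the weighted mass before pruning grows per step by a factor of at least $\sqrt{a_0b_0}/(d/16)=4$.

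Pruning at threshold $\delta_{t+1}$ removes at most $\delta_{t+1}|T'|$ or $\delta_{t+1}|S'|$ of target mass. I would check, against Andersen's Definition~5, that his thresholds keep this loss at most half of the current weighted mass, in the same way as the choice of $\delta$ in \Cref{eq:parameters}. That leaves a net growth factor of at least $2$ per step, so $\mu_t\ge2^t$ since $\mu_0=1$.

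On the other hand, Cauchy--Schwarz with $\|x_t\|_2\le1$ and $|S'|,|T'|\le k$ gives $\sum_{S'}x_t\le\sqrt k$. At $t\approx t_0=\lceil\log_2(2\sqrt k)\rceil$ this gives $2^{t}>\sqrt k$, up to the constant $\sqrt{b_0/a_0}$ on odd steps, which I would absorb by comparing only at even steps. This is the desired contradiction.

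\emph{The case $d<16$.} Here the extra claim is immediate: $v\in S'\mathbin{\dot\cup}T'$ has a neighbor, so $\|x_0A\|_2=\sqrt{\deg v}\ge1=\|x_0\|_2$.

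The main obstacle is the pruning bookkeeping. Andersen's thresholds are stated for normalized vectors, and I must verify that they are small enough relative to the current target mass, rather than relative to an absolute scale, to guarantee the factor-$2$ net growth. The unequal sides $a_0\ne b_0$ add a further difficulty: the $\sqrt{b_0/a_0}$ weighting is needed there, and the two pruning-loss bounds $\delta|S'|$ and $\sqrt{b_0/a_0}\,\delta|T'|$ must both be controlled by the thresholds.
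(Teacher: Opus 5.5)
Your proposal is correct and matches the argument the paper relies on: the paper's proof cites Andersen's Lemma~2, which is the minimum-degree peeling you reproduce, losing at most $|S|a_0+|T|b_0=e(S,T)/2$ edges, and his proof of Theorem~1 with $\lambda=d(S,T)/16$, which is the target-mass growth against the Cauchy--Schwarz bound with thresholds from his Definition~5; it handles $d(S,T)<16$ by the same observation $\|x_0A\|_2\ge1$. Your odd-step concern is unnecessary, since with weight $\sqrt{b_0/a_0}=\sqrt{|S|/|T|}$ you get $\mu_t\le\sqrt{|S|/|T|}\,\sqrt{|T'|}\le\sqrt{|S|}\le\sqrt k$ and weighted pruning loss at most $\delta_{t+1}\sqrt{|S|/|T|}\,|T'|\le\delta_{t+1}\sqrt{|S||T|}\le\delta_{t+1}k$; both closing worries thus reduce to the single condition $\delta_{t+1}k\le2^{t+1}$ on Andersen's thresholds, and you can compare directly at $t=t_0$.
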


For $d(S,T)\ge16$, we apply Andersen's proof of Theorem~1 with $\lambda=d(S,T)/16$. For the last claim, every $v\in S'\mathbin{\dot\cup}T'$ has a neighbor, so $\|x_0A\|_2\ge1=\|x_0\|_2$.

\begin{lemma}[From growth to dense level sets {[\citealp[Lemma~1]{Andersen2010}, weakened form]}]
\label{lem:andersen-extraction}
For a nonzero vector $x\ge0$ supported on one side and $\lambda\ge1$ with $\|xA\|_2\ge\lambda\|x\|_2$, there are nonempty dyadic level sets $X$ of $x/\|x\|_2$ and $Y$ of $xA/\|xA\|_2$ with $d(X,Y)$ at least
\begin{equation}
\label{eq:andersen-extraction}
\frac{\lambda}{72\,(1+\log_2(\Delta/\lambda))}
\enspace.
\end{equation}
\end{lemma}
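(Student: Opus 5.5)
Write $\bar x\coloneq x/\|x\|_2$ and $\bar y\coloneq xA/\|xA\|_2$, so that $\bar x,\bar y$ are unit vectors with $\bar xA=\lambda'\bar y$ and $\lambda'\coloneq\|xA\|_2/\|x\|_2\ge\lambda$. The plan is to decompose the bilinear form $\bar xA\bar y^{\top}=\lambda'\ge\lambda$ over pairs of dyadic level sets and use pigeonhole over a window of $O(\log_2(\Delta/\lambda))$ relevant scale differences.

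\textbf{Step 1: the decomposition.} Write $\bar x=\sum_i\bar x\one_{D_i(\bar x)}$ and similarly for $\bar y$, and let $X_i\coloneq D_i(\bar x)$, $Y_j\coloneq D_j(\bar y)$. Each entry of $\bar x$ on $X_i$ is at most $2^{-i}$, and each entry of $\bar y$ on $Y_j$ is at most $2^{-j}$. Hence
\begin{equation*}
\lambda\le\sum_{i,j}2^{-i-j}e(X_i,Y_j)=\sum_{i,j}2^{-i-j}d(X_i,Y_j)\sqrt{|X_i||Y_j|}.
\end{equation*}
Set $\alpha_i\coloneq2^{-i}\sqrt{|X_i|}$ and $\beta_j\coloneq2^{-j}\sqrt{|Y_j|}$. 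Since entries on $X_i$ exceed $2^{-i-1}$, we get $\sum_i\alpha_i^2\le4$, and likewise $\sum_j\beta_j^2\le4$.

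\textbf{Step 2: discard pairs with far-apart scales.} Take $h\coloneq j-i$ and use the bound $e(X_i,Y_j)\le\min\{|X_i|,|Y_j|\}\Delta$.

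If $h\ge h_0\coloneq\lceil\log_2(\Delta/\lambda)\rceil+c$, first bound $e(X_i,Y_j)\le\Delta|Y_j|$, so that
\begin{equation*}
2^{-i-j}e(X_i,Y_j)\le\Delta\,2^{-i-j}|Y_j|.
\end{equation*}
This alone does not decay in $h$. The better route is to use $\bar xA=\lambda'\bar y$ directly. For $w\in Y_j$, we have $\lambda'\bar y(w)=\sum_u\bar x(u)A_{uw}$. The contribution from level sets $X_i$ with $i\ge j-\log_2\lambda+\log_2\Delta+c$ is at most $\Delta 2^{-i}\le 2^{-j-c}\lambda$, which is a small fraction of $\lambda'\bar y(w)\ge\lambda 2^{-j-1}$. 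Summing this geometric series over such $i$, these terms contribute at most a quarter of $\lambda'\bar y(w)$.

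In the other direction, consider $i$ much larger than $j$. Here use $(A\bar y^{\top})(u)\le\Delta\|\bar y\|_\infty$ together with the same Cauchy–Schwarz-type accounting. My first try is to bound $\sum_{i-j>h_0}$ by a geometric series, which should be at most $\lambda/4$. This direction is the main obstacle: one of the two directions is controlled by the eigen-relation, but the other needs a bound like $e(X_i,Y_j)\le\Delta|X_i|$ combined with $\sum\alpha_i^2\le4$. The resulting $\Delta 2^{-i-j}|X_i|=\Delta\alpha_i^2 2^{i-j}$ decays only when $i<j$. So I would orient the two cases this way, $i<j$ via column degrees and $i>j$ via row degrees, and check that both yield geometric decay in $|i-j|$ once the factor $\Delta$ is absorbed by the threshold $\log_2(\Delta/\lambda)$.

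\textbf{Step 3: pigeonhole.} At least $\lambda/2$ of the mass remains on pairs with $|i-j|\le h_0$, and there are $2h_0+1=O(1+\log_2(\Delta/\lambda))$ diagonals. So some diagonal $h$ carries mass at least $\lambda/(2(2h_0+1))$.

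On that diagonal, suppose every pair had density at most $D$. Then its mass would be at most
\begin{equation*}
D\sum_i\alpha_i\beta_{i+h}\le D\sqrt{\textstyle\sum\alpha_i^2\sum\beta_j^2}\le 4D,
\end{equation*}
by Cauchy–Schwarz. Therefore some pair $(X_i,Y_{i+h})$ has density at least $\lambda/(8(2h_0+1))$. Tracking the constants, with $c$ chosen small and the $\lambda\ge1$ assumption used to bound $h_0+1\le O(1+\log_2(\Delta/\lambda))$, should give the stated $72$. Nonemptiness of the chosen sets follows because a pair with positive contribution has $e(X_i,Y_j)>0$.
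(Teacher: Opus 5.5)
Your route differs from the paper's. The paper does not reprove this lemma: it quotes Andersen's Lemma~1, which gives density at least $\lambda/(72+16\log_2(\Delta/\lambda))$ on a pair of level sets, and notes that since $\lambda\le\Delta$ this is at least $\lambda/(72(1+\log_2(\Delta/\lambda)))$. You instead reconstruct the argument behind Andersen's lemma: decompose the bilinear form over pairs of dyadic levels, discard diagonals with $|i-j|$ beyond $\log_2(\Delta/\lambda)+O(1)$, pigeonhole over the remaining diagonals, and apply Cauchy--Schwarz with $\sum_i\alpha_i^2,\sum_j\beta_j^2\le4$. That skeleton is right and gives a self-contained proof. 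The cost is that the constant $72$ is now yours to verify, and you have not verified it.

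Step~2 is muddled, but the fix is simple and symmetric, and the eigen-relation is not needed there. For fixed $i$, bound the total $\sum_j e(X_i,Y_j)\le\Delta|X_i|$ at once, using the degrees of $X_i$ (the rows, in $y=xA$). This gives $\sum_{j\ge i+h}2^{-i-j}e(X_i,Y_j)\le2^{-2i-h}\Delta|X_i|=2^{-h}\Delta\alpha_i^2$. Symmetrically, use the degrees of $Y_j$ (the columns) for $i\ge j+h$. Your last sentence of Step~2 has rows and columns swapped.

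Bounding the total at once, rather than each pair followed by a geometric series, saves a factor of $2$, and the constant needs it. With the per-pair bound in both directions and far mass at most $\lambda/2$, you need $2^h\ge32\Delta/\lambda$. For $0<\log_2(\Delta/\lambda)\le1$ this leaves $11$ diagonals and density only $\lambda/88$, which is below the target when $\log_2(\Delta/\lambda)<2/9$.

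With the sharper bound, take $h=\lceil\log_2(\Delta/\lambda)\rceil+4$. The pairs with $|i-j|\ge h$ carry at most $8\Delta2^{-h}\le\lambda/2$. The remaining $2h-1\le2\log_2(\Delta/\lambda)+9$ diagonals therefore carry at least $\lambda/2$. Your Cauchy--Schwarz step then gives density at least $\lambda/(8(2\log_2(\Delta/\lambda)+9))=\lambda/(72+16\log_2(\Delta/\lambda))$, which is exactly Andersen's bound.

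To pass from this to $\lambda/(72(1+\log_2(\Delta/\lambda)))$, you need $\log_2(\Delta/\lambda)\ge0$, i.e.\ $\lambda\le\Delta$. This holds because $\lambda\le\|xA\|_2/\|x\|_2\le\|A\|_2\le\Delta$, and you should state it. The hypothesis $\lambda\ge1$ that you invoke plays no role in this bound. In Andersen's lemma it only converts the window $|i-j|\le4+\log_2(\Delta/\lambda)$ into $|i-j|\le4+\log_2\Delta$.
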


In Andersen's Lemma~1, the density is at least $\lambda/(72+16\log_2(\Delta/\lambda))$ for some level-set indices $i,j$ satisfying $|i-j|\le4+\log_2\Delta$. Since $\lambda\le\Delta$, \Cref{eq:andersen-extraction} is weaker, and we state it without the restriction on the level-set indices.

\subsection{Neighbor-query implementation of Andersen's algorithm}
\label{app:andersen-queries}

As in Andersen's proof of Theorem~2 \citep{Andersen2010}, the iteration in \Cref{app:andersen-lemmas} has $|\supp(x_t)|=O(k^2 4^{-t})$. At step $t$, we read the adjacency list of each vertex $u\in\supp(x_t)$ to its end, using $\deg(u)+1\le\Delta+1$ neighbor queries. Using the cached lists, we compute $x_tA$ and $e(X,Y)$ for all level sets $X$ of $x_t/\|x_t\|_2$ and $Y$ of $x_tA/\|x_tA\|_2$ without further queries. The total number of neighbor queries is therefore at most
\begin{equation*}
(\Delta+1)\sum_{t=0}^{t_0-1}|\supp(x_t)|
=O(\Delta k^2)
\enspace,
\end{equation*}
since $\Delta\ge1$ for a target of positive density. If $\Delta$ is unknown, we omit the restriction $|i-j|\le4+\log_2\Delta$ and consider all level sets $X,Y$ with $e(X,Y)>0$. No additional queries are needed, and the returned density is no smaller.

\subsection{Logarithmic accumulation for \texorpdfstring{weak-$\ell_2$}{weak-l2}}
\label{app:log-factor}

Andersen's vectors are normalized in the Euclidean norm. For $x\ge0$ supported on $L$ with $\|x\|_2\le1$, by Cauchy--Schwarz, $I_x(i)\le\sqrt i$ for every $i$, that is, $\|x\|_{2,\infty}\le1$ in the notation of \Cref{def:prefix}. We bound the prefix sums of $y\coloneq xA$ as in \Cref{sec:prefix-sweeps}, for a prefix $P_m(y)$ with $d(P_i(x),P_m(y))\le\sigma$ for every prefix $P_i(x)$ of $x$.

We also assume that every vertex in $\supp(x)$ has degree below $\sigma^2$, as in $\Search_p$ after the adjacency lists of $\supp(x)$ have been read and no star has been returned (\Cref{lem:step-outputs}(a)). By the density assumption, and since the vertices of $P_i(x)$ have degree below $\sigma^2$ and those of $P_m(y)$ degree at most $\Delta$, $e(P_i(x),P_m(y))\le g(i)$ for $g(i)\coloneq\min\{\sigma^2i,\,\sigma\sqrt{im},\,\Delta m\}$.
We write $r_0\coloneq m/\sigma^2$ and $r_1\coloneq\Delta^2m/\sigma^2$. Since $y\ne0$, $\Delta\ge1$ and $r_1/r_0=\Delta^2\ge1$. The degree bound $\sigma^2i$ is the smallest of the three for $i\le r_0$, the density bound $\sigma\sqrt{im}$ for $r_0\le i\le r_1$, and the maximum-degree bound $\Delta m$ for $i\ge r_1$. By summation by parts, as in the proof of \Cref{prop:interpolation},
\begin{equation*}
I_y(m)\le\sum_{i=1}^{\ell}x_{(i)}\bigl(g(i)-g(i-1)\bigr)
\enspace.
\end{equation*}
We split this sum at $r_0$ and $r_1$ and bound the three parts separately. The terms with $i\le r_0$ have $g(i)-g(i-1)=\sigma^2$, so their total contribution is at most $\sigma^2I_x(r_0)\le\sigma^2\sqrt{r_0}=\sigma\sqrt m$.
The terms with $i\ge r_1+1$ are zero, since $g(i)=g(i-1)=\Delta m$. If $i-1<r_0<i$ or $i-1<r_1<i$, the corresponding term is at most $x_{(i)}g(i)\le\sigma\sqrt m$. There are at most two such indices. For the remaining indices, $r_0\le i-1<i\le r_1$, so $g(i)-g(i-1)=\sigma\sqrt m(\sqrt i-\sqrt{i-1})\le\sigma\sqrt m/\sqrt i$. Since $x_{(i)}\le I_x(i)/i\le1/\sqrt i$, the terms in a dyadic range $2^j\le i<2^{j+1}$ therefore add up to
\begin{equation*}
\sum_{\substack{2^j\le i<2^{j+1}\\r_0\le i-1<i\le r_1}}x_{(i)}\bigl(g(i)-g(i-1)\bigr)
=O(\sigma\sqrt m)\sum_{2^j\le i<2^{j+1}}\frac1i
=O(\sigma\sqrt m)
\enspace.
\end{equation*}
Since $r_1/r_0=\Delta^2$, at most $2\log_2\Delta+O(1)$ dyadic ranges contain these remaining indices. Altogether,
\begin{equation}
\label{eq:ov-three}
I_y(m)=O\bigl(\sigma\sqrt m\,(1+\log_2\Delta)\bigr)
\enspace.
\end{equation}

Using only $I_x(i)\le\sqrt i$, we cannot improve the logarithmic dependence in the preceding upper bound on $\sum_ix_{(i)}(g(i)-g(i-1))$. The vector $x$ with $x_{(i)}\coloneq\sqrt i-\sqrt{i-1}$ for $i\le\ell$, where $\ell>r_1$ and $m\ge\sigma^2$, satisfies $I_x(i)=\sqrt i$, and the contribution to this sum from each dyadic range between $r_0$ and $r_1$ is $\Theta(\sigma\sqrt m)$. This vector has $\|x\|_2>1$. For $\|x\|_2\le1$, however, the sum is $O(\sigma\sqrt m\sqrt{1+\log_2\Delta})$: by Cauchy--Schwarz, it is at most $\|x\|_2\bigl(\sum_i(g(i)-g(i-1))^2\bigr)^{1/2}$, and $(g(i)-g(i-1))^2=\bigl(\int_{i-1}^ig'(s)\,ds\bigr)^2\le\int_{i-1}^ig'(s)^2\,ds$, so that
\begin{equation*}
\sum_i\bigl(g(i)-g(i-1)\bigr)^2
\le\int_0^\infty g'(s)^2\,ds
=\sigma^4r_0+\frac{\sigma^2m}4\ln\frac{r_1}{r_0}
=\sigma^2m\Bigl(1+\frac12\ln\Delta\Bigr)
\enspace.
\end{equation*}
To bound the prefix sums of $yA$ by Cauchy--Schwarz, we need an upper bound on $\|y\|_2$. We have bounded only $\|y\|_{2,\infty}$. For $p>2$, by contrast, we can bound $\|y\|_{p,\infty}$ in terms of $\|x\|_{p,\infty}$ (\Cref{prop:interpolation}).

The logarithm in Andersen's Lemma~1 also comes from a sum over logarithmically many scales: in its proof, he sums over all level-set indices $i,j$ with $|i-j|\le4+\log_2(\Delta/\lambda)$.

For $p>2$, in contrast, $x_{(i)}\le\|x\|_{p,\infty}i^{-1/p}$. The sum over the dyadic range $2^j\le i<2^{j+1}$ is then $O(\sigma\sqrt m\,\|x\|_{p,\infty}2^{j(1/2-1/p)})$. The sum over all ranges is therefore $O_p(1)$ times the largest of these bounds, and there is no logarithmic factor in \Cref{prop:interpolation}.

\subsection{A tight example for norm-growth extraction}
\label{app:extraction-example}

For nonnegative matrices, the logarithmic factor in \Cref{eq:andersen-extraction} is necessary \citep[see][]{BiluLinial2006}. For a nonnegative matrix $B$ and nonempty sets $S,T$ of row and column indices, we write
\begin{equation*}
d_B(S,T)\coloneq\frac{\sum_{u\in S}\sum_{w\in T}B_{uw}}{\sqrt{|S||T|}}
\enspace.
\end{equation*}
In \Cref{con:multiscale}, we define a symmetric matrix $\bar A$ and a vector $\xi_L$ with $\|\xi_L\bar A\|_2=\lambda\|\xi_L\|_2$ such that $d_{\bar A}(S,T)=O(\lambda/\log_2\Delta)$ for all nonempty $S\subseteq L$ and $T\subseteq R$, where $\Delta$ is the largest row sum.

\begin{construction}
\label{con:multiscale}
For an integer $\nu\ge1$ and $\zeta\coloneq2^\nu$, the sets $L$ and $R$ are partitioned into blocks $L_i,R_i$ of size $4^i$ for $i=0,\ldots,\nu-1$. The symmetric matrix $\bar A$, indexed by $L\mathbin{\dot\cup}R$, has entry $\zeta2^{-i-j}$ between $L_i$ and $R_j$ and zero entries on $L\times L$ and $R\times R$. The vector $\xi_L$ equals $2^{-i}$ on $L_i$ and zero on $R$, the vector $\xi_R$ is defined symmetrically, and $\lambda\coloneq\|\xi_L\bar A\|_2/\|\xi_L\|_2$.
\end{construction}

\begin{proposition}
\label{cor:tightness}
The matrix of \Cref{con:multiscale} has spectral radius $\nu \zeta$ and largest row sum $\Delta=\zeta(\zeta-1)$. Moreover,
\begin{equation*}
\lambda=\nu \zeta,
\qquad
\zeta\le\max_{\substack{\varnothing\ne S\subseteq L\\\varnothing\ne T\subseteq R}}d_{\bar A}(S,T)\le9\zeta
\enspace.
\end{equation*}
The nonempty dyadic level sets of $\xi_L/\|\xi_L\|_2$ and $\xi_L\bar A/\|\xi_L\bar A\|_2$ are $L_0,\ldots,L_{\nu-1}$ and $R_0,\ldots,R_{\nu-1}$, respectively. Every block density $d_{\bar A}(L_i,R_j)$ equals $\zeta$. The ratio of $\lambda$ to the maximum of $d_{\bar A}(S,T)$ over nonempty $S\subseteq L$ and $T\subseteq R$ is $\Theta(\nu)=\Theta(\log_2\Delta)$. The right-hand side of \Cref{eq:andersen-extraction}, evaluated at $\lambda=\nu\zeta$, is $\Theta(\zeta)$.
\end{proposition}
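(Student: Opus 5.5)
The key observation is that the off-diagonal block of $\bar A$ has rank one. Write $\xi\colon L\to\mathbb R$ for the restriction of $\xi_L$, so $\xi(u)=2^{-i}$ for $u\in L_i$, and identify $\xi_R$ with the same profile on $R$. The entry between $u\in L_i$ and $w\in R_j$ is $\zeta2^{-i}2^{-j}$. Hence
\begin{equation*}
\bar A=\zeta\bigl(\xi_L^{\top}\xi_R+\xi_R^{\top}\xi_L\bigr),
\qquad
\|\xi_L\|_2^2=\|\xi_R\|_2^2=\sum_{i=0}^{\nu-1}4^i4^{-i}=\nu
\enspace.
\end{equation*}
Since $\xi_L\perp\xi_R$ (disjoint supports), $\xi_L\bar A=\zeta\nu\,\xi_R$ and $\xi_R\bar A=\zeta\nu\,\xi_L$. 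The nonzero eigenvalues are therefore $\pm\nu\zeta$, with eigenvectors $\xi_L\pm\xi_R$, and all other eigenvalues are zero. This gives spectral radius $\nu\zeta$ and $\lambda=\|\zeta\nu\xi_R\|_2/\|\xi_L\|_2=\nu\zeta$.

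The row sum at $u\in L_i$ is $\sum_j4^j\zeta2^{-i-j}=\zeta2^{-i}(2^\nu-1)$. This is maximized at $i=0$, which gives $\Delta=\zeta(\zeta-1)$. The block densities are $\zeta2^{-i-j}4^i4^j/\sqrt{4^i4^j}=\zeta$.

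For the level sets, $\xi_L/\|\xi_L\|_2$ takes the value $2^{-i}/\sqrt\nu$ on $L_i$. These values are distinct and consecutive ones differ by exactly a factor of two, so each lies in a different half-open dyadic interval $(2^{-j-1},2^{-j}]$. Hence the nonempty $D_j$ are exactly the blocks $L_i$. Since $\xi_L\bar A/\|\xi_L\bar A\|_2=\xi_R/\sqrt\nu$, the same argument gives the blocks $R_i$.

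For the maximum density, the lower bound $\zeta$ comes from any block pair. For the upper bound I use the rank-one form: for nonempty $S\subseteq L$ and $T\subseteq R$,
\begin{equation*}
d_{\bar A}(S,T)=\zeta\,\frac{\xi(S)}{\sqrt{|S|}}\cdot\frac{\xi(T)}{\sqrt{|T|}}
\enspace,
\end{equation*}
so it suffices to show $\xi(S)\le\frac{4}{\sqrt3}\sqrt{|S|}$, since then $d_{\bar A}(S,T)\le\frac{16}{3}\zeta\le9\zeta$. It is enough to bound prefix sums $I_\xi(s)$. Consider the $s$-th largest coordinate and suppose it lies in block $L_m$. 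Then $s\le(4^{m+1}-1)/3<4^{m+1}/3$, so $\xi_{(s)}=2^{-m}<2/\sqrt{3s}$. Summing gives $I_\xi(s)\le\sum_{i\le s}2/\sqrt{3i}\le\frac{4}{\sqrt3}\sqrt s$. This step is the only non-mechanical one, and the main thing to get right is the direction of the inequality between $s$ and $4^m$.

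The remaining claims follow by substitution. The ratio of $\lambda$ to the maximum density lies between $\nu/9$ and $\nu$, so it is $\Theta(\nu)$. Also $\nu=\log_2\zeta$ and $\log_2\Delta=\log_2\zeta+\log_2(\zeta-1)=\Theta(\nu)$ for $\nu\ge2$; for $\nu=1$ we have $\Delta=0$ and the statement is degenerate, so I would assume $\nu\ge2$ or note this case separately. Finally, with $\lambda=\nu\zeta$ we have $\Delta/\lambda=(\zeta-1)/\nu$, so $1+\log_2(\Delta/\lambda)=1+\log_2(\zeta-1)-\log_2\nu=\Theta(\nu)$. The right-hand side of \Cref{eq:andersen-extraction} is therefore $\nu\zeta/\Theta(\nu)=\Theta(\zeta)$.
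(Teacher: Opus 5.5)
Your proof is correct. It follows the paper's overall structure: the same row-sum, block-density and level-set computations, and the same factorization $\sum_{u\in S}\sum_{w\in T}\bar A_{uw}=\zeta\,\xi(S)\,\xi(T)$. Two sub-steps, however, are argued differently.

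\textbf{Spectral radius.} The paper only checks that $\xi_L+\xi_R$ is a positive eigenvector with eigenvalue $\nu\zeta$. It then invokes the Perron--Frobenius fact that such an eigenvalue of a nonnegative matrix is its spectral radius. Your rank-two decomposition $\bar A=\zeta(\xi_L^\top\xi_R+\xi_R^\top\xi_L)$ instead gives the whole spectrum $\{\pm\nu\zeta,0\}$ by elementary linear algebra. This is slightly longer but self-contained.

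\textbf{Upper bound on the maximum density.} The paper splits $\sum_i s_i2^{-i}$ at the block index $q$ with $\sqrt{|S|}\le2^q<2\sqrt{|S|}$. It uses $s_i\le4^i$ for $i<q$ and $2^{-i}\le2^{-q}$ for $i\ge q$, which gives $3\sqrt{|S|}$ and the constant $9$. You instead bound the sorted coordinates pointwise, $\xi_{(s)}<2/\sqrt{3s}$, and sum. In the paper's notation this is the bound $\|\xi_L\|_{2,\infty}\le4/\sqrt3$, and it gives the sharper constant $16/3$.

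Your formulation also makes the mechanism of the example transparent. Since $\max_{S}\xi(S)/\sqrt{|S|}=\|\xi_L\|_{2,\infty}$, the maximum density is exactly $\zeta\|\xi_L\|_{2,\infty}^2$, while $\lambda=\zeta\|\xi_L\|_2^2$. So the $\Theta(\nu)$ ratio is precisely the gap between the Euclidean and weak-$\ell_2$ norms of $\xi_L$. This is the same phenomenon the paper discusses for $p=2$.

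\textbf{A small slip.} For $\nu=1$ you have $\zeta=2$ and $\Delta=\zeta(\zeta-1)=2$, not $0$. In fact $\log_2\Delta=\nu+\log_2(2^\nu-1)\in[\nu,2\nu)$ for every $\nu\ge1$, so no case needs to be set aside.
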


\begin{proof}
For $u\in R_j$ with $0\le j<\nu$, the contribution from each left block $L_i$ to $(\xi_L\bar A)(u)$ is $4^i\cdot2^{-i}\cdot\zeta2^{-i-j}=\zeta2^{-j}$, so
\begin{equation*}
(\xi_L\bar A)(u)=\nu\zeta2^{-j}=\nu \zeta\,\xi_R(u)
\enspace.
\end{equation*}
By symmetry, $\xi_R\bar A=\nu\zeta\,\xi_L$, so $\xi_L+\xi_R$ is a positive eigenvector with eigenvalue $\nu \zeta$, which is therefore the spectral radius. Since $\|\xi_L\|_2=\|\xi_R\|_2=\sqrt \nu$, $\lambda=\nu\zeta$. By direct summation, the largest row sum is $\Delta=\zeta(\zeta-1)$, and $d_{\bar A}(L_i,R_j)=\zeta$ for all $i,j$. Consecutive block values differ by a factor of two, so the nonempty dyadic level sets are exactly the blocks.

For nonempty $S\subseteq L$ and $T\subseteq R$, we write $s_i\coloneq|S\cap L_i|$ and take the integer $q\ge0$ with $\sqrt{|S|}\le2^q<2\sqrt{|S|}$. Since $s_i\le4^i$,
\begin{equation*}
\sum_{i=0}^{\nu-1}s_i2^{-i}\le\sum_{i=0}^{q-1}2^i+2^{-q}|S|
\le2^q+|S|2^{-q}\le3\sqrt{|S|}
\enspace.
\end{equation*}
We apply the same bound to $t_j\coloneq|T\cap R_j|$ and obtain
\begin{equation*}
\sum_{u\in S}\sum_{w\in T}\bar A_{uw}
=\zeta\Bigl(\sum_{i=0}^{\nu-1}s_i2^{-i}\Bigr)\Bigl(\sum_{j=0}^{\nu-1}t_j2^{-j}\Bigr)
\le9\zeta\sqrt{|S||T|}
\enspace.
\end{equation*}
Finally, since $\Delta=\zeta(\zeta-1)$ and $\lambda=\nu\zeta$, both $\log_2\Delta$ and $1+\log_2(\Delta/\lambda)$ are $\Theta(\nu)$, so the right-hand side of \Cref{eq:andersen-extraction} is $\Theta(\zeta)$.
\end{proof}

\section{Lower bounds}
\label{app:limits}

Since sets $X,Y$ with $d(X,Y)=\Omega(k)$ have $\Omega(k^2)$ edges between them, an output of $o(k^2)$ vertices per side must have many more edges than vertices, as the target does, and the algorithm has to find these edges. The query lower bound is proved by hiding a complete bipartite graph at uniformly chosen adjacency-list positions (\Cref{app:queries}), and the computational lower bounds by reducing Densest $k$-Subgraph to the local problem (\Cref{app:computational}).

\begin{definition}
\label{def:algorithmic-setting}
An algorithm $\mathsf A$ receives a starting vertex $v$, an integer $k\ge1$, and neighbor-query access to a simple bipartite graph $G$, as specified in the graph-access model of \Cref{app:computation}. The algorithm has \defemph{output bound} $\phi$ if, for every input, it returns either $\bot$ or nonempty sets $X\subseteq L$, $Y\subseteq R$ with $|X|,|Y|\le\phi(k)$. Unless we explicitly allow randomness, algorithms are deterministic; for randomized algorithms, all output, query, and running-time bounds hold for every choice of the internal randomness, and success probabilities are over that randomness. Unless we impose a running-time bound, computation between queries is unrestricted.
\end{definition}

\subsection{The query lower bound}
\label{app:queries}

\begin{theorem}[Queries sublinear in $\Delta$]
\label{thm:queries}
For a constant $0<\rho\le1$ and $\phi(k)=o(k^2)$, if a possibly randomized algorithm $\mathsf A$ as in \Cref{def:algorithmic-setting} has output bound $\phi$ and makes at most $q(k,\Delta)$ queries on every input, where $q(k,\Delta)=o(\Delta)$ as $\Delta\to\infty$ for every fixed $k$, then, for every sufficiently large $k$, there is a simple bipartite graph $G$ containing a complete bipartite subgraph with sides $S\subseteq L$ and $T\subseteq R$ of $k$ vertices each, such that from every $v\in S\mathbin{\dot\cup}T$ the output of $\mathsf A$ is $\bot$ or has density below $\rho k$ with probability at least $2/3$.
\end{theorem}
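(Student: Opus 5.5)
We will follow the construction sketched in \Cref{sec:limits}. Fix $k$ large and, for a parameter $D$ to be chosen with $D\to\infty$, consider a base graph $G_0$ consisting of $N$ disjoint stars on each side: $N$ centers in $L$, each with $D$ private leaves in $R$, and $N$ centers in $R$, each with $D$ private leaves in $L$. We take $N\gg k$ and $D$ so large that $q\coloneq q(k,D+k)\le\eta D$ for a small $\eta>0$ to be fixed. The random graph $G$ is obtained by choosing uniformly random sets $S$ of $k$ left centers and $T$ of $k$ right centers, adding all $k^2$ edges between $S$ and $T$, and inserting each new edge into the adjacency lists of its endpoints at uniformly random positions (the lists then have length $D+k$). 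In $G$, the maximum degree is $\Delta=D+k$ and $S,T$ span $K_{k,k}$. In $G_0$, every bipartite subgraph is a disjoint union of stars, so any $X,Y$ with $|X|,|Y|\le\phi(k)$ satisfy $e(X,Y)\le|X|+|Y|$, hence $d(X,Y)\le\sqrt{2\phi(k)}<\rho k$ for large $k$.

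\textbf{Coupling.} For a fixed start $v\in S\dot\cup T$ and fixed internal randomness, we run $\mathsf A$ on $G$ and on the graph $G_v$ that differs from $G$ only by deleting the $K_{k,k}$ edges not incident to $v$; in $G_v$ the vertex $v$ keeps its $k$ edges into the other target side, but all other centers look like plain stars (with list positions padded identically). The two runs agree until $\mathsf A$ either queries a list position holding a hidden target edge at a center $\ne v$ (via a target center reached through $v$'s neighbors), or guesses the identity of another target center. We will then show that the output in $G$ has density below $\rho k$ unless one of these events occurs. On $G_v$, the edges among the output are star edges plus at most $\min\{\phi(k),k\}$ edges at $v$, so $e(X,Y)\le|X|+|Y|+k$, which again gives density $<\rho k$ when $\phi=o(k^2)$. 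On $G$, the output differs only if it contains two target centers other than $v$; since the run is identical to one on $G_v$, where the remaining target centers are a uniformly random subset of the otherwise symmetric centers, the chance that the at most $2\phi(k)$ output vertices contain such a center is at most $2\phi(k)k/N$.

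\textbf{Bounding divergence.} Each query to a list of a center $c\ne v$ at position $i$ reveals a hidden edge with probability at most about $k/(D+k)$, given the history, by the uniform random placement. Reaching a target center other than $v$ through non-hidden structure requires hitting one of the $k$ hidden positions in $v$'s list (from $v$ itself, revealing one neighbor $w\in T$; this is allowed, but from $w$ the other hidden edges sit at random positions among $D+k$). We will handle this with a union bound: the total divergence probability is at most $q\cdot k/(D+k)+2\phi(k)k/N$, which is below $1/3$ for each fixed $v$ once $D\ge 6kq$ (possible since $q=o(\Delta)$) and $N$ is large. Averaging over $S,T$, we obtain that $\mathbb E_{S,T}\sum_{v\in S\dot\cup T}\Pr[\text{success from }v]<\tfrac13\cdot2k$. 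Since we need one graph failing from \emph{every} target vertex, we must strengthen this: we will make the per-vertex failure bound $1-o(1)$, so the expected success probability from each $v$ is $o(1)$. Summing over the $2k$ vertices gives $o(k)$, and then Markov's inequality over the graph randomness yields a fixed $G$ on which every $v$ succeeds with probability $\le1/3$. This requires per-vertex success $\le 1/(6k)$, i.e.\ $D\gtrsim k^2q$ and $N\gtrsim k^2\phi(k)$, both achievable.

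\textbf{Main obstacle.} The delicate step is making the coupling rigorous when the algorithm adaptively explores from $v$ into $T$ (e.g.\ finding $w\in T$ via $v$'s list, then querying $w$'s list): $w$'s hidden edges to $S\setminus\{v\}$ then become discoverable. We will first try conditioning on the revealed list contents and arguing that each fresh position in a list of length $D+k$ containing $\le k$ unrevealed hidden entries is hidden with probability $\le k/(D+k-q)$. Then the adaptive union bound gives the $qk/(D-q)$ term. Simplicity of $G$ and the fact that $\Delta$ is the maximum degree only need the padding to be set up consistently.
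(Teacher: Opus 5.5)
Your outline has the right ingredients: disjoint stars, a planted $K_{k,k}$ at random list positions, runs coupled through shared internal randomness, and a union bound over queries plus the event that the output contains another target center. The final averaging is also right: you require conditional success at most $1/(6k)$, so the expected sum over the $2k$ target vertices is at most $1/3$. The gap is in the construction the coupling rests on, which is exactly what you leave open with ``padded identically.''

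Because you \emph{insert} the $k^2$ edges, each target center's list grows from $D$ to $D+k$ and its star edges shift. Suppose the centers other than $v$ are literally plain stars in $G_v$, with the $T$-centers carrying only their edge to $v$. Then a query $(c,i)$ at a target center $c\ne v$ disagrees between $G$ and $G_v$ as soon as a deleted entry precedes position $i$. For large $i$ this has probability close to one, not $k/(D+k)$.

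Suppose instead you fill the hidden positions in $G_v$ with fresh leaves to keep positions aligned. Then the centers of $S\setminus\{v\}$ (for $v\in S$) have degree $D+k$ in $G_v$, while non-target centers have degree $D$, so one query at position $D+1$ tells them apart. The run on $G_v$ is then not independent of $S\setminus\{v\}$. Your bound $2\phi(k)k/N$ relies on these centers being ``a uniformly random subset of the otherwise symmetric centers,'' so it is unjustified. The padding leaves also exist in $G_v$ but not in $G$. Your union bound $qk/(D+k)+2\phi(k)k/N$ has no terms for queries that land on other target centers or on these vertices.

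The missing idea is to plant the target without changing any degree. Start from stars with $\Delta$ leaves, and for each target center \emph{replace} the entries at $k$ uniformly random positions of its list by the opposite target centers, isolating the displaced leaves. Every center then has degree $\Delta$ in both graphs, and you can compare $G$ directly with the fixed graph $G_0$:
\begin{itemize}
\item for fixed internal randomness, the run on $G_0$ is independent of $S,T$ and of the positions;
\item each of its at most $q(k,\Delta)$ queries concerns at most one original star edge (a position in a center's list, or the first entry of a leaf's list);
\item conditional on $v\in S\mathbin{\dot\cup}T$, that edge is replaced with probability at most $k/\Delta$.
\end{itemize}
This covers the positions in $v$'s own list too, so the intermediate graph $G_v$ is unnecessary. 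Since the query sequence on $G_0$ does not depend on the construction, the adaptivity you flag as the main obstacle disappears: no conditioning on revealed list contents is needed. The union bound gives $kq(k,\Delta)/\Delta+2\phi(k)k/(N-1)$ per conditioned target vertex. Choosing $\Delta$ and then $N$ with $2k^2q(k,\Delta)/\Delta+4\phi(k)k^2/(N-1)<1/3$ finishes the argument as you intended.
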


\begin{proof}[Proof of \Cref{thm:queries}]
For the extension to randomized algorithms, we use a transcript-coupling approach inspired by \citet[Section~5.2]{PengWangYangYang2026}.
Since $\phi(k)=o(k^2)$, we can fix any sufficiently large $k$ with $\sqrt{2\phi(k)}<\rho k$. Choose integers $N>k$ and then $\Delta\ge k$ so that
\begin{equation*}
\frac{2k^2q(k,\Delta)}{\Delta}+\frac{4\phi(k)k^2}{N-1}<\frac13,
\end{equation*}
which is possible because $q(k,\Delta)=o(\Delta)$ for fixed $k$.
Let $G_0$ consist of $N$ disjoint stars centered on each side, each with $\Delta$ leaves, with fixed vertex identifiers and adjacency-list orders. Choose $S$ and $T$ independently and uniformly among the $k$-subsets of the left and right centers, respectively. For each chosen center, independently select $k$ of the $\Delta$ positions in its adjacency list uniformly, and replace the entries at those positions by the $k$ target centers on the opposite side in identifier order. Isolate the corresponding leaves, keeping all vertex identifiers. The resulting graph $G$ is simple and bipartite, every center still has degree $\Delta$, and $S,T$ span a complete bipartite subgraph of density $k$.

Fix a center $v$ and condition on $v\in S\mathbin{\dot\cup}T$. Fix the internal randomness of $\mathsf A$ independently of the construction, and compare its runs from $v$ on $G_0$ and $G$ using this same randomness. The run on $G_0$ is independent of the chosen target and positions. Each query in that run concerns at most one original star edge, either through its position in a center's list or through a query at its leaf. Conditional on $v\in S\mathbin{\dot\cup}T$, each such edge is replaced with probability at most $k/\Delta$. Until a differing answer is received, the two computations coincide. By a union bound over the at most $q(k,\Delta)$ queries on $G_0$, the probability of any differing answer is at most $kq(k,\Delta)/\Delta$.

The output on $G_0$ contains at most $2\phi(k)$ centers. Each center other than $v$ belongs to $S\mathbin{\dot\cup}T$ with conditional probability at most $k/(N-1)$, so the probability that the output contains another target center is at most $2\phi(k)k/(N-1)$. If neither event occurs, the output on $G$ is the same as on $G_0$ and contains no edge of $S\times T$. If it is a pair $(X,Y)$ rather than $\bot$, all its edges belong to $G_0$. Writing $e_0\coloneq e_{G_0}(X,Y)$, we have $e_0\le|X|+|Y|\le2\phi(k)$ since $G_0$ is a forest, and $e_0\le|X||Y|$. Thus
\begin{equation*}
d_G(X,Y)\le\frac{e_0}{\sqrt{|X||Y|}}\le\sqrt{e_0}\le\sqrt{2\phi(k)}<\rho k.
\end{equation*}

Let $p_v(G)$ be the probability, over the internal randomness of $\mathsf A$, of returning sets of density at least $\rho k$ from $v$ on $G$. Averaging also over that randomness in the preceding bounds, we obtain
\begin{equation*}
\mathbb E\!\left[p_v(G)\mid v\in S\mathbin{\dot\cup}T\right]
\le\frac{kq(k,\Delta)}{\Delta}+\frac{2\phi(k)k}{N-1},
\end{equation*}
where the expectation is over the random construction. There are $2N$ centers, each belonging to $S\mathbin{\dot\cup}T$ with probability $k/N$, so
\begin{equation*}
\mathbb E\!\left[\sum_{v\in S\mathbin{\dot\cup}T}p_v(G)\right]
\le\frac{2k^2q(k,\Delta)}{\Delta}+\frac{4\phi(k)k^2}{N-1}<\frac13.
\end{equation*}
Hence there is one fixed graph and target for which this sum is below $1/3$. Every summand is nonnegative, so $p_v(G)<1/3$ for every $v\in S\mathbin{\dot\cup}T$, as required.
\end{proof}

\begin{corollary}[Approximation and output size with restricted queries]
\label{cor:query-tradeoff}
For $\alpha(k)\ge1$ with $\alpha(k)=o(k)$ and $\phi(k)=o(k^2/\alpha(k)^2)$, if a possibly randomized algorithm as in \Cref{def:algorithmic-setting} has output bound $\phi$ and makes at most $q(k,\Delta)$ queries, where $q(k,\Delta)=o(\Delta)$ as $\Delta\to\infty$ for every fixed $k$, then, for every sufficiently large $k$, there is a simple bipartite graph containing a complete bipartite subgraph with sides $S,T$ of $k$ vertices each, such that, from every $v\in S\mathbin{\dot\cup}T$, the algorithm returns $\bot$ or sets of density below $d(S,T)/\alpha(k)$ with probability at least $2/3$.
\end{corollary}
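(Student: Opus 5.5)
The plan is to derive \Cref{cor:query-tradeoff} from the proof of \Cref{thm:queries}, not from its statement, since that theorem fixes the density threshold at $\rho k$ for a constant $\rho$. We reuse the same random construction. Take $G_0$ to be $N$ disjoint stars on each side with $\Delta$ leaves each. Plant a complete bipartite subgraph between uniformly random $k$-subsets $S,T$ of the centers, placing its edges at uniformly random adjacency-list positions. Then $d(S,T)=k$, so the threshold becomes $d(S,T)/\alpha(k)=k/\alpha(k)$. The only change is to replace $\rho k$ by $k/\alpha(k)$ throughout.

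First, fix $k$ large enough that $\sqrt{2\phi(k)}<k/\alpha(k)$. This is possible because $\phi(k)=o(k^2/\alpha(k)^2)$ gives $2\phi(k)\alpha(k)^2/k^2\to0$. We need $\alpha(k)\ge1$ only so that the threshold is meaningful; the hypothesis $\alpha(k)=o(k)$ merely guarantees $k^2/\alpha(k)^2\to\infty$, so that an output bound $\phi$ satisfying the hypothesis can exist at all. With $k$ fixed, choose $N>k$ and then $\Delta\ge k$ exactly as in the proof of \Cref{thm:queries}, so that $2k^2q(k,\Delta)/\Delta+4\phi(k)k^2/(N-1)<1/3$. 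This uses $q(k,\Delta)=o(\Delta)$ for the fixed $k$.

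Next, repeat the transcript coupling verbatim, with the algorithm's internal randomness shared between the runs on $G_0$ and $G$. Conditioned on $v\in S\mathbin{\dot\cup}T$, a differing answer occurs with probability at most $kq(k,\Delta)/\Delta$. The output on $G_0$ contains another target center with probability at most $2\phi(k)k/(N-1)$. Outside these events, the output is $\bot$ or a pair from the forest $G_0$, and its density is at most $\sqrt{e_0}\le\sqrt{2\phi(k)}<k/\alpha(k)$.

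Finally, let $p_v(G)$ be the success probability from $v$, where success now means density at least $k/\alpha(k)$. Averaging over the construction and summing over the target vertices gives $\mathbb E[\sum_{v\in S\mathbin{\dot\cup}T}p_v(G)]<1/3$. Fixing one realization with this sum below $1/3$ yields $p_v(G)<1/3$ for every target vertex, as required.

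The only step that needs care is the first one: the density threshold now depends on $k$, and we must check that $\sqrt{2\phi(k)}$ falls below it for all large $k$. The growth hypothesis on $\phi$ makes this immediate. Everything else in the argument is independent of the threshold, so no further obstacle is expected.
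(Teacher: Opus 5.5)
Your proposal is correct and matches the paper's proof: like you, the paper reruns the construction and transcript coupling from the proof of \Cref{thm:queries} with $\rho=1/\alpha(k)$. It checks that $\sqrt{2\phi(k)}<k/\alpha(k)$ for all large $k$ because $2\alpha(k)^2\phi(k)/k^2\to0$, then chooses $N$ and $\Delta$ as before for each such fixed $k$.
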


\begin{proof}
We apply the construction in the proof of \Cref{thm:queries} with $\rho=1/\alpha(k)$, since its condition $\sqrt{2\phi(k)}<\rho k$ holds for all sufficiently large $k$:
\begin{equation*}
\frac{\alpha(k)}{k}\sqrt{2\phi(k)}
=\sqrt{\frac{2\alpha(k)^2\phi(k)}{k^2}}
<1
\enspace,
\end{equation*}
as the expression tends to zero. For each such $k$, choose $N$ and $\Delta$ as in that proof. From every target vertex of the resulting fixed graph, the algorithm returns $\bot$ or sets of density below $k/\alpha(k)=d(S,T)/\alpha(k)$ with probability at least $2/3$.
\end{proof}

\subsection{Computational lower bounds}
\label{app:computational}

For both lower bounds under SPCH and ETH, we use the same reduction. In \Cref{app:reduction}, we construct a dense subgraph on $k$ vertices of a graph $J$ from the outputs of a local algorithm on its bipartite double cover, and in \Cref{app:spch,app:eth}, we prove the two lower bounds. Polynomial time means time polynomial in the number of vertices of the input graph, uniformly over all $k$ up to that number.

\subsubsection{Reduction from Densest \texorpdfstring{$k$}{k}-Subgraph}
\label{app:reduction}

For a simple graph $J$ on $N$ vertices, $e_J(U)$ is the number of edges induced by $U\subseteq V(J)$, and the \defemph{bipartite double cover} $G$ of $J$ has two copies $L,R$ of $V(J)$ and edges $u_Lw_R,w_Lu_R$ for every $uw\in E(J)$. The graph $G$ has $2N$ vertices, and each vertex of $G$ has the same degree as the corresponding vertex of $J$, which is below $N$. For $k\ge2$, the two copies $S\subseteq L$ and $T\subseteq R$ of a $k$-clique in $J$ are cohesive with $d(S,T)=k-1$, since every vertex has $k-1$ neighbors in the opposite set.

\begin{lemma}[Extracting a dense $k$-vertex subgraph]
\label{lem:extraction}
For a simple graph $J$ on at least $k\ge2$ vertices and its bipartite double cover $G$, $0<\rho\le1$, $B\ge1$, and nonempty sets $X\subseteq L$ and $Y\subseteq R$ with $|X|,|Y|\le Bk$ and $d(X,Y)\ge\rho(k-1)$, a set $U$ of $k$ vertices of $J$ can be found in polynomial time with
\begin{equation*}
e_J(U)\ge\frac{\rho^2}{16B}\binom k2
\enspace.
\end{equation*}
\end{lemma}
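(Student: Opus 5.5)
The plan is to move the dense pair $(X,Y)$ from the double cover back to $J$, balance and shrink it to at most about $2k$ vertices, and then pass to exactly $k$ vertices by a derandomized uniform subsample.

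\emph{Step 1: transfer to $J$.} Write $X',Y'\subseteq V(J)$ for the vertices of $J$ whose copies lie in $X$ and $Y$. Every edge $x_Ly_R$ of $G$ comes from an edge $xy\in E(J)$. Each edge of $J$ gives rise to at most two edges of $G$. Hence for $Z\coloneq X'\cup Y'$ we get $e_J(Z)\ge e(X,Y)/2$, and the same factor $1/2$ applies to any $X_0\subseteq X$, $Y_0\subseteq Y$. From $e(X,Y)\le|X||Y|$ and $e(X,Y)\ge\rho(k-1)\sqrt{|X||Y|}$ we also get $\sqrt{|X||Y|}\ge\rho(k-1)$, so $e(X,Y)\ge\rho^2(k-1)^2$.

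\emph{Step 2: balance and shrink.} The danger is an unbalanced pair, for example $|X|$ tiny and $|Y|\approx Bk$. Then $e(X,Y)$ can be small compared with $|X|+|Y|$, and a uniform $k$-subset of $Z$ loses too much. Assume by symmetry $|X|\le|Y|$, and set $m\coloneq\min\{|Y|,\max\{|X|,k\}\}$. I would replace $Y$ by the $m$ vertices of $Y$ with the most neighbours in $X$. Averaging shows this keeps at least an $m/|Y|$ fraction of the edges. I would then check, in two cases, that the new pair $(X,Y_0)$ has at most $2k\sqrt B$-ish vertices in total, up to constants, and that
\begin{equation*}
e(X,Y_0)\gtrsim \rho^2 k\,\frac{|X|+|Y_0|}{B}.
\end{equation*}
\textbf{Case $|X|\ge k$.} Both sides have size between $|X|$ and $Bk$, and $e(X,Y_0)\ge e(X,Y)\,|X|/|Y|\ge\rho(k-1)\,|X|^{3/2}/|Y|^{1/2}$. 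This is at least $\rho(k-1)|X|/\sqrt B$, up to the size bookkeeping.
\textbf{Case $|X|<k$.} Then $|Y_0|\le k$, and $e(X,Y_0)\ge e(X,Y)\,k/|Y|\ge\rho(k-1)\sqrt{|X|}\,k/\sqrt{|Y|}$. Combining this with $\sqrt{|X||Y|}\ge\rho(k-1)$ should give $e(X,Y_0)\gtrsim\rho^2k^2/B$.

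\emph{Step 3: exactly $k$ vertices.} Let $Z_0\coloneq X'\cup Y_0'$, so $e_J(Z_0)\ge e(X,Y_0)/2$. If $|Z_0|\le k$, I pad $Z_0$ with arbitrary vertices, which is possible since $|V(J)|\ge k$. Otherwise a uniform $k$-subset of $Z_0$ keeps each edge with probability $\frac{k(k-1)}{|Z_0|(|Z_0|-1)}$. The expected edge count is then at least $e_J(Z_0)\,k(k-1)/|Z_0|^2$. By Step 2 this is
\begin{equation*}
\gtrsim\frac{\rho^2k}{B}\cdot\frac{k(k-1)}{|Z_0|}\gtrsim\frac{\rho^2}{B}\binom k2.
\end{equation*}
I would derandomize in polynomial time by conditional expectations, or more simply by repeatedly deleting a vertex of minimum degree in the current set. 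Removing a minimum-degree vertex from an $n$-vertex set keeps at least an $(n-2)/n$ fraction of its edges, and the product of these fractions telescopes to exactly the uniform-subsample ratio.

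\emph{Main obstacle.} I expect Step 2 to be the hardest part: getting the balancing right so that both cases land at the constant $\rho^2/(16B)$. The point to verify is that the factor $\sqrt{|X||Y|}$ in the density, combined with the size cap $Bk$, loses only $1/B$ and not $1/B^2$. If the $m$-vertex truncation proves too lossy in one regime, the first alternative I would try is to use the better of two candidates: the truncated pair, and the trivial padded set built around the $k$ highest-degree vertices of $Y$ into $X$.
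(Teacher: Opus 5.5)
You have the right architecture: truncate the larger side, map to $J$, then shrink to $k$ vertices by min-degree deletion. That deletion step is correct and deterministic, and its retained fraction does telescope to $\frac{k(k-1)}{|Z_0|(|Z_0|-1)}$. But the quantitative plan in Step~2, which you flagged as the obstacle, does not deliver $1/B$, so there is a genuine gap.

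\textbf{The size claim fails.} For $|X|\ge k$ you have $m=|X|$. So for $|X|=|Y|\approx Bk$, the set $Z_0$ can have about $2Bk$ vertices, not $O(k\sqrt B)$.

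\textbf{The edge target is too weak.} The target $e(X,Y_0)\gtrsim\rho^2k(|X|+|Y_0|)/B$ is linear in $|Z_0|$, while the deletion step costs a factor $k(k-1)/|Z_0|^2$. Your displayed chain therefore gives only $\rho^2k^2(k-1)/(B|Z_0|)$. The final step $\gtrsim\frac{\rho^2}{B}\binom k2$ needs $|Z_0|=O(k)$, which fails here. In the worst case you get $\rho^2\binom k2/B^2$, and even with your claimed size only $B^{-3/2}$.

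\textbf{The source of the loss.} Your Case~A estimate $e(X,Y_0)\ge\rho(k-1)|X|/\sqrt B$ applies $|X|\ge k$ and $|Y|\le Bk$ separately. This throws away the factor $\sqrt{|X|/k}$ needed to pay for the subsampling.

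\textbf{The fix.} Keep the bound quadratic in the size, i.e., track the fraction of pairs that are edges, using $\sqrt{|X||Y|}\le Bk$:
\begin{equation*}
e(X,Y_0)\ge\frac{|X|}{|Y|}\,\rho(k-1)\sqrt{|X||Y|}=\frac{\rho(k-1)\,|X|^2}{\sqrt{|X||Y|}}\ge\frac{\rho(k-1)}{Bk}\,|X|^2 .
\end{equation*}
With $|Z_0|\le2|X|$, this gives
\begin{equation*}
e_J(U)\ge\frac{e(X,Y_0)}{2}\cdot\frac{k(k-1)}{4|X|^2}\ge\frac{\rho(k-1)^2}{8B}\ge\frac{\rho^2}{16B}\binom k2 .
\end{equation*}
Your case $|X|<k$ is fine as it stands. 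There $|Z_0|<2k$ and $e(X,Y_0)\ge\rho^2(k-1)^2/B$. (When $|Y|\le k$ the truncation factor is $1$, not $k/|Y|$, but the bound still holds.) This gives $\rho^2(k-1)^3/(8Bk)\ge\frac{\rho^2}{16B}\binom k2$ for $k\ge2$.

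\textbf{Comparison with the paper.} The paper needs neither the balancing step nor the non-bipartite deletion. It truncates both sides in $G$ to $h=\lfloor k/2\rfloor$ vertices: first the $h$ vertices of $X$ with the most neighbors in $Y$, then the $h$ vertices of $Y$ with the most neighbors in $X'$. At most $k$ vertices remain, and averaging keeps a $\frac{h^2}{\max\{|X|,h\}\max\{|Y|,h\}}$ fraction of the edges. This is the bipartite analogue of your min-degree deletion. Its two cases (one side of size at most $h$, or both larger) correspond roughly to your two cases. The second is handled by the same quadratic bound $e(X,Y)/(|X||Y|)\ge\rho(k-1)/(Bk)$ that your Case~A was missing.
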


\begin{proof}
For $h\coloneq\lfloor k/2\rfloor$, we take $X'$ to be the $\min\{|X|,h\}$ vertices of $X$ with the most neighbors in $Y$, and $Y'$ the $\min\{|Y|,h\}$ vertices of $Y$ with the most neighbors in $X'$.

Since $|X'|+|Y'|\le2h\le k$, we form $U$ from the corresponding vertices of $J$, adding arbitrary vertices until $|U|=k$. Each edge of $J$ has at most two copies in the double cover, so $e_J(U)\ge e(X',Y')/2$.

The vertices of $X'$ have average degree into $Y$ at least $e(X,Y)/|X|$, so $e(X',Y)\ge(|X'|/|X|)e(X,Y)$, and similarly $e(X',Y')\ge(|Y'|/|Y|)e(X',Y)$. Therefore,
\begin{equation*}
e(X',Y')\ge\frac{|X'|}{|X|}\frac{|Y'|}{|Y|}e(X,Y)
=\frac{h^2e(X,Y)}{\max\{|X|,h\}\max\{|Y|,h\}}
\enspace.
\end{equation*}

By the density assumption, $e(X,Y)\ge\rho(k-1)\sqrt{|X||Y|}$ and, since $d(X,Y)\le\sqrt{e(X,Y)}$, also $e(X,Y)\ge\rho^2(k-1)^2$. If one side has at most $h$ vertices, one factor in the denominator of the averaging bound is $h$ and the other is at most $Bk$. We therefore keep at least an $h/(Bk)$ fraction of the edges, so $e(X',Y')\ge h\rho^2(k-1)^2/(Bk)$.
Otherwise, $|X|,|Y|>h$, so the denominator in the averaging bound is $|X||Y|$. Since $\sqrt{|X||Y|}\le Bk$,
\begin{equation*}
e(X',Y')\ge\frac{h^2\rho(k-1)}{\sqrt{|X||Y|}}
\ge\frac{h^2\rho(k-1)}{Bk}
\ge\frac{h\rho^2(k-1)^2}{2Bk}
\enspace,
\end{equation*}
using $h\ge(k-1)/2$ and $\rho\le1$. In either case,
\begin{equation*}
e_J(U)\ge\frac{h\rho^2(k-1)^2}{4Bk}
\ge\frac{\rho^2}{16B}\binom k2
\enspace,
\end{equation*}
using $h\ge(k-1)/2$ and $k-1\ge k/2$.
\end{proof}

\begin{lemma}
\label{lem:reduction}
For a simple graph $J$ on $N\ge k\ge2$ vertices and its bipartite double cover $G$, $0<\rho\le1$, $B\ge1$, and an algorithm $\mathsf A$ that returns $\bot$ or nonempty sets $X\subseteq L$ and $Y\subseteq R$ with $|X|,|Y|\le Bk$ from every $v\in V(G)$, we can find, with $2N$ calls to $\mathsf A$ and polynomial additional time, a set $U$ of $k$ vertices of $J$ such that, if $\mathsf A(G,v,k)$ returns sets $X,Y$ with $d(X,Y)\ge\rho(k-1)$ for some $v$, then
\begin{equation*}
e_J(U)\ge\frac{\rho^2}{16B}\binom k2
\enspace.
\end{equation*}
\end{lemma}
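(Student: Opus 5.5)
We run $\mathsf A(G,v,k)$ once from each of the $2N$ vertices $v\in V(G)$. Every call that does not return $\bot$ produces nonempty sets $X_v\subseteq L$ and $Y_v\subseteq R$ with $|X_v|,|Y_v|\le Bk$. For each such pair we apply the polynomial-time procedure of \Cref{lem:extraction}, obtaining a $k$-vertex set $U_v\subseteq V(J)$. We then output the $U_v$ maximizing $e_J(U_v)$, or an arbitrary $k$-subset of $V(J)$ if every call returned $\bot$. This is possible since $N\ge k$.

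\emph{Running time.} Evaluating $e_J(U_v)$ takes polynomial time. The extraction in \Cref{lem:extraction} only sorts vertices of $X_v$ by their degree into $Y_v$, and vertices of $Y_v$ by their degree into the chosen $X'$. This is polynomial in $N$, because $|X_v|,|Y_v|\le\min\{Bk,N\}$: both are subsets of a side of $G$, which has $N$ vertices. Comparing $2N$ candidates adds only polynomial overhead, so the total extra work beyond the $2N$ calls is polynomial.

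\emph{Correctness.} Suppose some $v$ yields sets with $d(X_v,Y_v)\ge\rho(k-1)$. Then the hypotheses of \Cref{lem:extraction} hold for $X_v,Y_v$ with the given $\rho$ and $B$. Hence $e_J(U_v)\ge\frac{\rho^2}{16B}\binom k2$, and the maximizing $U$ satisfies at least this bound.

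The only subtle point is that we do not know which $v$ succeeds, nor whether $d(X_v,Y_v)\ge\rho(k-1)$ holds for it. Taking the maximum of $e_J$ over all candidates sidesteps this, so no verification of the density condition is needed. Everything else is bookkeeping.
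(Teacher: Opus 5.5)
Your proposal is correct and matches the paper's proof: run $\mathsf A$ from all $2N$ vertices, apply the extraction of \Cref{lem:extraction} to every returned pair, keep the candidate with the most induced edges (or any $k$ vertices if every call returns $\bot$), and invoke \Cref{lem:extraction} for the successful starting vertex. The only step you leave implicit is that building the double cover $G$ takes polynomial time, which is trivial.
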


\begin{proof}
We run $\mathsf A$ from every vertex of $G$, apply the construction in \Cref{lem:extraction} to each output $X,Y$, and keep a set inducing the most edges. If every call returns $\bot$, we take any $k$ vertices. If $\mathsf A$ returns sets $X,Y$ with $d(X,Y)\ge\rho(k-1)$, then by \Cref{lem:extraction} the corresponding set of $k$ vertices satisfies the claimed bound, as does the set we keep. We construct the double cover and select the sets of $k$ vertices in polynomial time.
\end{proof}

\subsubsection{The limitation under SPCH}
\label{app:spch}

\begin{definition}[Strongish Planted Clique Hypothesis {[\citealp[Hypothesis~1]{ManurangsiRubinsteinSchramm2021}]}]
\label{def:spch}
The \defemph{Strongish Planted Clique Hypothesis} (SPCH) is the assumption that, for some $0<\theta<1/2$, no randomized algorithm running in time $n^{o(\log_2n)}$ distinguishes the random graph $\mathcal G(n,\frac12)$ on $n$ vertices, in which each edge is present independently with probability $1/2$, from the same graph $\mathcal G(n,\frac12,\lceil n^\theta\rceil)$ with a clique added on $\lceil n^\theta\rceil$ uniformly chosen vertices, with success probability at least $2/3$ under each distribution. We call these the \textsc{no} and \textsc{yes} distributions, respectively. The probability is over the input graph and the algorithm's randomness.
\end{definition}

We use the following hardness theorem for Densest $k$-Subgraph, with the promise that the input contains a $k$-clique.

\begin{theorem}[{[\citealp[Theorem~2]{ManurangsiRubinsteinSchramm2021}]}]
\label{thm:mrs}
Assuming SPCH, for every $\beta$ with $\beta(k)=o(k)$ and every computable $f$, no algorithm, given an $N$-vertex graph $J$ and $k$ such that $J$ contains a $k$-clique, returns in time $f(k)N^{O(1)}$ a set $U$ of $k$ vertices with
\begin{equation*}
e_J(U)\ge\frac1{\beta(k)}\binom k2
\enspace.
\end{equation*}
\end{theorem}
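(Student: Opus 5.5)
This statement is quoted from \citet[Theorem~2]{ManurangsiRubinsteinSchramm2021}, and we would prove it as they do: by contradiction, turning a fast Densest $k$-Subgraph algorithm into a distinguisher that violates SPCH (\Cref{def:spch}). Suppose an algorithm $\mathsf B$ runs in time $f(k)N^{O(1)}$ and, on every $N$-vertex graph containing a $k$-clique, returns $k$ vertices inducing at least $\binom k2/\beta(k)$ edges, where $\beta(k)=o(k)$. We build a distinguisher for $\mathcal G(n,\frac12)$ versus $\mathcal G(n,\frac12,\lceil n^\theta\rceil)$ that runs in time $n^{o(\log_2 n)}$.

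\emph{The lifted graph.} Given the sampled $n$-vertex graph $\Gamma$, fix a block size $\ell=\ell(n)\to\infty$, growing slowly and chosen later. Let $J$ have as vertices the $\ell$-subsets of $V(\Gamma)$ that are cliques in $\Gamma$, and join two of them when their union is a clique of $\Gamma$. Then $N\le n^\ell$. In the \textsc{yes} case, the planted clique $C$ with $|C|=\lceil n^\theta\rceil$ gives a clique of size $k\coloneq\binom{|C|}{\ell}$ in $J$. The distinguisher runs $\mathsf B$ on $(J,k)$, stopping it after the \textsc{yes}-case time bound, and accepts exactly when the returned set induces at least $\binom k2/\beta(k)$ edges of $J$.

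\emph{Running time.} Building $J$ and checking the output take $n^{O(\ell)}$ time. Since $\beta(k)=o(k)$ and $f$ is computable, we choose $\ell$ growing slowly enough that $f(k)\le n^{o(\log_2 n)}$ and $n^{O(\ell)}=n^{o(\log_2 n)}$. Both are possible because $k\le n^{\theta\ell}$ and any fixed computable $f$ is eventually dominated when $\ell$ grows sufficiently slowly, using a standard diagonal choice of $\ell(n)$.

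\emph{Soundness: the main obstacle.} We must show that, with high probability over $\Gamma\sim\mathcal G(n,\frac12)$, no $k$-vertex set of $J$ induces $\binom k2/\beta(k)$ edges, even though $k$ is a large power of $n^\theta$. We would argue as follows.
\begin{enumerate}[label=(\roman*),nolistsep]
\item A dense $k$-set $\mathcal U\subseteq V(J)$ has, by averaging, a member $A_0$ adjacent to a $1/\beta(k)$ fraction of $\mathcal U$.
\item Iterating this greedily yields $r$ members $A_1,\dots,A_r$ whose union is a clique of $\Gamma$ of size about $r\ell$, as long as the surviving fraction $\beta(k)^{-r}$ of $\mathcal U$ remains nonempty.
\item The members of $\mathcal U$ all lie inside the vertex set $\bigcup\mathcal U$, of size at most $k\ell$, so the sets $A_i$ can be chosen with bounded overlap.
\item Hence $\Gamma$ would contain a clique of size $\omega(\log_2 n)$, whereas $\mathcal G(n,\frac12)$ has no clique larger than $2\log_2 n$ with high probability.
\end{enumerate}
The delicate part is the quantitative balance in step (ii): $\beta(k)^{-r}k\ge1$ must permit $r\ell\gg\log_2 n$. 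Since $\log_2 k\approx\theta\ell\log_2 n$ and $\beta(k)=o(k)$, we get $r\approx\log_2 k/\log_2\beta(k)$. This is where the sublinearity of $\beta$ is used, and we would expect to need a refined count rather than the crude greedy argument: for instance, a union bound over the number of $\ell$-cliques of $\Gamma$ lying inside vertex sets of size $s$, which is at most $\binom{s}{\ell}2^{-\binom{\ell}{2}}$-type densities.

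\emph{Completeness and conclusion.} In the \textsc{yes} case, $\mathsf B$ succeeds within its time bound and the distinguisher accepts. In the \textsc{no} case, by soundness, it rejects with high probability. This distinguishes the two distributions with success probability at least $2/3$ in time $n^{o(\log_2 n)}$, contradicting SPCH.
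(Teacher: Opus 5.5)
The paper does not prove this statement. It imports it as a black box from \citet[Theorem~2]{ManurangsiRubinsteinSchramm2021}. The paper only adds the remark after \Cref{thm:quadratic}, which replaces $\beta$ by a nondecreasing $\tilde\beta=o(k)$, and it uses the fact that completeness and soundness of their Section~4.1 reduction hold with probability $0.9$. Your top-level strategy is the natural one: run the Densest $k$-Subgraph algorithm on a lifted instance and check the edge count of its output. But the specific lifted instance you propose fails both requirements you need.

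\emph{First, the parameter is not controllable.} In your $J$, $k=\binom{\lceil n^\theta\rceil}{\ell}\ge\lceil n^\theta\rceil$ for every $1\le\ell<\lceil n^\theta\rceil$. So no choice of $\ell(n)$ controls $f(k)$: for $f(k)=2^k$ you already get $f(k)\ge2^{n^\theta}$, which is not $n^{o(\log_2 n)}$. A diagonal choice only works if the clique size of the produced instance is a parameter you pick yourself, growing arbitrarily slowly with $n$ and decoupled from the planted clique size. Lifting all $\ell$-subsets of the planted clique ties $k$ to $n^{\theta\ell}$ instead.

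\emph{Second, soundness is false for your $J$, not merely delicate.}
\begin{itemize}
\item In $\mathcal G(n,\frac12)$, two disjoint $\ell$-cliques together form a $2\ell$-clique with conditional probability $2^{-\ell^2}$. So with high probability $J$ has edge density about $2^{-\ell^2}$. Since $|V(J)|\gg k$, averaging over $k$-subsets gives a $k$-set $\mathcal U$ with $e_J(\mathcal U)\gtrsim2^{-\ell^2}\binom k2$.
\item Your time bound forces $\ell=o(\log_2 n)$, while $\log_2 k\ge\ell(\theta\log_2 n-\log_2\ell)$. Hence $2^{\ell^2}=k^{o(1)}$, so your instance has only a $k^{o(1)}$ gap, far weaker than the $o(k)$ the theorem asserts. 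Soundness already fails for $\beta(k)=\sqrt k$.
\item This is also why steps (i)--(iv) cannot be completed. For $\beta(k)=k^{\Omega(1)}$, your count gives $r=O(1)$. That yields only a clique of size $O(\ell)=o(\log_2 n)$ in $\Gamma$, which $\mathcal G(n,\frac12)$ contains anyway. Step (iii) does not control overlaps either, since $\bigcup\mathcal U$ may have up to $k\ell$ vertices.
\item Shrinking $k$ does not rescue the lifting. All $\ell$-subsets of one clique of $\Gamma$ of size $(2-o(1))\log_2 n$ form a clique of $J$ with at least $(2-o(1))\log_2 n$ vertices. So for small $k$ the \textsc{no} instance itself contains $k$-cliques.
\end{itemize}

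The missing idea is a reduction with two properties at once. Its \textsc{yes} instances must contain a $k$-clique with $k$ small (a slowly growing function of $n$ that you choose). Every $k$-vertex subgraph of its \textsc{no} instances must have average degree below $(k-1)/\beta(k)$, like a sparse random graph at scale $k$. Without such a construction, the distinguisher you describe cannot reject in the \textsc{no} case.
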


\begin{theorem}[Limitations on the output size under SPCH]
\label{thm:quadratic}
Assuming SPCH, for a constant $0<\rho\le1$ and $\phi(k)=o(k^2)$, if a possibly randomized algorithm $\mathsf A$ as in \Cref{def:algorithmic-setting} has output bound $\phi$ and runs in time $f(k)|V|^{O(1)}$ on every input $(G,v,k)$, for a computable function $f$ of $k$ alone, then, for some $k\ge2$, there is a simple bipartite graph $G$ with cohesive sets $S\subseteq L$ and $T\subseteq R$ of $k$ vertices each, such that from every $v\in S\mathbin{\dot\cup}T$ the output is $\bot$ or has density below $\rho\,d(S,T)$ with probability greater than $1/3$.
\end{theorem}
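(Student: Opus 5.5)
We argue by contradiction and reduce to \Cref{thm:mrs} via \Cref{lem:reduction}. Suppose that for every $k\ge2$ and every simple bipartite $G$ with cohesive $S,T$ of $k$ vertices each, there is some $v\in S\mathbin{\dot\cup}T$ from which $\mathsf A$ returns sets of density at least $\rho\,d(S,T)$ with probability at least $2/3$. We will turn $\mathsf A$ into an algorithm for Densest $k$-Subgraph with the $k$-clique promise, with approximation ratio $o(k)$, running in time $f(k)N^{O(1)}$. This contradicts SPCH.

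\textbf{The reduction.} Given $J$ on $N\ge k\ge2$ vertices containing a $k$-clique, form the bipartite double cover $G$, which has $|V|=2N$ vertices. The two copies $S,T$ of the clique are cohesive with $d(S,T)=k-1$. Set $B(k)\coloneq\max\{1,\phi(k)/k\}$, so that the outputs satisfy $|X|,|Y|\le Bk$.

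Run the procedure of \Cref{lem:reduction}: call $\mathsf A(G,v,k)$ from all $2N$ vertices and extract $k$-sets via \Cref{lem:extraction}. To boost success, repeat each call $O(1)$ times with fresh randomness and keep the best $U$; since $e_J(U)$ is computable, keeping the maximum is safe. By assumption, some $v\in S\mathbin{\dot\cup}T$ yields density at least $\rho(k-1)$ with probability at least $2/3$ per run. Hence with probability at least $2/3$ (already with a single run) the output satisfies
\[
e_J(U)\ge\frac{\rho^2}{16B(k)}\binom k2 .
\]
The total time is $2N\cdot O(1)\cdot f(k)(2N)^{O(1)}+N^{O(1)}=f(k)N^{O(1)}$.

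\textbf{Approximation ratio.} Define $\beta(k)\coloneq 16B(k)/\rho^2$. Since $\phi(k)=o(k^2)$ and $\rho$ is a constant, $B(k)=o(k)$, so $\beta(k)=o(k)$. We also need $f$ computable, which holds by hypothesis.

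\textbf{Main obstacle: randomization.} \Cref{thm:mrs} is stated for algorithms, and we must check that a randomized algorithm succeeding with probability $2/3$ also contradicts it. The hardness in \citet{ManurangsiRubinsteinSchramm2021} is derived from SPCH, which itself concerns randomized distinguishers. A randomized DkS approximation with constant success probability therefore still yields a randomized distinguisher. Concretely, on the \textsc{yes} side, $e_J(U)\ge\binom k2/\beta(k)$ holds with probability at least $2/3$, while on the \textsc{no} side no such dense $k$-set exists with high probability. We would verify this, or cite the randomized form of their Theorem~2.

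A second point is the quantifier order. The theorem asks for \emph{some} $k$ and a graph on which failure occurs from \emph{every} target vertex with probability greater than $1/3$. The negation is exactly our assumption: for all $k\ge2$ and all such graphs, some target vertex succeeds with probability at least $2/3$. So the contradiction gives the statement, and all that is needed is that this negation holds uniformly over all $k$, which it does.
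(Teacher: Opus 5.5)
Your outline matches the paper's: contradiction, the bipartite double cover, \Cref{lem:reduction} with $B=\max\{1,\phi(k)/k\}$, $\beta(k)=16B(k)/\rho^2=o(k)$, and running time $f(k)N^{O(1)}$. Your negation of the quantifiers is also correct. The gap is the step you call the main obstacle and leave as ``we would verify this''. \Cref{thm:mrs} is not stated for algorithms that succeed only with probability $2/3$, so you cannot apply it directly, and there is no ready-made randomized form to cite. The paper instead opens the planted-clique reduction of \citet[Section~4.1]{ManurangsiRubinsteinSchramm2021}. There, completeness (the constructed $J$ contains a $k$-clique under the \textsc{yes} distribution) and soundness (no dense $k$-set under \textsc{no}) each hold only with probability at least $0.9$. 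If you plug in your routine with a single run, the distinguisher accepts a \textsc{yes} instance with probability only at least $0.9\cdot\frac23=0.6$. That is below the $2/3$ threshold in \Cref{def:spch}, so your parenthetical ``already with a single run'' is exactly where the argument fails.

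The fix is the repetition you mention but never actually use. Since $e_J(U)$ is computable, run the whole procedure of \Cref{lem:reduction} five times independently and keep the densest $U$. This succeeds with probability at least $1-3^{-5}>0.99$ on every $J$ containing a $k$-clique. The distinguisher is then correct with probability at least $0.9\cdot0.99>2/3$ on \textsc{yes} instances. On \textsc{no} instances it is correct with probability at least $0.9$, since there no $k$-set reaches $\binom k2/\beta(k)$ edges. This contradicts SPCH. The point is that the success probability of your Densest $k$-Subgraph routine must exceed $(2/3)/0.9$, not merely $2/3$. With that quantitative step supplied, your proof coincides with the paper's.
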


\begin{proof}[Proof of \Cref{thm:quadratic}]
Suppose, for a contradiction, that for every $k\ge2$ and all cohesive target sets $S\subseteq L$ and $T\subseteq R$ of $k$ vertices each, the algorithm $\mathsf A$ returns sets $X,Y$ with $d(X,Y)\ge\rho\,d(S,T)$ from some $v\in S\mathbin{\dot\cup}T$ with probability at least $2/3$. For $J$ with a $k$-clique, the two copies $S,T$ of the clique in the double cover are cohesive with $d(S,T)=k-1$. We run $\mathsf A$ from every vertex of $G$ and form a set of $k$ vertices of $J$ from each returned pair $X,Y$ as in the proof of \Cref{lem:extraction}. Among these sets, we keep one with the most edges between its vertices. By \Cref{lem:reduction} with $B=1+\phi(k)/k$, we can find a set $U$ of $k$ vertices inducing at least $\binom k2/\beta(k)$ edges with probability at least $2/3$, where
\begin{equation*}
\beta(k)\coloneq\frac{16(k+\phi(k))}{\rho^2k}=o(k)
\enspace.
\end{equation*}
We can compute $e_J(U)$ for each returned set, so we repeat the procedure in \Cref{lem:reduction} independently a constant number of times (concretely, five) and keep a set inducing the most edges. The subgraph induced by this set contains at least $\binom k2/\beta(k)$ edges with probability at least $1-3^{-5}>0.99$. The running time is $f'(k)N^{O(1)}$ for a computable $f'$.

In the reduction from planted clique in \citet[Section~4.1]{ManurangsiRubinsteinSchramm2021}, completeness and soundness each hold with probability at least $0.9$. We use the procedure in \Cref{lem:reduction}, with the five independent repetitions, as the Densest $k$-Subgraph algorithm in that reduction. We then distinguish the \textsc{yes} and \textsc{no} distributions of SPCH (\Cref{def:spch}) with success probability at least $0.9\cdot0.99>2/3$ under each distribution, contradicting SPCH.
\end{proof}

\begin{remark*}
We do not require $\beta$ in \Cref{thm:mrs} to be nondecreasing. In \citet[Section~4.1]{ManurangsiRubinsteinSchramm2021}, the ratio is a function $g(k)\le k$ growing with $k$, so we replace $\beta$ with the nondecreasing $\tilde\beta(k)\coloneq\max_{1\le j\le k}\beta(j)$. For every $\delta>0$, with $C_\delta\ge0$ bounding $\beta(j)$ for the finitely many $j$ with $\beta(j)>\delta j$, $\tilde\beta(k)\le\max\{C_\delta,\delta k\}$, so $\tilde\beta=o(k)$. In particular, $\tilde\beta(k)\le k$ for all sufficiently large $k$, and exhaustive search handles the remaining values of $k$.
\end{remark*}

\begin{corollary}[Approximation and output size under SPCH]
\label{cor:spch-tradeoff}
Assuming SPCH, for $\alpha(k)\ge1$ with $\alpha(k)=o(\sqrt{k})$ and $\phi(k)=o(k^2/\alpha(k)^2)$, no algorithm as in \Cref{def:algorithmic-setting} with output bound $\phi$ and running time $f(k)|V|^{O(1)}$ for a computable $f$ guarantees, for every $k\ge2$ and every cohesive target $S,T$ of $k$ vertices per side, an output of density at least $d(S,T)/\alpha(k)$ from some starting vertex in $S\mathbin{\dot\cup}T$.
\end{corollary}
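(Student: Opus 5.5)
We argue by contradiction, following the proof of \Cref{thm:quadratic} with $\rho\coloneq1/\alpha(k)$ in place of a constant. Suppose an algorithm $\mathsf A$ with output bound $\phi$ and running time $f(k)|V|^{O(1)}$ has the following guarantee: for every $k\ge2$ and every cohesive target with $k$ vertices per side, it returns sets of density at least $d(S,T)/\alpha(k)$ from some starting vertex of the target. We read this guarantee in the same probabilistic sense as \Cref{thm:quadratic}, namely with probability at least $2/3$ if $\mathsf A$ is randomized; the deterministic case is a special case.

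Take a graph $J$ on $N$ vertices that contains a $k$-clique, and let $G$ be its bipartite double cover. The two copies $S,T$ of the clique are cohesive with $d(S,T)=k-1$. So some starting vertex yields sets $X,Y$ with $d(X,Y)\ge(k-1)/\alpha(k)$ and $|X|,|Y|\le\phi(k)\le Bk$, where $B\coloneq1+\phi(k)/k$. We apply \Cref{lem:reduction} with $\rho=1/\alpha(k)\le1$, which needs $2N$ calls to $\mathsf A$ and polynomial additional time. It produces a $k$-vertex set $U$ with
\[
e_J(U)\ge\frac{1}{16\alpha(k)^2B}\binom k2=\frac{1}{\beta(k)}\binom k2,
\qquad
\beta(k)\coloneq\frac{16\alpha(k)^2(k+\phi(k))}{k}.
\]

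The key computation is to show $\beta(k)=o(k)$. Expanding,
\[
\beta(k)=16\alpha(k)^2+\frac{16\alpha(k)^2\phi(k)}{k}.
\]
The first term is $o(k)$ because $\alpha(k)=o(\sqrt k)$. The second term is $o(k)$ because $\alpha(k)^2\phi(k)=o(k^2)$. This is exactly where both hypotheses are used. The total running time is $f'(k)N^{O(1)}$ for a computable $f'$, since $2N$ calls each take time $f(k)(2N)^{O(1)}$.

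For deterministic $\mathsf A$, this already contradicts \Cref{thm:mrs}. For randomized $\mathsf A$, we repeat the procedure five times independently and keep the best set, which we can verify by computing $e_J(U)$. We then plug this into the planted-clique reduction of \citet{ManurangsiRubinsteinSchramm2021}, exactly as in the proof of \Cref{thm:quadratic}, and obtain success probability at least $0.9\cdot0.99>2/3$, contradicting SPCH. Two details carry over from the remark after \Cref{thm:quadratic}. First, $\beta$ need not be monotone, and replacing it by its running maximum keeps it $o(k)$. Second, the finitely many small $k$ are handled by exhaustive search.

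The only step that is not bookkeeping is confirming that $\beta=o(k)$ needs both conditions, and I expect no real obstacle beyond that.
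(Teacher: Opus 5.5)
Your proposal is correct and follows the paper's proof essentially verbatim: \Cref{lem:reduction} with $\rho=1/\alpha(k)$ and $B=1+\phi(k)/k$ gives ratio $\beta(k)=16\alpha(k)^2(k+\phi(k))/k$, and the two hypotheses together give $\beta(k)/k\to0$, contradicting \Cref{thm:mrs}. The randomized case is handled, as in the paper, by rerunning the argument of \Cref{thm:quadratic} with $\rho=1/\alpha(k)$, and your remarks on non-monotone $\beta$ and small $k$ match the remark following \Cref{thm:quadratic}.
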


\begin{proof}
For such an algorithm, by \Cref{lem:reduction} with $\rho=1/\alpha(k)$ and $B=1+\phi(k)/k$, we obtain an approximation ratio $\beta(k)\coloneq16\alpha(k)^2(k+\phi(k))/k$ for Densest $k$-Subgraph, with
\begin{equation*}
\frac{\beta(k)}{k}
=16\left(\frac{\alpha(k)^2}{k}+\frac{\alpha(k)^2\phi(k)}{k^2}\right)
\longrightarrow0
\enspace.
\end{equation*}
The running time is $f'(k)N^{O(1)}$ for a computable $f'$, contradicting \Cref{thm:mrs}.
Following the proof of \Cref{thm:quadratic} with $\rho=1/\alpha(k)$, we also rule out algorithms that use randomness, always respect the bounds on output size and running time, and succeed with probability at least $2/3$.
\end{proof}

\paragraph{Quadratic output on the reduction instances.}
When $J$ contains a $k$-clique, we can achieve a quadratic output bound on its bipartite double cover in time $k^{O(k)}N^{O(1)}$. If a vertex of $J$ has at least $k^2$ neighbors, one of its copies and $k^2$ of that copy's neighbors form a star of density $k$. Otherwise, every neighborhood in $J$ has fewer than $k^2$ vertices, and we find the promised $k$-clique by testing all $(k-1)$-subsets of each neighborhood together with its center. The two copies of the clique have density $k-1$. Thus, on these instances, subquadratic output is essential to the lower bound even with full graph access.

\subsubsection{The limitation under ETH}
\label{app:eth}

\begin{definition}[Exponential Time Hypothesis {[\citealp{ImpagliazzoPaturi2001}]}]
\label{def:eth}
The \defemph{Exponential Time Hypothesis} (ETH) is the assumption that 3-SAT for formulas with $m$ clauses cannot be decided in time $2^{o(m)}$. By the sparsification lemma of \citet{ImpagliazzoPaturiZane2001}, this is equivalent to the formulation in terms of variables.
\end{definition}

We use the following hardness theorem, in which the graph contains a $k$-clique in the \textsc{yes} case.

\begin{theorem}[{[\citealp[Theorem~1]{Manurangsi2017}]}]
\label{thm:manurangsi}
Assuming ETH, there is a constant $c_0>0$ such that no polynomial-time algorithm, given a graph $J$ on $N\ge4$ vertices and an integer $2\le k\le N$, distinguishes the case that $J$ contains a $k$-clique from the case that every $k$-vertex subgraph of $J$ contains at most $\binom k2/\chi(N)$ edges, where
\begin{equation*}
\chi(N)\coloneq N^{1/(\log_2\log_2N)^{c_0}}
\enspace.
\end{equation*}
\end{theorem}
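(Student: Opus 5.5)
This statement is quoted from \citet[Theorem~1]{Manurangsi2017}, so the plan is not to reprove it from scratch. Instead, we recall the shape of Manurangsi's reduction and check that his conclusion matches the formulation above: perfect completeness via a $k$-clique, soundness as a bound on every $k$-vertex subgraph, and gap $\chi(N)$.

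\emph{The reduction.} We start from a 3-SAT formula $\varphi$ with $n$ variables and $m=O(n)$ clauses. By \Cref{def:eth} and sparsification, ETH rules out $2^{o(n)}$-time algorithms for such formulas. Following Manurangsi, we fix a parameter $\ell$ slightly sublinear in $n$. The graph $J$ has one vertex for each pair consisting of an $\ell$-subset of variables and a partial assignment to it that satisfies every clause contained in that subset. Two vertices are adjacent when their partial assignments agree on the shared variables. We set $k\coloneq\binom n\ell$, the number of $\ell$-subsets.

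\emph{Completeness.} A satisfying assignment restricted to every $\ell$-subset gives $k$ pairwise consistent vertices, hence a $k$-clique.

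\emph{Soundness.} This is the step I expect to be the main obstacle. We must show that if $\varphi$ is far from satisfiable, every $k$-vertex set spans at most $\binom k2/\chi(N)$ edges. The argument I would try first is Manurangsi's counting and "birthday repetition" argument. A dense $k$-subgraph must contain many vertices whose assignments are mutually consistent on overlapping subsets. From them one decodes a global assignment by plurality vote. This assignment satisfies a fraction of clauses that contradicts the no-case guarantee, once $\ell$ is chosen so that the density loss is $N^{-1/(\log\log N)^{c_0}}$.

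\emph{Parameters and conclusion.} The graph has $N=2^{\widetilde O(\ell)}\cdot\binom n\ell$ vertices. Choosing $\ell$ so that $\log N=n^{1-o(1)}$ makes a polynomial-time distinguisher run in time $2^{o(n)}$, contradicting ETH. The remaining checks are bookkeeping. The conditions $N\ge4$ and $2\le k\le N$ hold for large $n$, and small instances are solved by exhaustive search. The exponent $1/(\log_2\log_2N)^{c_0}$ comes from expressing Manurangsi's gap in terms of $N$ for a suitable constant $c_0$, since the gap is polynomial in $N$ up to that slowly decaying exponent. With these checks done, we take the theorem as stated in the cited source.
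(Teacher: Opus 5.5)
Relying on Manurangsi's Theorem~1, and checking that its conclusion matches this formulation (perfect completeness via a $k$-clique, soundness as a bound on every $k$-vertex subgraph, gap $N^{1/(\log\log N)^{c}}$), is exactly what the paper does. The paper quotes the theorem without proof, so that part is fine.

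The reduction sketch you add has a genuine hole, precisely at the step you flagged. You take hardness from ETH plus sparsification, which only concerns deciding satisfiability. Your soundness step, however, needs a no-case in which $\varphi$ is \emph{far} from satisfiable. Mere unsatisfiability gives no density gap. Suppose an assignment $\sigma$ violates a single clause $C$. The restrictions $\sigma|_T$ to all $\ell$-subsets $T$ that do not contain every variable of $C$ are vertices of $J$. They are pairwise consistent, so they form a clique on at least $(1-\ell/n)k$ vertices. For $\ell=o(n)$, some $k$-vertex set therefore spans $(1-o(1))\binom k2$ edges, and soundness fails.

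The missing ingredient is a gap-producing step before the construction. A quasi-linear-size PCP (Dinur's) maps a 3-SAT formula of size $n$ to a bounded-degree formula with $n'=n\,\mathrm{polylog}(n)$ variables that is either satisfiable or at most $(1-\varepsilon)$-satisfiable. Even then, ETH rules out only $2^{o(n'/\mathrm{polylog}(n'))}$-time algorithms for this gap problem. A polynomial-time distinguisher gives a contradiction only if $\log N\approx\ell\log(n'/\ell)$ stays below that bound, which forces $\ell=n'/\mathrm{polylog}(n')$. For such $\ell$ the achievable gap is only $N^{1/(\log\log N)^{O(1)}}$, because $\log n'=\Theta(\log\log N)$.

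This polylogarithmic PCP loss is the actual source of the exponent $1/(\log_2\log_2N)^{c_0}$. Your parameter paragraph attributes the exponent to expressing the gap in terms of $N$, which is circular as stated. As Manurangsi notes, under Gap-ETH no PCP is needed, and the same reduction gives ratio $N^{f(N)}$ for every $f=o(1)$. Since the statement is only cited, you can drop the sketch; if you keep it, it must include this step.
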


\begin{theorem}[Limitations on the output size under ETH]
\label{thm:limits}
Assuming ETH, for constants $0<\rho\le1$ and $c\ge1$, if a deterministic polynomial-time algorithm $\mathsf A$ returns $\bot$ or nonempty sets $X\subseteq L$ and $Y\subseteq R$ with $|X|,|Y|\le c\,k\log_2^c(2+k+\Delta)$ on every input $(G,v,k)$, then, for some $k\ge2$, there is a simple bipartite graph $G$ with cohesive sets $S\subseteq L$ and $T\subseteq R$ of $k$ vertices each, such that from every $v\in S\mathbin{\dot\cup}T$ the output is $\bot$ or has density below $\rho\,d(S,T)$. The same conclusion holds for running time polynomial in $k$, $\Delta$ and $\log_2|V|$.
\end{theorem}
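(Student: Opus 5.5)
We argue by contradiction via the reduction of \Cref{lem:reduction}, in the same way as the proof of \Cref{thm:quadratic}, but now using \Cref{thm:manurangsi} in place of \Cref{thm:mrs}. Suppose that for every $k\ge2$ and all cohesive targets $S,T$ with $k$ vertices per side, $\mathsf A$ returns sets of density at least $\rho\,d(S,T)$ from some $v\in S\mathbin{\dot\cup}T$. Given an instance $(J,k)$ of the gap problem of \Cref{thm:manurangsi}, we form the bipartite double cover $G$. It has $2N$ vertices and maximum degree $\Delta<N$. We run $\mathsf A(G,v,k)$ from all $2N$ vertices and apply \Cref{lem:reduction} with
\begin{equation*}
B\coloneq c\log_2^c(2+k+N)\ge c\log_2^c(2+k+\Delta).
\end{equation*}
This yields a set $U$ of $k$ vertices, and we compute $e_J(U)$ exactly in polynomial time.

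\emph{Completeness.} If $J$ contains a $k$-clique, its two copies $S,T$ are cohesive with $d(S,T)=k-1$. By the assumption, some call returns sets of density at least $\rho(k-1)$, so
\begin{equation*}
e_J(U)\ge\frac{\rho^2}{16B}\binom k2.
\end{equation*}

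\emph{Soundness.} In the other case, every $k$-set induces at most $\binom k2/\chi(N)$ edges.

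The key comparison is therefore $16B/\rho^2<\chi(N)$ for all large $N$. Here $16B/\rho^2=O_{c,\rho}\bigl(\log_2^c(2N)\bigr)$, using $k\le N$. On the other hand,
\begin{equation*}
\log_2\chi(N)=\frac{\log_2N}{(\log_2\log_2N)^{c_0}},
\end{equation*}
which is eventually larger than $c\log_2\log_2(2N)+O(1)$. So $\chi(N)$ eventually dominates every fixed polylogarithm.

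The decision procedure is then: accept if and only if $e_J(U)>\binom k2/\chi(N)$. For $N$ above a constant threshold depending on $c,\rho,c_0$, this distinguishes the two cases. The finitely many small $N$ are handled by brute force in constant time.

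Running time: the $2N$ calls each take time polynomial in $|V(G)|=2N$, plus polynomial overhead, so the whole procedure is polynomial in $N$. This contradicts \Cref{thm:manurangsi} under ETH. For the variant where the running time is polynomial in $k$, $\Delta$ and $\log_2|V|$, it suffices that $k,\Delta\le N$, so such a bound is again polynomial in $N$.

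Finally, we negate the supposition: for some $k\ge2$ there are a graph and cohesive $S,T$ with $k$ vertices per side (the clique copies in a double cover) such that from every $v\in S\mathbin{\dot\cup}T$ the output is $\bot$ or has density below $\rho\,d(S,T)$.

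The main obstacle is formal bookkeeping rather than mathematics. The output bound depends on $\Delta$, which we bound by $N$. We must also confirm that \Cref{lem:extraction} needs only $|X|,|Y|\le Bk$ with $B\ge1$; enlarging $c$ to at least $1$ if necessary guarantees $B\ge1$. The remaining point is checking that the asymptotic inequality $\chi(N)>16B/\rho^2$ holds uniformly in $k\le N$.
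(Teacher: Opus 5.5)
Your proposal is correct and follows the paper's proof essentially step for step. It argues by contradiction via \Cref{lem:reduction} on the bipartite double cover with a polylogarithmic $B$; the paper takes $B=c\lceil\log_2(4N)\rceil^c$, which plays the same role as your $c\log_2^c(2+k+N)$ since $k\le N$ and $\Delta<N$. It then uses completeness from the clique copies, soundness from the promise of \Cref{thm:manurangsi}, the fact that $\chi(N)$ eventually dominates every fixed polylogarithm, brute force for bounded $N$, and $k,\Delta<|V|$ for the variant with running time polynomial in $k$, $\Delta$ and $\log_2|V|$. Your remark about enlarging $c$ is unnecessary: $c\ge1$ is given and $\log_2(2+k+N)\ge1$, so $B\ge1$ holds automatically.
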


\begin{proof}
Suppose, for a contradiction, that for all cohesive target sets $S\subseteq L$ and $T\subseteq R$, the algorithm $\mathsf A$ returns sets $X,Y$ with $d(X,Y)\ge\rho\,d(S,T)$ from some $v\in S\mathbin{\dot\cup}T$. For an $N$-vertex graph $J$ and $k\le N$, the double cover has $\Delta<N$, so the output bound is at most $Bk$ for $B\coloneq c\lceil\log_2(4N)\rceil^c$, and we apply the procedure of \Cref{lem:reduction}. If $J$ contains a $k$-clique, its two copies $S,T$ are cohesive with $d(S,T)=k-1$, so $\mathsf A$ returns sets $X,Y$ with $d(X,Y)\ge\rho(k-1)$ from some vertex, and the set $U$ of \Cref{lem:reduction} satisfies
\begin{equation*}
e_J(U)\ge\frac{\rho^2}{16B}\binom k2
>\frac1{\chi(N)}\binom k2
\enspace
\end{equation*}
for all sufficiently large $N$, since $\chi(N)$ grows faster than every power of $\log_2N$. In the other case of the promise, every set of $k$ vertices, including $U$, induces at most $\binom k2/\chi(N)$ edges. By comparing $e_J(U)$ with $\binom k2/\chi(N)$, we can therefore distinguish the two cases of \Cref{thm:manurangsi} in polynomial time, with exhaustive search for bounded $N$. Finally, a running time polynomial in $k$, $\Delta$ and $\log_2|V|$ is polynomial in $|V|$ when $k\le|V|$ and $\Delta<|V|$.
\end{proof}

\paragraph{Why the limitation under ETH is weaker.}

We cannot obtain the quadratic bound of \Cref{thm:quadratic} from the hard instances used under ETH, because they have $N=k^{1+o(1)}$ vertices. \citet[Section~3.2]{Manurangsi2017} constructs graphs whose vertices are partial assignments to $\ell$ of the $n$ variables of a 3-SAT formula. Each graph has $N=2^\ell\binom n\ell$ vertices. If the formula is satisfiable, the restrictions of a satisfying assignment to all $\ell$-element sets of variables form a clique of size $k=\binom n\ell$. With the parameter choices in that proof, $\ell=o(n)$. Hence
\begin{equation*}
\frac{\log_2N}{\log_2k}
=1+\frac{\ell}{\log_2k}
\le1+\frac1{\log_2(n/\ell)}
=1+o(1)
\enspace.
\end{equation*}
For fixed $0<\delta<1$ and large $k$, the entire double cover has fewer than $k^{2-\delta}$ vertices per side, and nonempty subsets $X\subseteq L$ and $Y\subseteq R$ maximizing $d(X,Y)$ can then be found in polynomial time \citep[Section~4, Theorem~3]{Charikar2000}, meeting the output bound $k^{2-\delta}$ with approximation ratio one.

\end{document}